\title{Finite Volume Spaces  ~and~ Sparsification}
\author{Ilan Newman 
$~ ~ ~ ~$ Yuri Rabinovich \\ \\
Department of Computer Science, University of
Haifa, Haifa, Israel. \\
E-mails: {\tt \{ilan,yuri\}@cs.haifa.ac.il}}
\date{\today}
\newcommand{\Cl}[1]{\rm Col(#1)} 
\newcommand{\ignore}[1]{}
\def \qed {\hspace*{0pt} \hfill {\quad \vrule height 1ex width 1ex depth 0pt}
 \medskip}
\def \e {\epsilon}
\def \d {\delta}
\def \P {\mathrm Pr}
\def \C {\mathcal C}
\def \F {\mathcal F}
\def \trk {\rm trk}
\def \srk {\rm rank^{*}}
\def \rank{\rm rank}
\def \span{\rm span}
\def \v {\rm vol}
\def \K {K^{(2)}_n}
\def \Link {\rm link}
\def \Col {\rm Col}
\def \Cp {\rm Cap}
\def \Row {\rm Row}
\def \conv{\rm conv}
\def \av{\rm av}
\def \dist {\rm dist}
\def \supp {\rm supp}
\def \cone {\rm cone}
\newenvironment{proof}{\par\noindent{\bf Proof}\quad}{  $\qed$}
\newcommand{\R}{\ensuremath{\mathbb R}}
\newcommand{\Z}{\ensuremath{\mathbb Z}}
\newcommand{\N}{\ensuremath{\mathbb N}}
\newcommand{\E}{\ensuremath{\mathrm E}}
\renewcommand{\S}{\ensuremath{\mathrm S}}
\newtheorem{theorem}{Theorem}
\newtheorem{definition}{Definition}
\newtheorem{claim}{Claim}[section]
\newtheorem{lemma}{Lemma}[section]
\newtheorem{corollary}{Corollary}[section]
\begin{document}

\begin{titlepage}
\maketitle 
\def\thepage {} 
\thispagestyle{empty}

\begin{abstract}
We introduce and study finite $d$-volumes - the high dimensional generalization of
finite metric spaces. Having developed a suitable combinatorial
machinery, we define $\ell_1$-volumes and show that they contain
Euclidean volumes and hypertree volumes. We show that they can
approximate any $d$-volume with $O(n^d)$ multiplicative distortion. On
the other hand, contrary to Bourgain's theorem for $d=1$, there exists a
$2$-volume that on $n$ vertices that cannot 
be approximated by any $\ell_1$-volume with distortion smaller than
$\tilde{\Omega}(n^{1/5})$. 

We further address the problem of $\ell_1$-dimension
reduction in the context of $\ell_1$ volumes, and show that this
phenomenon does occur, although not to the same striking degree 
as it does for Euclidean metrics and volumes.
In particular, we show that any $\ell_1$ metric on $n$ points can be
$(1+ \epsilon)$-approximated by a sum of $O(n/\epsilon^2)$ cut
metrics, improving over the best previously known bound of 
$O(n \log n)$ due to Schechtman.

In order to deal with dimension reduction, we extend the techniques and ideas 
introduced by Karger and Bencz{\'u}r, and Spielman et al.~in the context of graph 
Sparsification, and develop general methods with a wide range of
applications.
\ignore{
We show that for any $\epsilon >0$ and any $\ell_1$ metric on $n$ points $d$
there is a metric $d'$ that approximates $d$ by at most $1+ \epsilon$ and
$d'$ can be written as a non-negative combination of $O(n/\e^2)$ cuts
metrics. This implies that any $\ell_1$ metric on $n$ points
can be embedded into $O(n/\e^2)$-dimensional $\ell_1$-space, with
distortion bounded by $1+ \e$.

We then show that the methods used have a much wider range of applicability,  
and develop basic tools to analyze their performance in the general settings.

We suggest a novel abstract definition of volumes, $\ell_1$-volumes, 
and cut-volumes, the counterparts of well known concepts in theory of
finite metric spaces. We proceed to establish their basic properties,
and in particular, establish a cut-dimension reduction result. 
}
\end{abstract}

{\bf ACM classes:} G.2.0.; G.2.1; F.2.2

\end{titlepage}

\newpage
\pagenumbering {arabic} 

\def \noproof
 {\hspace*{0pt} \hfill {\quad \vrule height 1ex width 1ex depth 0pt}}
\def \xor {\oplus}

\section{Introduction}
This paper has two intertwined storylines. The first is a systematic attempt
to develop a basic theory of {\em finite volume spaces} -  a natural 
generalization of finite metric spaces. The second is an effort to extend the techniques 
and the ideas introduced in~\cite{karger-benczur},~\cite{spielman-sparsifier}, and to make them 
applicable to a wide class of sparsification problems. The synthesis of the two is reached
when the resulting new sparsification methods are successfully applied in the context of 
finite volume spaces, for the $\ell_1$-dimension reduction problem.

The blossoming of the theory of metric spaces in the last two decades affected
both practical and theoretical algorithms design, and also the local theory of normed 
spaces. It developed its own key notions, posed intriguing new problems, and solved many
of these problems using novel methods. There is a rich interplay between 
the theory of finite metric spaces and graph theory. Often the former provides a 
unique prospective on many basic and important graph theoretic notions such as cuts, 
flows, expansion, minors and spanners. Motivated by all this, we introduce the abstract
finite volume spaces, and attempt to use the notions, ideas and methods of finite metrics 
spaces in this more general setting. In doing this, we hope to contribute not only
to the theory of finite volume/metric spaces, but also to the combinatorial theory of 
simplicial complexes. We also get some new geometrical and algorithmical applications.     

The combinatorial theory of simplicial complexes draws much research activity in the recent years, 
as testified, to name but a few, by the studies of random 2-dimensional 
complexes,~\cite{linial-meshulam}, \cite{meshulam-wallach},\cite{babson},\cite{pippenger-surfaces}, 
and the studies of embeddability of $d$-complexes in $\R^n$,~\cite{MTW}. While developing
the theory of finite volume spaces, we naturally arrive at complex-theoretic notions such as
hypercuts, face expansion, and sparse spanners. We establish some of their structural properties,
and present some new constructions.  

The transfer to higher dimension is 
not without difficulties even on the level of basic definitions. E.g., the hypertrees 
(generalizing trees) have numerous distinct definitions, e.g.~\cite{parekh-forestation, adin, 
duke-erdos, lovasz}. Hypercuts (generalizing cuts) remain without explicit definition. 
(A number of possible definition are discussed in this paper. See also the supports of 
coboundaries of~\cite{linial-meshulam}, and the two-graphs of Seidel~\cite{seidel}.)
In a sense, the theory of finite volume spaces helps to make a coherent choice among possible 
conflicting definitions. To clarify the presentation, we make an effort to consistently use 
the language of combinatorics and linear algebra instead of referring to algebraic 
topology. We also try to keep the presentation self-consistent, including in Section~\ref{sec:2}
some basic facts equipped with short proofs.

Having provided the necessary combinatorial background, we embark on systematic
study of finite volumes. In particular, using  hypercuts, we define $\ell_1$-volumes, 
and show that they can be used to approximate any finite volume, and that they contain the 
Euclidean volumes and the hypertree volumes. We show that contrary to Bourgain's 
theorem for $d=1$, there exists a $2$-dimensional volume on $n$ vertices that cannot 
be approximated by any $\ell_1$-volume with distortion smaller than
$\tilde{\Omega}(n^{1/5})$. The best corresponding upper bound we can currently show is
$O(n^2)$. 

The most technically elaborated part of our study of finite $d$-volumes is the
the problem of $\ell_1$-dimension reduction. 

The following is known. For the Euclidean $d$-volumes on $n$ points, the result
of~\cite{Magen} (that extends the famous Johnson-Lindenstrauss Lemma) shows that about 
$O(\epsilon^{-2} \log n)$ dimension will suffice for a $(1+\epsilon)$-faithful representation.
For $\ell_1$-metrics, the elegant lower bound of Brinkman and Charikar~\cite{brikman-charikar} 
(see also Lee and Naor~\cite{lee-naor}) shows that in general, in order to get multiplicative 
distortion $O(1+\epsilon)$ for a small $\epsilon$, one might need many as $n^{0.5}$
dimensions. The best corresponding upper bound is due to Schechtman~\cite{schechtman87}, showing 
that $c_\epsilon n\log n$ dimensions suffice to get a $(1+\epsilon)$ distortion.

We show that $\ell_1$ $d$-volumes can be $(1\pm \epsilon)$-faithfully represented
using $O(n^d\log n/\epsilon^2)$ hypercut $d$-volumes, the high-dimensional analog 
of cut-metrics. This improves the trivial $O(n^{d+1})$ upper bound.
Moreover, for a natural subclass of $\ell_1$ $d$-volumes,
we show a stronger bound of $O(n^d/\epsilon^2)$ of special hypercut $d$-volumes.
Since for $d=1$ all $\ell_1$ metrics belong to this special subclass, we obtain 
an $O(n/\epsilon^2)$ upper bound on the approximate {\em cut dimension} of
any $\ell_1$ metric on $n$ points. This improves on~\cite{schechtman87} in two 
ways: the number of dimensions is smaller, and each dimension is a cut-metric,
a very special case of a line metric.

To deal with the dimension reduction problem, we develop general sparsification methods 
extending the ideas and techniques of~\cite{karger-benczur}
and~\cite{spielman-sparsifier},  
originally aimed for graph sparsification. We believe that the resulting methods are of independent 
theoretical and algorithmical interest. Section~\ref{sec:last} contains a short discussion 
of these methods, as well as an other application to a certain natural problem about  
geometric discrepancy.
\ignore{
The paper is organized as follows:

 of the paper: The main parts of the paper are
Sections \ref{sec:vol} in which we define and discuss $d$-volumes and
Section \ref{sec:reduc} devoted to dimension reduction. Section
\ref{sec:2} starts with essentially known standard notions from the
combinatorics of simplicial complexes and proceeds with the additional
combinatorial results needed for the theory of $d$-volumes,
and that are, to our best knowledge, new. We defer some of the proofs
to an Appendix section.  
}

\section{Basics of Combinatorics of Simplicial Complexes}\label{sec:2}
\subsection{Cycles, Hypertrees  and Coboundaries}\label{sec:standard}
Let $V$ be an underlying set of size $n$ and let $K_n^{(d)} = \{\sigma
\subseteq V|~ |\sigma|=d+1 \}$ be the set of all $d$-dimensional simplices
on $V$.
The {\em boundary operator} $\partial$ 
maps a $d$-simplex $\sigma$ to a formal sum over $\Z_2$ of 
the $(d-1)$-subsimplices of $\sigma$ of co-dimension $1$. 
For  a set 
$A \subseteq K_n^{(d)}$, $\partial A$ is defined as 
$\partial A \;=\; \sum_{\sigma   \in A} \partial \sigma$. 
By virtue of $\Z_2$, this formal sum can be
identified with a subset of $K_n^{d-1}$.
It is convenient to think about $\partial$ in terms of the 
 ${n \choose {d}} \times {n \choose {d+1}}$ incidence matrix $M_d$ over $\Z_2$ whose 
rows are indexed by $(d-1)$-simplices, the columns are indexed by $d$-simplices, and
$M_d(\tau,\sigma) = 1$ if $\tau \subset \sigma$, and $0$ otherwise.
\ignore{
Then for a $d$-simplex $\sigma$, $\partial \sigma$ is identified 
with the column of $M_d$ that is indexed by $\sigma$, namely, that has
a $1$ in every row indexed by $\tau$ for which $\tau \subseteq \sigma$. 
For a set $A$ of $d$-simplices it will also
be convenient to identify $A$, with $\Cl{A}$, the set of columns of
$M_d$ indexed by $\{\sigma \in A\}$. Namely, with this notation
$\partial A$ as vector in ${\mathbb F}_2^{n \choose d}$ can be written as
$\partial A = \sum_{\sigma \in \Cl{A}} \partial \sigma = M_d 1_A$, where $1_A \in \{0,1\}^{n \choose (d+1)}$ is the
characteristic vector of $A$ as a subset of $K_n^{(d)}$.
}
Then, for a set $A$ of $d$-simplices it holds that $M_d 1_A = 1_{\partial A}$.

A {\em $d$-cycle} $Z\subseteq K_n^{(d)}$ is a subset of $d$-simplices that 
vanishes under the boundary operator, i.e., $\partial Z=0$, or $M_d 1_Z = 0$.

Let a (spanning)  {\em $d$-hypertree} be  a maximal acyclic subset of $d$-simplices in
$K_n^{(d)}$. It is easy to verify that like the usual spanning trees, $d$-hypertrees
form a matroid, and therefore are all of the same size. Since the set
of all $d$-simplices containing a fixed vertex $v$ of $V$
is a $d$-hypertree, the size of any $d$-hypertree must be is ${{n-1}\choose {d}}$.
We call $K \subseteq K_n^{(d)}$ {\em homologically connected}, or
(without a risk of confusion with other definitions of connectivity) just {\em connected}
if $K$ contains a $d$-hypertree. 
(The connectivity of $K$ is equivalent to the vanishing of the homology and the cohomology 
groups $H_{d-1}(K)$, $H^{d-1}(K)=0$ over $\Z_2$, where $K$ is treated as a 
simplicial complex containing  all low dimensional simplices on $V$.)

Let $G=G_{d-1} \subseteq K_n^{(d-1)}$ be a subset of $(d-1)$-dimensional 
simplices on $V$.  A {\em $d$-coboundary} $B$ induced by $G$ is  
the sets of all $d$-simplices $\sigma\in K_n^{(d)}$, such that 
the number of $(d-1)$-dimensional faces of $\sigma$ that belong to $G$ 
is odd. I.e., $1_G^T M_d = 1_{B}^T$.  From this definition it is clear the
$d$-coboundaries, like $d$-cycles, form a linear space over $\Z_2$.
A basic relation between the cycles and the coboundaries is:

\begin{claim}\label{cl:cut-cyc}
For any $d$-cycle $Z$ and a $d$-coboundary $B$,  ~$|Z \cap B|$ is even. 
\end{claim}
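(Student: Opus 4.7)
The plan is to translate the claim into a linear-algebraic identity over $\Z_2$ using the incidence matrix $M_d$, and then observe that it reduces to the associativity of matrix multiplication combined with the defining relations for cycles and coboundaries.

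First I would note that $|Z \cap B| \bmod 2$ is nothing but the $\Z_2$-inner product $1_B^T 1_Z$, where $1_Z \in \Z_2^{\binom{n}{d+1}}$ and $1_B \in \Z_2^{\binom{n}{d+1}}$ are the indicator vectors of $Z$ and $B$ respectively. So it suffices to show that this inner product vanishes in $\Z_2$.

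Next I would substitute the definitions. Since $B$ is a $d$-coboundary induced by some $G \subseteq K_n^{(d-1)}$, the text gives $1_G^T M_d = 1_B^T$. Since $Z$ is a $d$-cycle, we have $M_d 1_Z = 0$. Therefore
\[
1_B^T 1_Z \;=\; (1_G^T M_d)\, 1_Z \;=\; 1_G^T (M_d 1_Z) \;=\; 1_G^T \cdot 0 \;=\; 0 \pmod 2,
\]
which is exactly $|Z \cap B| \equiv 0 \pmod 2$.

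There is no real obstacle here; the only thing worth emphasizing is the interpretation of the identity as the standard duality $\langle \partial^* G , Z\rangle = \langle G, \partial Z\rangle$ between the boundary and coboundary operators over $\Z_2$. Once the definitions of $Z$ and $B$ are written in their matrix form via $M_d$, associativity of matrix multiplication immediately delivers the claim.
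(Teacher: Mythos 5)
Your proof is correct and takes exactly the same route as the paper: rewrite $|Z\cap B| \bmod 2$ as $1_B^T 1_Z$ over $\Z_2$, substitute $1_B^T = 1_G^T M_d$ and $M_d 1_Z = 0$, and conclude by associativity. The closing remark about the adjointness of $\partial$ and $\partial^*$ is a nice gloss but is the same computation in different words.
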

\begin{proof}
One needs to show that $1_B^T \cdot 1_Z =0$ over $\Z_2$. Let $G$
be the $(d-1)$-complex that induces $B$. Then,
\[
1_B \cdot 1_Z  ~=~ 1_G^T M_d 1_Z ~=~ 1_G^T \cdot \overline{0} ~=~ 0\,,
\]
where $\overline{0}$ is the all-zero vector.
\end{proof}

In fact, the about claim can be taken as an alternative definition of the coboundaries; 
moreover, it suffices to consider only cycles $Z$ of the type $\partial \Delta_{d+1}$, 
i.e., the boundaries of $(d+1)$-simplices on $V$.  

The hypertrees and the coboundaries are related in a complementary manner:
\begin{claim}\label{cl:cut-tree}
$K \subseteq K_n^{(d)}$ is connected iff 
$K\cap B \neq \emptyset$ for any nonempty $d$-coboundary $B$.
\end{claim}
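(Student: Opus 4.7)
The plan is to translate both sides of the equivalence into linear algebraic statements about the boundary matrix $M_d$ over $\Z_2$, and then close the loop via rank--nullity duality.

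First I would characterize connectedness linear-algebraically. A subset $A\subseteq K_n^{(d)}$ is acyclic precisely when the columns of $M_d$ indexed by $A$ are linearly independent, so a $d$-hypertree is a maximal linearly independent set of columns of $M_d$, i.e.\ a basis of the column space. Writing $M_d[K]$ for the submatrix of $M_d$ consisting of the columns indexed by $K$, this shows that $K$ is connected iff $\rank(M_d[K]) = \rank(M_d)$; the common value equals $\binom{n-1}{d}$, the hypertree size noted earlier.

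Next I would translate the dual side. A coboundary $B$ induced by $G\subseteq K_n^{(d-1)}$ is encoded by $1_B = M_d^T 1_G$. Consequently $B$ is nonempty iff $1_G \notin \ker(M_d^T)$, while $K\cap B = \emptyset$ iff $M_d[K]^T 1_G = 0$, i.e.\ $1_G \in \ker(M_d[K]^T)$. Hence the condition ``$K$ meets every nonempty coboundary'' is exactly $\ker(M_d[K]^T) \subseteq \ker(M_d^T)$. The reverse inclusion is automatic (if $M_d^T v = 0$ then its restriction to any subset of coordinates is still $0$), so the condition is in fact the equality $\ker(M_d[K]^T) = \ker(M_d^T)$.

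Both kernels sit in $\Z_2^{\binom{n}{d}}$, so by rank--nullity applied to the two transposes their equality is equivalent to $\rank(M_d[K]) = \rank(M_d)$, which is precisely the connectedness condition from the first step; this closes the equivalence. I expect no real obstacle beyond setting up the two dictionaries carefully and making sure the automatic inclusion goes in the correct direction. The only external input used is the rank identity $\rank(M_d)=\binom{n-1}{d}$, which is implicit in the hypertree size formula already established.
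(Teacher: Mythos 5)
Your proposal is correct and takes essentially the same route as the paper: both translate the claim into statements about the column span of $M_d[K]$ and the orthogonality condition $1_G^T M_d[K] = 0$, then invoke rank/dimension duality over $\Z_2$. The paper phrases the two directions directly in terms of spanning and orthogonality rather than kernels of the transpose, but the underlying argument is identical.
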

\begin{proof}
We first show that for any hypertree $T$ and any coboundary $B$,
$T \cap B$ is not empty. Indeed, let $G$ be the subset of $K_n^{(d-1)}$
that induces $B$. If $T \cap B$ is empty, $1_G$ is orthogonal to all
the columns of $M_d$ corresponding to $\sigma \in T$. But these columns
span the entire column space of $M_d$, and thus $B$ must be trivial, contrary to
our assumption. Thus, if $K$ is connected, it intersects all the coboundaries.

Assume now that $K$ is not connected, i.e., the columns of $M_d$ corresponding
to $d$-simplices in $K$ do not span the column space. Then, there must
exist a vector $1_G$ orthogonal to all these columns, but not to the entire column
space. The induced $B$ is thus nontrivial, and disjoint with $K$.  
\end{proof}
\ignore{
The following is basic in the theory of algebraic topology.
\begin{claim}
$M_{d-1}M_d = 0$
\end{claim}
\begin{proof}
Let $A = M_{d-1}M_d$ then for $\sigma \in K^{(d-2)}_n, \tau \in
K^{(d)}_n,~ A_{\sigma,\tau} = |\{\nu \in K_n^{(d-1)}|~ \sigma \subset
\nu \subset \tau \}| ~\bmod(2)$. However, for any such $\sigma, \tau$,
there is either no such $\nu$ or exactly $2$.
\end{proof}

Hence, it follows that $(d-1)$ coboundaries are in the left kernel of
$M_d$ (and in fact, they span this left kernel).
}

While any $G_{d-1}$ uniquely defines a $d$-coboundary $B$,
the opposite does not hold, and different $G$'s may induce the same $B$. 
In fact, $G$ and $G'$ induce the same $B_d$ iff $G' = G \oplus B_{d-1}$
where $B_{d-1}$ is a $(d-1)$-coboundary.\footnote{
This follows since $M_{d-1}M_d = 0$, and hence any $(d-1)$-coboundary is in
the left kernel of $M_d$. Moreover, comparing the dimensions of the
left kernel of $M_d$ and the space of $(d-1)$-coboundaries, one concludes that the
two are equal. Using the language of the algebraic 
topology, this can be restated as $H^{(d-1)}(K_n^{(d)})=0$, which in turn
follows from the connectedness of $K_n^{(d)}$.}
The ambiguity in choosing $G_{d-1}$ for a given $B$ can be 
removed in the following manner. 
For $X \subseteq K_n^{(d)}$ and $v$ a vertex of $X$, define the 
{\em link} of $X$ with respect to $v$ to be the following $(d-1)$-dimensional 
subcomplex of  $X$:
\[
\Link_v(X) ~=~ \{ \tau \in K_n^{(d-1)}  \;|\;  v\not\in \tau
~~\mbox{and}~~ \{\tau \cup v\} \in X\}.
\]  
\begin{claim}\label{cl:link}
A $d$-coboundary $B$ is induced by $\Link_v(B)$.
Consequently, there is a 1-1 correspondence between the $(d-1)$-dimensional $G_{d-1}$'s 
on $V - \{v\}$, and the $d$-coboundaries $B \subseteq  K_n^{(d)}$.
\end{claim}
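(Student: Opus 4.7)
I plan to prove the claim in two steps: first verify directly that $L := \Link_v(B)$ induces $B$ via a face-counting argument with a case split on whether $v$ lies in the ambient $d$-simplex, and then deduce the bijection by combining this surjectivity with a short kernel argument.

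For the first step, let $B'$ be the $d$-coboundary induced by $L$, so that $\sigma \in B'$ iff $|\{\tau \in L : \tau \subset \sigma\}|$ is odd; I must show that this parity equals $1_B(\sigma)$ for every $\sigma \in K_n^{(d)}$. If $v \in \sigma$, exactly one $(d-1)$-face of $\sigma$ avoids $v$, namely $\sigma \setminus \{v\}$, and by the definition of the link this face lies in $L$ precisely when $\sigma \in B$; all other $(d-1)$-faces of $\sigma$ contain $v$ and so are not in $L$, so the parities match. If $v \notin \sigma$, I would set $\Delta = \sigma \cup \{v\}$, a $(d+1)$-simplex, and apply Claim~\ref{cl:cut-cyc} to the $d$-cycle $Z = \partial \Delta$ (which is indeed a cycle by the standard $M_{d-1}M_d = 0$ identity). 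Evenness of $|B \cap \partial \Delta|$ then yields
\[
1_B(\sigma) \;\equiv\; \sum_{u \in \sigma} 1_B(\Delta \setminus \{u\}) \pmod{2},
\]
and since $(\sigma \setminus \{u\}) \cup \{v\} = \Delta \setminus \{u\}$, the right-hand side counts exactly the $(d-1)$-faces of $\sigma$ lying in $L$. Hence $B' = B$.

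For the 1--1 correspondence, the assignment $G \mapsto (\text{induced coboundary})$ is a $\Z_2$-linear map from subsets of $K^{(d-1)}_{V\setminus\{v\}}$ into the space of $d$-coboundaries; the first step shows it is surjective, with $\Link_v$ serving as a right inverse. For injectivity, if $G, G'$ on $V \setminus \{v\}$ induce the same $B$, then by the remark preceding the claim $C := G \oplus G'$ is a $(d-1)$-coboundary supported in $K^{(d-1)}_{V \setminus \{v\}}$. Applying the link identity of the first step in dimension $d-1$ to $C$ gives that $C$ is induced by $\Link_v(C)$; but $\Link_v(C) = \emptyset$ because no face of $C$ contains $v$, so $C = \emptyset$ and $G = G'$.

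The main (and essentially only nontrivial) step is the $v \notin \sigma$ case in the first part: it is exactly here that the hypothesis that $B$ is a genuine coboundary --- rather than an arbitrary subset of $K_n^{(d)}$ --- is used, through its orthogonality with the cycle $\partial(\sigma \cup \{v\})$. Everything else is a clean bookkeeping argument followed by a short recursive kernel computation.
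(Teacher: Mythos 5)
Your proof is correct and, for the substantive first part, follows the paper's argument: case-split on whether $v\in\sigma$, and when $v\notin\sigma$, invoke the cycle $\partial(\sigma\cup\{v\})$ together with Claim~\ref{cl:cut-cyc}. The only difference is bookkeeping: you apply the parity claim once to $B$ and recognize the resulting sum as exactly the face-count defining $1_{B'}(\sigma)$, whereas the paper applies it to both $B$ and $B'$ and argues that two coboundaries agreeing on all but one face of a cycle must agree on that face too; these are the same calculation. A small notational slip: the identity that makes $\partial\Delta$ a $d$-cycle for a $(d+1)$-simplex $\Delta$ is $M_d M_{d+1}=0$, not $M_{d-1}M_d=0$.

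For the ``Consequently'' part, which the paper leaves implicit, your argument is sound but more elaborate than necessary: it leans on the footnote characterizing when two $G$'s induce the same coboundary (hence on $H^{d-1}(K_n^{(d)})=0$) and then re-applies the first part in dimension $d-1$. A shorter route is to check directly that $\Link_v$ is a two-sided inverse of the ``induce'' map on complexes supported off $v$: for $G\subseteq K^{(d-1)}_{V\setminus\{v\}}$ and $\tau\subseteq V\setminus\{v\}$ a $(d-1)$-simplex, the only $(d-1)$-face of $\tau\cup\{v\}$ avoiding $v$ is $\tau$ itself, so $\tau\cup\{v\}$ lies in the induced coboundary iff $\tau\in G$, giving $\Link_v(\mathrm{ind}(G))=G$ and hence injectivity for free. (Equivalently, a dimension count: both sides are $\Z_2$-spaces of dimension $\binom{n-1}{d}$, so surjectivity suffices.)
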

\begin{proof}
Let $B'$ be the $d$-coboundary induced by $\Link_v(B)$. Consider first
a $d$-simplex $\sigma$ that contains $v$. Since $\Link_v(B)$ lacks
all the $(d-1)$-faces of $\sigma$ containing $v$, and contains the remaining $(d-1)$-face 
$\tau=\sigma -\{v\}$ iff $\sigma$ is in to $B$, the definition of coboundary $B'$ implies that 
$\sigma \in B'$ iff  $\sigma \in B$. 
Consider next a $d$-simplex $\sigma=(v_1,v_2,\ldots,v_{d+1})$ that does not contain 
$v$. Consider the $d$-boundary of the $(d+1)$-simplex $(v_1,v_2,\ldots,v_{d+1},v)$.
It is a cycle, and all its $d$-faces with exception of $\sigma$ contain $v$.
Since $B'$ and $B$ agree on all these faces, the parity argument 
from Claim~\ref{cl:cut-cyc} implies that they agree on $\sigma$ as well. 
Thus, $B'=B$.
\end{proof}
%
\subsection{Hypercuts}
The generalization of cuts in graphs to higher dimensions is not 
straightforward.  Topologists, in view of Claim~\ref{cl:cut-tree},
usually consider the coboundaries to be the proper generalization 
of cuts in graphs. We refine this topological definition, arriving at
a notion that makes a lot of sense also from the volume-theoretic 
perspective (see the Section~\ref{sec:vol} below), as well as from the 
viewpoint of Matroid Theory.

For $A \subseteq  K_n^{(d)}$, define an equivalence relation on $d$-simplices, 
$\sigma_1 \sim \sigma_2 \mod A$, if they are {\em homologous} relatively to $A$.
I.e., there exists a simple $d$-cycle containing $\sigma_1$, $\sigma_2$,
while the rest of its $d$-simplices belong to $A$.
In terms of the matrix $M_d$, it means the following. Let $\Col(X)$ denote the 
set of columns of $M_d$ indexed by $\sigma\in X \subseteq K_n^{(d)}$. Then, 
$\sigma_1 \sim \sigma_2 \mod A$ if
$\;1_{\{\sigma_1\}} - 1_{\{\sigma_2\}} \,\in\, \span\{\Col(A)\}$.
Call a $d$-simplex {\em null homologous} relative to $A$ if there exists a simple 
$d$-cycle containing $\sigma$, while the rest of its $d$-simplices belong to $A$.
Equivalently, $\;1_{ \{ \sigma \} } \,\in\, \span\{ \Col(A) \}$.
\begin{definition}\label{def:comb-cut}
Call $C \neq \emptyset$, a subset of $d$-simplices, a {\rm (combinatorial) $d$-hypercut} if 
~$(*)$ no $\sigma \in C$ is null homologous relatively to $\overline{C}$; and
~$(**)$ for any $\sigma_1,\sigma_2 \in C$ it holds  that 
$~\sigma_1 \sim \sigma_2 \mod \overline{C}$. 
\end{definition}
In other words, $C$ is a hypercut iff $\overline{C}$ is maximal unconnected.
This happens to be precisely the definition of the {\em co-circuit} of 
$K_n^{(d)}$  treated as a simplicial matroid. 

In terms of the matrix $M_d$, the Definition~\ref{def:comb-cut}
means the following. 
Let $\Col$ denote the set of columns of $M_d$. Then, $C$ is a hypercut iff 
$\span\{\Col(\overline{C})\} \cap \Col =  \Col(\overline{C})$, and 
the co-dimension of $\span\{\Col(\overline{C})\}$ in $\span({\Col})$
is 1.
\begin{theorem}\label{cl:cut-boundary}
$d$-Hypercuts are precisely the $d$-coboundaries that are minimal
with respect to containment. Moreover, any $d$-coboundary $B$
is a disjoint union of  $d$-hypercuts.
\end{theorem}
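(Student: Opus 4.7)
The plan is to first establish the equivalence ``$C$ is a $d$-hypercut iff $C$ is a minimal nonempty $d$-coboundary'', and then deduce the disjoint-union statement by a short induction on $|B|$.

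For the direction hypercut $\Rightarrow$ minimal coboundary, start from a hypercut $C$. The codimension-$1$ condition forces $\rank\{\Col(\overline{C})\} = \binom{n-1}{d}-1 < \binom{n-1}{d}$, so $\overline{C}$ is not connected, and Claim~\ref{cl:cut-tree} then produces a nonempty coboundary $B' \subseteq C$. To show $B'=C$, for each pair $\sigma_1,\sigma_2\in C$ condition $(**)$ provides a simple cycle $Z\subseteq \overline{C}\cup\{\sigma_1,\sigma_2\}$ containing both; by Claim~\ref{cl:cut-cyc}, $|Z \cap B'|$ is even, and since $Z\cap B'\subseteq\{\sigma_1,\sigma_2\}$, either both or neither of $\sigma_1,\sigma_2$ lie in $B'$. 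Hence $B'\cap C\in\{\emptyset,C\}$ and $B'=C$, so $C$ is itself a coboundary. The very same parity argument applied to any nonempty coboundary strictly inside $C$ forces it to equal $C$, giving minimality.

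For the converse, let $B$ be a minimal nonempty coboundary. I would use a rank count: the space of coboundaries with support in $B$ is the image of $\{v\in\mathbb{F}_2^{\binom{n}{d}}:v\perp\Col(\overline{B})\}$ under $v\mapsto v^T M_d$, and since the left kernel of $M_d$ (the space of $(d-1)$-coboundaries, per the footnote) already lies in this subspace, the image has dimension $\binom{n-1}{d}-\rank\{\Col(\overline{B})\}$. Were this dimension $\geq 2$, the space would contain a nonempty coboundary different from $B$ and hence strictly inside $B$, contradicting minimality; so it equals $1$, which is condition $(**)$. For $(*)$, if some $\sigma\in B$ had $1_\sigma\in\span\{\Col(\overline{B})\}$, then replacing $\overline{B}$ by $\overline{B}\cup\{\sigma\}$ preserves the rank, and the analogous count produces a nonempty coboundary supported in $B\setminus\{\sigma\}$, again contradicting minimality.

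The disjoint-union statement then follows by induction on $|B|$: if $B$ is minimal it is itself a hypercut by the above; otherwise pick any minimal nonempty subcoboundary $B_1\subsetneq B$ (a hypercut), and apply the induction hypothesis to $B\oplus B_1 = B\setminus B_1$, a strictly smaller coboundary disjoint from $B_1$. The step I expect to be the main obstacle is ``minimal coboundary $\Rightarrow (**)$'', for which a direct cycle/parity argument does not seem to come out cleanly, so I would pass to the rank calculation in the matrix picture described above.
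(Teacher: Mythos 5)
Your proposal is correct, but it takes a somewhat different route from the paper's. For the direction hypercut $\Rightarrow$ minimal coboundary, the paper reads off the defining functional $y$ directly from the codimension-$1$ condition (so $C$ is a coboundary at once), and then gets minimality from the strict chain of span containments $\span\{\Col(\overline{B})\}\subset\span\{\Col(\overline{B'})\}\subset\span\{\Col\}$; you instead route through Claim~\ref{cl:cut-tree} to produce a nonempty coboundary $B'\subseteq C$ and then use the cycle-parity argument of Claim~\ref{cl:cut-cyc} with $(**)$ to force $B'=C$. Both are valid, but the paper stays entirely inside the matrix picture. For the converse, your dimension count of the space of coboundaries supported in $B$ (namely $\binom{n-1}{d}-\rank\{\Col(\overline{B})\}$) is an expanded version of the paper's one-liner that minimality forces $\span\{\Col(\overline{B}\cup\sigma)\}=\span\{\Col\}$ for every $\sigma\in B$, hence codimension $1$. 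One simplification you missed: the paper notes that condition $(*)$ holds \emph{automatically} for every coboundary, not just minimal ones, because the defining functional $1_G$ vanishes on $\span\{\Col(\overline{B})\}$ yet evaluates to $1$ on each $1_\sigma$ with $\sigma\in B$, so $1_\sigma\notin\span\{\Col(\overline{B})\}$. This renders your separate rank argument for $(*)$ unnecessary. The disjoint-union induction at the end is essentially identical to the paper's.
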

\begin{proof} 
The matrix definition of $C$ implies that there exists a vector $y$ 
such that $y \cdot v =0$ for any $v\in \Col(\overline{C})$, and $y \cdot v =1$ 
the rest of the columns.  Thus, a $d$-hypercut is also a $d$-coboundary.

Observe that for a $d$-coboundary $B$ it always holds that 
$\;\span\{\Col(\overline{B})\} \cap \Col =  \Col(\overline{B})$.
If there exists nontrivial $d$-coboundary $B' \subset B$, then the following
strict containments hold,
\[
\span\{\Col(\overline{B})\} \,\subset\, \span\{\Col(\overline{B'})\} 
\,\subset\, \span\{\Col\}\,,
\] 
implying that $\;\span \{\Col(\overline{B})\}$ has co-dimension $>1$, and thus is not 
a hypercut. For the other direction, if $B$ is minimal with respect to
containment,  then for any $\sigma \in B$ it must hold
$\;\span\{\Col(\overline{B} \cup \sigma)\} \,=\, \span\{\Col\}$, 
and thus $\;\span\{\Col(\overline{B})\}$ has co-dimension 1, and therefore 
is a hypercut.

Finally, let $B$ and $B' \subset B$ be coboundaries. Since coboundaries
are closed under addition, $B \setminus B' = B\oplus B'$ is also a coboundary, and thus
$B$ is a disjoint union of two coboundaries. Continuing decomposing these coboundaries,
one arrives at a disjoint union of minimal coboundaries,  i.e., hypercuts.
\end{proof}

The following theorem is analogous to the fact that cutting an edge of a 
spanning tree one obtains a cut.
\begin{theorem}\label{cor:T_sigma}
Let $T$ be a $d$-hypertree, and $\sigma \in T$. Then there exists
a {\em unique} $d$-hypercut $C_{T,\sigma}$ such that \\
$T\cap C_{T,\sigma} = \sigma$. More explicitly, $C_{T,\sigma}$ 
is the set of all the $d$-simplices $\tau$ such that
the {\rm unique} cycle $Z$ created by adding $\tau$ to $T$, contains $\sigma$.
\end{theorem}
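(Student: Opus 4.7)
The plan is to combine three facts from the preceding material: every $d$-hypertree has the same size $\binom{n-1}{d}$; Claim~\ref{cl:cut-tree}, which characterizes connectedness as meeting every nonempty coboundary; and Claim~\ref{cl:cut-cyc}, the parity identity between cycles and coboundaries. Theorem~\ref{cl:cut-boundary}, identifying hypercuts with minimal coboundaries and decomposing any coboundary into hypercuts, is the other ingredient.

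For \emph{existence}, I would remove $\sigma$ from $T$: the set $T \setminus \{\sigma\}$ has strictly fewer simplices than any hypertree, hence cannot be connected. Claim~\ref{cl:cut-tree} then produces a nonempty coboundary $B$ disjoint from $T \setminus \{\sigma\}$, and by Theorem~\ref{cl:cut-boundary} I may extract a hypercut $C \subseteq B$. Since $T$ itself is connected, Claim~\ref{cl:cut-tree} forces $C \cap T \neq \emptyset$, so necessarily $C \cap T = \{\sigma\}$. For \emph{uniqueness}, if hypercuts $C_1, C_2$ both satisfy $C_i \cap T = \{\sigma\}$, then $C_1 \oplus C_2$ is a coboundary disjoint from $T$, and connectedness of $T$ together with Claim~\ref{cl:cut-tree} force it to be empty.

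For the \emph{explicit description}, I would invoke the standard matroid observation that for $\tau \notin T$ there is a unique cycle $Z_\tau \subseteq T \cup \{\tau\}$, and $\tau \in Z_\tau$ (otherwise $Z_\tau \subseteq T$, contradicting acyclicity). Since $Z_\tau \subseteq T \cup \{\tau\}$ and $C_{T,\sigma} \cap T = \{\sigma\}$, the intersection $Z_\tau \cap C_{T,\sigma}$ is contained in $\{\sigma,\tau\}$. Applying Claim~\ref{cl:cut-cyc} makes this intersection of even size, which, given $\tau \in Z_\tau$, amounts to $\tau \in C_{T,\sigma} \iff \sigma \in Z_\tau$. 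Together with the trivial observation that $\sigma \in C_{T,\sigma}$, this is exactly the stated explicit form.

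There is no real obstacle; the preceding machinery does essentially all the work. The only slightly delicate points are the matroid-type verifications that $Z_\tau$ exists and is unique and that $\tau \in Z_\tau$, but these follow directly from $T$ being a maximal acyclic set, i.e., a basis of the simplicial matroid on $K_n^{(d)}$.
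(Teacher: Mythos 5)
Your proof is correct, and it follows a genuinely different route from the paper's. The paper's argument is constructive and matroid-theoretic: it directly defines $S$ as the set of $d$-simplices whose columns lie in $\span\{\Col(T-\{\sigma\})\}$ (the matroid closure of $T-\{\sigma\}$), sets $C=\overline{S}$, and then verifies conditions $(*)$ and $(**)$ of Definition~\ref{def:comb-cut} by rank/span arguments; uniqueness falls out of the observation that any hypercut disjoint from $T-\{\sigma\}$ must lie inside $\overline{S}$, together with minimality of hypercuts. Your argument instead works deductively from the earlier structural results: existence comes from the size count on hypertrees, Claim~\ref{cl:cut-tree}, and the decomposition of coboundaries into hypercuts in Theorem~\ref{cl:cut-boundary}; uniqueness is a clean symmetric-difference trick ($C_1\oplus C_2$ is a coboundary disjoint from $T$, hence empty); and the explicit description is obtained afterwards by intersecting the fundamental cycle $Z_\tau$ with $C_{T,\sigma}$ and invoking the parity Claim~\ref{cl:cut-cyc}. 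The paper's proof has the advantage of producing the explicit hypercut in one stroke and showing transparently that it is the complement of a matroid flat; your version, by contrast, cleanly isolates which prior lemma is responsible for which part of the statement, and in particular the XOR uniqueness argument is shorter than the implicit minimality argument in the paper. Both are sound.
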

\begin{proof} 
Consider the set $S$ of all $d$-simplices whose columns are spanned
by $\Col(T - \{\sigma\})$. Observe that any hypercut disjoint with 
$T - \{\sigma\}$ must also be disjoint with $S$. Let $C=\overline{S}$. 
Observe that $C$ is not empty, as $\sigma \in C$. We claim that $C$ is a hypercut. 
Indeed,~$(*)$ holds by definition of $C$, while~$(**)$ holds since any $d$-simplex 
$\tau$ is null homologous with respect to $T$, and thus, if it is not in $S$, it 
must be homologous to $\sigma$ relatively to $T - \{\sigma\}$. 
\end{proof}

As a corollary of Theorem~\ref{cor:T_sigma} we obtain another definition
of the hypercuts. 
\begin{corollary}\label{thm:blocker}
Let $\mathcal{C}$ be the set of $d$-hypercuts and let $\mathcal{T}$
be the set of $d$-hypertrees. Then, $\mathcal{C}$ is the {\em blocker}
of $\mathcal{T}$, $\mathcal{C} = \mathcal{T}^B$.  That is, every
hypercut intersect every hypertree, and any set $S \subseteq K_n^{(d)}$
with this property that is minimal (with respect to containment) is a hypercut.
\end{corollary}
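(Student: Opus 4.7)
The statement has two parts: (a) every hypercut meets every hypertree; (b) any minimal transversal of $\mathcal{T}$ is a hypercut. I plan to dispose of (a) quickly by combining earlier results, and to spend the effort on (b) via a matroid-style exchange argument built around Theorem~\ref{cor:T_sigma}.

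For (a), I will invoke Theorem~\ref{cl:cut-boundary} to view a hypercut as a nonempty $d$-coboundary, and then Claim~\ref{cl:cut-tree}, which says that a connected set of $d$-simplices --- in particular, any hypertree --- meets every nonempty coboundary. This yields (a) and, as a by-product, tells me that every hypercut is itself a transversal of $\mathcal{T}$.

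For (b), let $S$ be a minimal transversal of $\mathcal{T}$ and pick any $\sigma \in S$. By minimality, $S \setminus \{\sigma\}$ fails to meet some hypertree $T$, so $T \cap S = \{\sigma\}$. I will show that $S$ coincides with the hypercut $C_{T,\sigma}$ supplied by Theorem~\ref{cor:T_sigma}. Since (a) already gives that $C_{T,\sigma}$ is a transversal of $\mathcal{T}$, the minimality of $S$ will force equality as soon as I establish the containment $C_{T,\sigma} \subseteq S$.

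To prove $C_{T,\sigma} \subseteq S$, I argue by contradiction. Suppose $\tau \in C_{T,\sigma} \setminus S$; then $\tau \neq \sigma$, so $\tau \notin T$, and by Theorem~\ref{cor:T_sigma} the unique cycle $Z \subseteq T \cup \{\tau\}$ contains $\sigma$. Consider $T' := (T \setminus \{\sigma\}) \cup \{\tau\}$. The delicate step --- and the only real obstacle --- is to verify that $T'$ is again a hypertree: any hypothetical cycle $Z' \subseteq T'$ must contain $\tau$ (because $T' \setminus \{\tau\} \subseteq T$ is acyclic), so by $\Z_2$-linearity of the cycle space, $Z \oplus Z'$ is a cycle inside $T$ that still contains $\sigma$, contradicting the acyclicity of $T$. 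Hence $T'$ is an acyclic set of size $|T|$, i.e., a hypertree. But $T' \cap S = \emptyset$, contradicting that $S$ meets every hypertree. This forces $C_{T,\sigma} \subseteq S$ and completes the argument; everything outside of this exchange step is direct bookkeeping with Claim~\ref{cl:cut-tree} and Theorem~\ref{cor:T_sigma}.
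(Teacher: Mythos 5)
Your proof is correct and takes essentially the same route the paper indicates: the paper's proof is a one-line citation of Claim~\ref{cl:cut-tree} and Theorem~\ref{cor:T_sigma} together with a nod to matroid theory, and your argument simply spells out the details (invoking Theorem~\ref{cl:cut-boundary} to see a hypercut as a coboundary for part (a), and running the basis-exchange for part (b)). In particular, your exchange step --- showing $(T\setminus\{\sigma\})\cup\{\tau\}$ is again a hypertree via $\Z_2$-linearity of the cycle space --- is exactly the matroid-theoretic ingredient the paper gestures at, so this is the same approach, fully worked out.
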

\begin{proof} 
The statement directly follows from Claim~\ref{cl:cut-tree} and Theorem~\ref{cor:T_sigma}. It can 
also be shown within the framework of Matroid Theory.
\end{proof}

The next two results address finer issues related to hypercuts,
in particular for $d=2$. First, we provide a characterization of 2-hypercuts 
(vs. general 2-coboundaries) in terms of their links, i.e., in purely 
graph-theoretic terms. 

Let $G=(V,E)$ be a graph. Call two adjacent edges $(u,v),\,(u,w) \in E(G)$ 
{\em V-equivalent} if $(v,w) \not\in E(G)$. 
I.e., the restriction of $G$ to $\{u,v,w\}$ is a "V" with $u$ at the apex.
Taking the transitive closure of this relation, we call
$G$ {\em V-connected} if any two edges of $G$ are V-equivalent.
\begin{theorem}\label{th:2-link}
Let $B$ be a 2-coboundary, and let $G=\Link_v(B)$ be its link with respect
to an arbitrary vertex $v$. Then, $B$ is a 2-hypercut iff $G$ is V-connected.
\end{theorem}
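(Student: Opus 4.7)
The plan is to combine Claim~\ref{cl:link} (the bijection between graphs $G$ on $V - \{v\}$ and 2-coboundaries $B$, via $G = \Link_v(B)$) with Theorem~\ref{cl:cut-boundary} (which identifies 2-hypercuts with minimal non-empty 2-coboundaries), and then translate the minimality of $B$ into a graph-theoretic property of $G$. Concretely, $B$ fails to be a 2-hypercut iff it can be written as a disjoint union $B = B_1 \sqcup B_2$ of two non-empty coboundaries: the forward direction uses Theorem~\ref{cl:cut-boundary} applied to a proper non-empty sub-coboundary $B' \subset B$ together with $B = B' \sqcup (B \oplus B')$, and the backward direction is immediate since each $B_i$ is then a proper sub-coboundary.

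Under the link correspondence such a decomposition $B = B_1 \sqcup B_2$ is equivalent to a partition $E(G) = G_1 \sqcup G_2$ into two non-empty disjoint edge sets on $V - \{v\}$ (with $G_i = \Link_v(B_i)$), subject to one parity constraint coming from the triangles $\{a,b,c\} \subset V - \{v\}$: denoting by $k$ and $k_i$ the number of edges of $\{a,b,c\}$ in $G$ and $G_i$ respectively, one has $k_1 + k_2 = k$, and $\{a,b,c\} \in B$ iff $k$ is odd, $\{a,b,c\} \in B_i$ iff $k_i$ is odd. Combined with $B_1 \cap B_2 = \emptyset$ this says exactly that whenever $k$ is even, both $k_1$ and $k_2$ must also be even. (Non-emptiness of the $G_i$ matches non-emptiness of the $B_i$ because the empty graph corresponds to the empty coboundary under the bijection.)

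A case analysis on $k \in \{0,1,2,3\}$ shows that the only nontrivial constraint occurs for $k = 2$: here the triangle has two edges in $G$ sharing a vertex, with the opposite edge absent --- i.e., a V in $G$ --- so both $k_i$ being even forces the two edges to lie in the same part. The cases $k = 0$ and $k = 1$ are trivial, and the case $k = 3$ imposes no constraint since any split with $k_1 + k_2 = 3$ has exactly one odd summand, automatically placing the triangle in exactly one $B_i$. Hence a valid partition $G = G_1 \sqcup G_2$ with both parts non-empty exists iff $G$ admits a nontrivial partition of its edges respecting the direct V-equivalence relation, which by transitivity happens iff $G$ has more than one V-equivalence class, i.e., iff $G$ is not V-connected.

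The only point requiring real care is the $k=3$ case, which may at first look suspicious because a triangle in $G$ has no two edges directly V-equivalent among themselves (the opposite edge is always present), yet the three edges can be split arbitrarily between $G_1$ and $G_2$; the parity accounting shows this is exactly consistent with the triangle belonging to $B$ and to precisely one $B_i$, so no extra constraint is imposed. Once the case analysis is in place, both directions of the equivalence follow: if $G$ is not V-connected one unions V-equivalence classes to produce a nontrivial valid partition and hence a proper sub-coboundary of $B$; if $G$ is V-connected any valid partition must place all edges in a single part, contradicting non-emptiness of both $G_i$, so $B$ is minimal.
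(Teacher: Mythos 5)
Your proof is correct, and while it arrives at the same V-equivalence criterion via the same ingredients (the link bijection of Claim~\ref{cl:link} and the minimal-coboundary characterization of hypercuts from Theorem~\ref{cl:cut-boundary}), it packages the reduction differently from the paper. The paper sets up a $\Z_2$ linear system in variables $x_e$ (forcing $x_e=0$ on edges through $v$, and requiring $\sum_{e\in\sigma}x_e=0$ on every triangle $\sigma\notin B$), argues that nontrivial solutions are in bijection with nonempty sub-coboundaries $B'\subseteq B$ via their links, and then observes that after eliminating forced zeros the only surviving constraints are $x_{(a,b)}+x_{(b,c)}=0$ on the ``V'' configurations, so the solution space is $2^{\#\{\text{V-equivalence classes}\}}$ and is minimal exactly when $G$ is V-connected. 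You instead unwind minimality directly as ``no decomposition $B=B_1\sqcup B_2$ into nonempty coboundaries,'' translate to an edge-partition $E(G)=G_1\sqcup G_2$ via the link bijection (noting that $G_1\cap G_2=\emptyset$ and nonemptiness transfer automatically), and do an explicit case analysis on $k=0,1,2,3$ to show the disjointness constraint on the $B_i$'s bites only at $k=2$, i.e.\ at the V's. The two formulations are equivalent --- a nontrivial solution $x'$ \emph{is} the characteristic vector of some $G_1$, with $G_2=G\setminus G_1$ --- but yours makes the combinatorics of the $k=3$ case explicit, where the paper's linear-system phrasing simply has no equation to impose; the paper's version is a bit slicker, and yours is more elementary and perhaps more transparent about why only the V configurations matter. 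One small point worth pinning down in either version: you should note that $B\neq\emptyset$ (so $G\neq\emptyset$), since the theorem is vacuous otherwise and the ``hypercut'' definition presupposes it.
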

\begin{proof} 
Let $x$ be a vector with coordinates indexed by the edges of $K_n$. 
Consider the following system of equations in $x$. For each $e$ containing 
the vertex $v$, $x_e=0$; for each triangle $\sigma \not\in B$, 
$\sum_{e\in \sigma} x_e = 0$. We claim that this system of equations 
has a unique nontrivial solution iff $B$ is a hypercut. Indeed, by
definition, $x = 1_{E(G)}$ is one nontrivial solution, 
as $1_{E(G)}$ induces $B$. The existence of another nontrivial 
solution $x'$ is equivalent to existence of a nontrivial 2-coboundary $B'$ 
(induced by $x'$) strictly contained in $B$, as on every triangle
$\sigma \in \bar{B}$, $x'$ must sum to $0$. 
Recall that different links define different coboundaries.   
 
Assigning the forced value $0$ to all $x_e$ where $e$ contains $v$, and to
all $x_{(a,b)}$ where the triangle $\{a,b,v\} \not\in B$, we arrive at the
equivalent system of equations $x_{(a,b)} + x_{(b,c)} = 0$ whenever 
$a,b,c \in V- \{v\}$, and $(a,b)\,,\,(b,c) \in E(G); (a,c)\not\in E(G)$.
Thus, the edges in the same V-equivalence class must be assigned the same
value, but there is not restrictions for edges in different V-equivalence 
classes. We conclude that there is a unique solution iff there is one
V-equivalence class, i.e., $G$ is V-connected.
\end{proof}

Let us comment that a random graph $G$ on $n-1$ vertices is almost 
surely V-connected. (This is an easy exercise and we leave it to the reader.) 
Thus, in view of the above theorem, there are $2^{\Theta(n^2)}$ 
different 2-hypercuts.

Another comment is of a more geometrical nature. A closer look at 
the structure of 2-hypercuts $C$ reveals that not only for every
two different $\sigma,\tau \in C$ there exists a cycle $Z$ with 
$Z\cap C = \{\sigma,\tau\}$, but, moreover, $Z$ can be taken as a 
triangulation of the 2-sphere. This can be shown using the V-connectedness
of the links of $C$, first for $\sigma,\tau$ that share a common vertex, and then,
using transitivity, for any $\sigma,\tau$. This observation will not be
used in the rest of this paper.

How large/small can a $d$-hypercut be? A partial answer is provided by the 
following claim. 
\begin{claim}
The size of the minimum (nonempty) $d$-hypercut in $K_n^{(d)}$ is $n-d$.
The size of the maximum $2$-hypercut is ${n \choose 3} - O(n^2)$.
\end{claim}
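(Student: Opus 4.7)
I would split the statement into its two parts. For the minimum, the upper bound $n-d$ is realized by the star at a $(d-1)$-simplex $\tau$: the set $S_\tau:=\{\sigma\in K_n^{(d)}:\tau\subset\sigma\}$ has $n-d$ elements and equals the coboundary induced by $G=\{\tau\}$; that it is in fact a hypercut (a minimal coboundary, via Theorem~\ref{cl:cut-boundary}) follows from Claim~\ref{cl:link}, since for any $v\notin\tau$ we have $\Link_v(S_\tau)=\{\tau\}$, whose only sub-link is the empty one. For the matching lower bound I would induct on $n$. Given a nonempty $d$-coboundary $B$ in $K_n^{(d)}$, pick $v$ and set $G:=\Link_v(B)$, which is nonempty by Claim~\ref{cl:link}; split $|B|=g+b$, where $g=|G|$ counts $\sigma\in B$ containing $v$ and $b$ counts the rest. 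A direct calculation identifies $b$ with the size of the $d$-coboundary in $K_{n-1}^{(d)}$ (on $V\setminus\{v\}$) induced by $G$. If $b>0$, the inductive hypothesis in $K_{n-1}^{(d)}$ gives $b\ge n-d-1$, so $|B|\ge n-d$. If $b=0$, then $G$ lies in the left kernel of the incidence matrix of $K_{n-1}^{(d)}$, which by the footnote equals the space of $(d-1)$-coboundaries of $K_{n-1}^{(d-1)}$, and the inductive hypothesis in dimension $d-1$ forces $g\ge (n-1)-(d-1)=n-d$. Base cases $n=d+1$ (one simplex) and $d=1$ (classical min cut of $K_n$) are immediate.

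For the maximum $2$-hypercut, Theorem~\ref{th:2-link} reduces the problem to constructing a V-connected graph $G$ on the $n-1$ vertices of $V\setminus\{v\}$ whose induced $2$-coboundary $B$ has size $\binom{n}{3}-O(n^2)$. I would take $G:=K_{n-1}\setminus C$, where $C$ is a Hamiltonian cycle $v_1v_2\cdots v_{n-1}v_1$ on $V\setminus\{v\}$. The size is computed by a direct triangle count: a triangle in $V\setminus\{v\}$ lies in $B$ iff it has an odd number of $G$-edges, equivalently $0$ or $2$ edges in $C$. Since $C$ is a simple cycle, one easily counts $(n-1)(n-5)$ triangles with exactly one $C$-edge and $n-1$ triangles with exactly two $C$-edges, giving $|B|=|E(G)|+\binom{n-1}{3}-(n-1)(n-5)=\binom{n}{3}-(n-1)(n-4)=\binom{n}{3}-O(n^2)$.

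The main obstacle is verifying V-connectedness of this $G$. The key observation is that at each vertex $v_k$ the non-edges $(v_i,v_{i+1})\in C$ with $i,i+1\notin\{k-1,k,k+1\}$ successively V-equate $(v_k,v_i)$ and $(v_k,v_{i+1})$, producing a chain that links all $n-4$ edges of $G$ incident to $v_k$ into a single V-class. Then the non-edge $(v_k,v_{k+1})\in C$ yields V-equivalences $(v_k,v_j)\sim(v_{k+1},v_j)$ for $j\notin\{k-1,k,k+1,k+2\}$, merging the V-classes at $v_k$ and $v_{k+1}$; iterating around $C$ proves $G$ V-connected for $n\ge 6$, while for $n\le 5$ the bound $\binom{n}{3}-O(n^2)$ is trivially satisfied.
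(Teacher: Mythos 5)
Your proof is correct and follows essentially the same route as the paper's: the same induction on $(n,d)$ via $\Link_v$ for the lower bound, the star at a $(d-1)$-simplex for tightness, and the complement-of-a-Hamiltonian-cycle link for the large 2-hypercut. The only difference is that you spell out the V-connectedness verification that the paper dismisses as "easy to verify," and you note (correctly) that the construction needs $n\ge 6$ to be V-connected, whereas the paper's "$n\ge 5$" is slightly too generous.
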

\begin{proof} 
We start with the first statement, and prove it by induction on $n,d$. 
Since the minimum coboundary is a hypercut, it suffices to prove it
for coboundaries. The statement clearly holds for $d=1$ and for $n=d+1$. Assume 
that  the statement is true for all pairs $(n',d')$ where $n'< n, \; d'\leq d$.
Let $C$ be a nonempty $d$-coboundary, and let $v$ be a vertex. Consider $\Link_v(C)$.
Then, $|C| = |C'| + |\Link_v(C)|$, where $C'$ is the restriction of $C$ on
$V - \{v\}$, clearly a  $d$-coboundary of $K_{n-1}^{(d)}$. Recall that $\Link_v(C)$ 
cannot be empty. If $C'\neq \emptyset$, then by inductive hypothesis
$|C| \;\geq\; (n-1-d) + 1 \;=\; n-d$. Otherwise, by the previous discussion, 
$\Link_v(C)$ must be a $(d-1)$-coboundary of $K_{n-1}^{(d-1)}$, and thus 
by inductive hypothesis $|C| \;=\;  |\Link_v(C)| \;\geq\; (n-1)-(d-1) \;=\; n-d$. 
The bound is tight, as shown by a $d$-hypercut that consists all the 
$d$-simplices containing a fixed $(d-1)$-simplex $\tau$.

Let us just mention here without further elaboration that an alternative proof of 
the first statement can be obtained using the tools from the theory of simplicial 
matroids (see, e.g.,~\cite{CorLin} for a survey of this theory.)

For the second statement, consider the 2-coboundary $B$ of $K_n^{(2)}$whose link is a complete 
graph on $n-1$ points excluding a Hamiltonian cycle. It is easy to verify that the criterion 
of Theorem~\ref{th:2-link} holds, and thus $B$ is a 2-hypercut. A simple calculation 
shows that for $n \geq 5$, $|B| = {n \choose 3} - (n-1)(n-4)$.   
\end{proof}

We conclude this section with a result about the distribution of the sizes of $d$-hypercuts
in $K_n^{(d)}$, in particular when $d=2$. It should be noted that a similar but weaker 
result was shown earlier in~\cite{linial-meshulam} employing a somewhat more involved argument.
\begin{theorem}\label{thm:cut-size1}
The number of $d$-hypercuts of size $\alpha n$ is at most $n^{c_d \cdot\alpha}$
where $c_d$ can be (very roughly) upper-bounded by $d(d+1)$. For $d=2$ we show 
a better upper bound of $(4n)^{3\alpha +1 } $. 
\end{theorem}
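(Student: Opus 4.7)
\medskip
\noindent\textbf{Plan.} The strategy is to encode each $d$-hypercut $C$ of size $\alpha n$ as a pair $(v,G)$ where $v \in V$ and $G = \Link_v(C)$ is a $(d-1)$-subcomplex of $K_{n-1}^{(d-1)}$. By Claim~\ref{cl:link} this encoding is injective, since every coboundary—hence every hypercut—is determined by its link at any fixed vertex. To control $|G|$ we use the double-counting identity $\sum_{v \in V}|\Link_v(C)| = (d+1)|C| = (d+1)\alpha n$. Averaging gives at least one vertex with $|\Link_v(C)| \leq (d+1)\alpha$; more carefully, if $k$ vertices violated this bound, the sum would force $k(d+1)\alpha + k \leq (d+1)\alpha n$, so $\Omega(n/\alpha)$ vertices are ``good.''

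For general $d$, we bound the number of $(d-1)$-subcomplexes of $K_{n-1}^{(d-1)}$ of size at most $(d+1)\alpha$ by $\binom{\binom{n-1}{d}}{(d+1)\alpha}((d+1)\alpha + 1) \leq n^{d(d+1)\alpha}$ up to factors polynomial in $\alpha$, via $\binom{N}{m} \leq N^m$ with $N \leq n^d$. Multiplying by the $n$ choices of $v$ and dividing by the $\Omega(n/\alpha)$ multiplicity of good $v$'s produces the bound $n^{d(d+1)\alpha}$ up to lower-order factors, so $c_d$ may be taken equal to $d(d+1)$ in the ``very rough'' sense the statement advertises.

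For the $d=2$ refinement we additionally invoke Theorem~\ref{th:2-link}: $G = \Link_v(C)$ must be V-connected, and in particular connected as a graph. Connectivity forces the support of $G$ to lie in at most $m+1$ vertices, where $m = |E(G)|$. Thus the number of V-connected graphs on $V - \{v\}$ with $m$ edges is at most $\sum_{k=2}^{m+1}\binom{n-1}{k}\,C(k,m)$, where $C(k,m)$ counts connected graphs on $k$ labeled vertices with $m$ edges. The dominant term is $k = m+1$ (spanning trees, counted by Cayley as $(m+1)^{m-1}$), and a direct Stirling estimate bounds the whole sum by roughly $(3n)^{m+1}/(m+1)^2$. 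Combining this with the $\Omega(n/\alpha)$ multiplicity of good vertices, summing over $m \leq 3\alpha$, and absorbing the resulting constants converts the bound into the desired $(4n)^{3\alpha+1}$.

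The main obstacle is managing the multiplicative constants cleanly. The naive link encoding loses a factor of $n$ from the choice of $v$, and this is recovered only by exploiting that many vertices—not just one—satisfy the averaging bound. For $d = 2$, V-connectedness is essential: it is what breaks us below the general $n^{6\alpha}$ bound by shrinking the effective universe for the link from $\binom{n-1}{2}$ possible edges to graphs supported on $m+1 \leq 3\alpha + 1$ vertices, turning an $n^{O(m)}$ count into an $(O(n))^{m+1}$ count.
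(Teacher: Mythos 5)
Your proposal is correct and follows essentially the same path as the paper: encode a hypercut by its link at a vertex where that link is sparse (found by averaging $\sum_v|\Link_v(C)| = (d+1)|C|$), count the possible links, and for $d=2$ exploit that V-connectedness of the link confines its edges to a connected subgraph on at most $3\alpha+1$ vertices. The one place you over-complicate is in recording the pair $(v,G)$ and then amortizing the resulting extra factor of $n$ against the $\Omega(n/\alpha)$ good vertices; the paper sidesteps this entirely by observing that $G=\Link_v(C)$ already determines $C$ outright (it induces $C$ as a coboundary, by Claim~\ref{cl:link}), so $C\mapsto G$ is injective without recording $v$, and the count is simply the number of sparse $(d-1)$-subcomplexes of $K_n^{(d-1)}$ — no factor of $n$ to introduce and then recover.
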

\begin{proof}
Since $|C|=\alpha n$, the
average size of $|\Link_v(C)|$ is $(d+1)\alpha$, and therefore there
exists a vertex $v$ such that $|\Link_v(C)| \leq (d+1)\alpha$. Thus,
$|C|$ is induced by $G$ of size at most $(d+1)\alpha$. However,
setting $m={n \choose d}$, the number of such $G$'s is at most 
${m   \choose (d+1)\alpha} = O(n^{ d(d+1)\alpha})$. For $d=2$ we know that
$G$ is V-connected, hence it has at most one non trivial component
containing at most $3 \alpha$ edges and $3\alpha +1$
vertices.  Thus, the number of such $G$'s is at most
\[
{n \choose {3\alpha +1}} {{3\alpha +1 \choose 2} \choose {3\alpha}} ~\leq~ 
\left( \frac{{en}}{3\alpha +1} \right)^{3\alpha +1} 
\cdot \left( {{e\cdot 3\alpha (3\alpha +1)} \over {2\cdot 3\alpha}}\right)^{3\alpha} 
~\leq~ (4n)^{3\alpha +1} 
\]
\end{proof}

\subsection{Geometrical Hypercuts}
{\em Geometrical} hypercuts are a very special subfamily of the more
general combinatorial hypercuts. They can be regarded as a different  
generalization of graph cuts to higher dimensions. Their definition
is quite intuitive, but it takes some effort to show that they are 
indeed hypercuts. As we shall see, they are particularly useful 
in dealing with Euclidean realizations of simplicial complexes. 
\begin{definition}\label{def:gcuts}
Let $\phi: V \mapsto \S^{d-1}$, the unit sphere of dimension $d-1$,
such that the points in the image are in a general position. 
The geometric hypercut $C$ is defined as the set of $d$-simplices whose 
image under $\phi$ contains the origin.
\end{definition}
\begin{theorem}\label{th:gcuts-scuts}
Every geometrical $d$-hypercut $C$ is a combinatorial $d$-hypercut.
\end{theorem}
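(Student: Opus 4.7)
The plan is to show $C$ is a minimal coboundary; by Theorem~\ref{cl:cut-boundary} this is equivalent to being a combinatorial hypercut. I first build an explicit coboundary witness for $C$, which makes condition $(*)$ of Definition~\ref{def:comb-cut} automatic, and then separately tackle condition $(**)$.

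Fix a generic unit vector $u \in \S^{d-1}$, generic with respect to the configuration $\phi(V)$, and let $R = \{ru : r>0\}$ be the open ray from the origin in direction $u$. Define a $(d-1)$-complex $G \subseteq K_n^{(d-1)}$ by declaring $\tau \in G$ iff $R$ meets $\conv(\phi(\tau))$; by genericity these meetings are transverse. For any $d$-simplex $\sigma$, the set $\conv(\phi(\sigma))$ is a full-dimensional simplex in $\R^d$ (by general position), whose boundary is a topological sphere glued from its $d+1$ facets. The number of $(d-1)$-faces of $\sigma$ lying in $G$ equals the number of transverse intersections of $R$ with this sphere. Since $R$ starts at the origin and escapes to infinity, this intersection number has parity $1$ iff $0 \in \conv(\phi(\sigma))$, i.e., iff $\sigma \in C$. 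Hence $G$ induces $C$, so $C$ is a coboundary. Condition $(*)$ is now immediate: if some $\sigma \in C$ were null-homologous relative to $\overline{C}$, there would be a cycle $Z$ with $Z \cap C = \{\sigma\}$, contradicting Claim~\ref{cl:cut-cyc} applied to the coboundary $C$.

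It remains to verify condition $(**)$, namely that for any $\sigma_1, \sigma_2 \in C$ there is a cycle $Z$ with $Z \cap C = \{\sigma_1, \sigma_2\}$. By Theorem~\ref{cl:cut-boundary} this is equivalent to showing $C$ is minimal among coboundaries: any nontrivial decomposition $C = B_1 \sqcup B_2$ into nonempty coboundaries, combined with such a $Z$ (taking $\sigma_1 \in B_1, \sigma_2 \in B_2$), would force $|Z \cap B_1| = 1$, contradicting Claim~\ref{cl:cut-cyc} applied to $B_1$.

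To construct $Z$, I plan a discrete homotopy in the space of $(d+1)$-subsets of $V$: connect $\sigma_1$ to $\sigma_2$ by a chain of $d$-simplices where consecutive members differ in a single vertex, so each transition lies on the boundary of a common $(d+1)$-simplex, which is a canonical cycle of $d+2$ simplices. Summing these elementary boundary cycles and correcting the spurious $C$-contributions by further cycles supported in $\overline{C} \cup \{\sigma_1, \sigma_2\}$ should yield $Z$. The main obstacle lies exactly here: the swaps and corrections must be orchestrated so that the parity of ray-crossings is controlled at every step, leaving only $\sigma_1, \sigma_2$ inside $C$. A natural route is via a one-parameter continuous deformation of $\phi$ sliding $\sigma_1$'s vertex set onto $\sigma_2$'s through generic positions, tracking each instant when a vertex crosses the ``$0 \in \conv$'' boundary and cancelling the spurious contribution with an appropriate boundary cycle at that instant.
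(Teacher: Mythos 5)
Your proof splits into two halves. The first half---showing $C$ is a coboundary, and deducing condition $(*)$---is complete, correct, and actually cleaner than what the paper does. The generic-ray argument directly exhibits the inducing $(d-1)$-complex $G$ (the set of $(d-1)$-simplices whose realization the ray pierces), and the parity count is exactly the right mod-2 Jordan--Brouwer statement; once $C$ is known to be a coboundary, $(*)$ drops out of Claim~\ref{cl:cut-cyc} for free. The paper, by contrast, proves $(*)$ via a retract argument and Sperner's Lemma, and only alludes in a later remark to a ``less geometrical'' proof through coboundaries. So on $(*)$ your route is a genuine and preferable simplification.

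The second half, condition $(**)$, is where the gap is. Your plan---chain $\sigma_1$ to $\sigma_2$ by single-vertex swaps, sum the elementary boundary cycles $\partial(\tau_i \cup \tau_{i+1})$, then cancel ``spurious $C$-contributions''---is not carried through, and you explicitly flag the obstruction yourself. The difficulty is real: after summing the elementary boundaries, the intermediate $\tau_i$'s and the auxiliary faces of each $\tau_i \cup \tau_{i+1}$ may fall in $C$ in an uncontrolled way, and the ``appropriate boundary cycle'' you would need for each correction is precisely the object whose existence is at stake. Tracking sign changes of $\phi$ under a continuous deformation is plausible but again stops at a sketch. The paper closes this hole with a single construction that avoids corrections entirely: lift $\sigma_1$ and $\sigma_2$ to two parallel copies of $\R^d$ inside $\R^{d+1}$, take the boundary triangulation of $\conv(\sigma_1 \cup \sigma_2)$, and observe that the segment joining the two lifted origins is interior to the body (because both simplices contain $0$), while every lateral boundary face---which is what your ``intermediate'' simplices become---is exterior to that segment and therefore, after reading off its abstract vertex set and applying $\phi$, does not contain $0$. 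That boundary triangulation is a $d$-cycle meeting $C$ exactly in $\{\sigma_1, \sigma_2\}$, which is $(**)$. You should either adopt this construction or find a concrete replacement; as written, $(**)$ is not proved.
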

\begin{proof} 
We start with showing that for any $\sigma_1, \sigma_2 \in C$ it
holds that $\sigma_1 \sim \sigma_2~\mod\;\overline{C}$. Assume first
that the two simplices are disjoint. We use the following
cylindric construction. Consider two parallel copies of $\R^d$  in $\R^{d+1}$, 
each containing $\S^{d-1}$ with the $\phi$-image of $V$.
Choose $\sigma_1$ from first copy, and $\sigma_2$
from the second copy. Then, by the general position argument, the
boundary of the $\,{\rm conv} (\sigma_1 \cup \sigma_2) \subset \R^{d+1}$ is 
triangulated by $d$-simplexes. For every  $d$-simplex in this triangulation, 
consider the corresponding abstract simplex in $K_n^{(d)}$.  An easy projection
argument implies that all the simplices resulting from the  lateral
$d$-simplices in the above  triangulation (i.e., all but $\sigma_1$ 
and $\sigma_2$) are in $\overline{C}$. Since the union of all the
$d$-simplices in the above  triangulation forms a cycle (even over $\Z$), the 
statement follows. If the two simplices $\sigma_1$ and $\sigma_2$ are not
disjoint, we make the two copies of $\R^d$ intersect, such that all the 
common vertices (and only them) lie in the intersection, and proceed
in same manner.

We next argue that no $\sigma \in C$ is null homologous relatively to
$\overline{C}$. 
Assume to the contrary that there exists a $d$-cycle $Z$ 
such that $Z\cap C$ contains a single simplex $\sigma$ containing the origin.
Using the central projection, we conclude that the realization of 
$\partial \sigma \;=\; \partial (Z - \sigma)$ is a retract of the realization of $Z$.
This can be refuted using standard basic algebraic topology arguments,
e.g., Sperner Lemma. Although classically the Sperner Lemma is used in
a weaker setting, it can be easily modified to apply here.
In addition to the classical argument, one needs only to notice that
since $Z$ is a cycle over $\Z_2$, the colored sub-simplices lying in
the abstract $(d-1)$-subsimplices of $Z$ (with the exception of $\partial
\sigma$),  appear even number of times in the Sperner sum, and thus 
contribute nothing.
\end{proof}

An important property of geometric $d$-hypercuts is
that the size of an intersection of such a $d$-hypercut with
a $d$-cycle $Z$ that is a boundary of $(d+1)$-simplex is
either 0 or 2. For combinatorial $d$-hypercuts this number
can be any even value between $0$ and $(d+2)$. While this property
does not characterize geometrical $d$-hypercuts, at least for $d=2$ it comes 
close (see~\cite{frankl-furedi-3graphs}). Moreover, using this property
and the discussion following Claim~\ref{cl:cut-cyc}, one gets another, 
less geometrical, proof of Theorem~\ref{th:gcuts-scuts}.

Only a tiny portion of combinatorial hypercuts are geometric. 
E.g., for $d=2$, the number of $d$-hypercuts is $2^{\Theta(n^2)}$,
as observed above, while the number of geometrical $d$-hypercuts can be shown to be
$2^{\Theta(n\log n)}$. This is the number of distinct (with respect to
the induced geometrical cuts) possible configurations of $n$ points on the
cycle. 

We conclude this section by mentioning a special subfamily of the
the geometric hypercuts, which also was suggested as a reasonable generalization
of the graph cuts. {\em Partition} $d$-hypercuts, studied e.g., 
in~\cite{lovasz,parekh-forestation},  correspond to partitions
${\mathcal P} = \{ V_1, \ldots, V_{d+1}\}$ of 
$V$ into $(d+1)$ disjoint nonempty parts. The hypercut $C_{\cal P}$ is defined
as $C_{\mathcal P} = \{\sigma \in K^{(d)}_n \;|\; |\sigma \cap V_i|=1,
i=1,2,\ldots,d+1\,\}$. It is easily verify that $C_{\cal P}$ is a geometrical
hypercut, and thus a hypercut. 

The following problem of Graham pertaining to the partition hypercuts
reflects the history of the early attempts at the proper definition of hypertrees, 
hypercuts etc. Graham defines a {\em $d$-forest} $F_d \subseteq K_n^{(d)}$ as a 
collection of $d$-simplices, such that for every $\sigma \in F_d$ there 
exists a partition hypercut $C$ such that $F_d \cap C = \sigma$. The problem
was to estimate the maximum possible size of a $d$-forest. It was
solved by Lov\'asz~\cite{lovasz,parekh-forestation} by introducing new  
(at the time) algebraic methods. 

Observe that the theory we have discussed so far allows to solve
Graham's problem in a rather obvious manner. 
Claim~\ref{cl:cut-cyc} implies that $F_d$ is acyclic, hence, by the
discussion in Section~\ref{sec:standard}, $|F_d| \leq {n-1 \choose
  d}$.  The tightness of the bound is witnessed by the $d$-hypertree
containing all the $d$-simplices that contain a fixed vertex $v \in
V$.
%
%
\section{Abstract Volumes}
\label{sec:vol}
\subsection{Basic Notions} 
Let $K_n^{(\leq d)}$ be the simplicial complex on the underlying set $V$ of 
size $n$ containing all the simplices of dimension $\leq d$ on $V$.
We define the abstract $d$-dimensional volume function $\v^{(d)}: K_n^{(\leq d)}
\mapsto \R^+$ as a real nonnegative function with the following properties:
(*) the simplices of dimension $< d$ have value $0$;  (**)  the values of $d$-simplices 
satisfy the following  generalization of the triangle inequality: 
\begin{equation}
\label{tri-ineq}
\mbox{\em 
For every $d$-cycle $Z$ of $K_n^{(d)}$, and every $\sigma \in Z$, it
holds that $\sum_{\sigma' \in Z - \sigma} \v^{(d)}(\sigma')
\;\geq\; \v^{(d)}(\sigma)\;$.
}
\end{equation}
It is easy to verify that for $d>1$ the condition (**) cannot be replaced 
by a requirement on cycles of bounded size.  

The most natural example of the volume function
is the {\em Euclidean} volume: given an embedding $\phi$ of $V$ into  
an Euclidean space,  the volume of  a $d$-simplex $\sigma$,
is the Euclidean $d$-volume of $\conv(\phi(\sigma))$.

Another important example is the analog of the shortest-path metric.
Let $X\subseteq K_n^{(d)}$ be a connected (i.e., containing a $d$-hypertree) 
subcomplex with nonnegative weights on its $d$-simplices.  
The volume $\v_X$ induced by $X$  on $K_n^{(d)}$ is defined by  
$\v_X = \min_{D_\sigma \subseteq X} \sum_{\sigma' \in D_\sigma} w_{\sigma'}$,
where $D_\sigma$ is a $\sigma$-{\em cap}, i.e.,  $\sigma \cup D_\sigma$ is a cycle.
(In  particular, $\sigma$ itself is $\sigma$-cap.)

The last example are {\em cut volumes}, which play a central role in this
paper. Let $C$ be a $d$-hypercut in $K_n^{(d)}$.
The corresponding volume function $\v_C^{(d)}$ assigns 1 to every $\sigma \in
C$, and 0 to every $\sigma \not\in C$. To see that a cut volume is indeed a volume,
it suffices to notice that a 0/1 function on $d$-simplices may fail to be a volume
function iff there exists a cycle $Z$ were all but one $\sigma \in Z$
have value $0$. By Claim~\ref{cl:cut-cyc}, such $Z$ does not exist 
for $\v_C^{(d)}$. 
 
Volume functions on $V$ are closed under addition and multiplication by a 
constant, and thus form a cone in $\R_{+}^{n \choose {d+1}}$.  The extremal 
volumes in this cone are, as always, of particular interest. The following 
theorem provides a full characterization of 0/1 extremal volumes. Perhaps more
important, it also establishes their inapproximability but any other metric. 

The {\em multiplicative distortion} between two $d$-volume functions 
$\v_1$ and $\v_2$ on $V$ is defined similarly to the metric distortion, i.e.,
\[
\dist(\v_1,\v_2) \;=\; \max_\sigma {{\v_1(\sigma)} \over {\v_2(\sigma)}} \cdot
\max_\sigma {{\v_2(\sigma)} \over {\v_1(\sigma)}}\,.
\]
\begin{theorem}
\label{thm:decomp}
A 0/1 volume function $\v^{(d)}$ is extremal iff it is a cut volume.
Moreover, the distortion between such $\v^{(d)}$ and any other volume 
function $\v_1^{(d)}$ is infinite unless $\v_1^{(d)} = \alpha \cdot \v^{(d)}$
for some positive constant $\alpha$.
\end{theorem}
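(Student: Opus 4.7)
The plan is to split the proof into the two directions of the iff and handle the distortion claim as an easy corollary of the first direction. For ``cut volume $\Rightarrow$ extremal'', suppose $\v^{(d)}=\v_C$ for a hypercut $C$ and that $\v_C=\v_1+\v_2$ with $\v_1,\v_2$ volumes. Since $\v_C$ vanishes on $\overline{C}$ and both summands are nonnegative, each $\v_i$ vanishes on $\overline{C}$. For any $\sigma_1,\sigma_2\in C$, property $(**)$ of Definition~\ref{def:comb-cut} supplies a cycle $Z$ with $Z\cap C=\{\sigma_1,\sigma_2\}$; applying~\eqref{tri-ineq} to $\v_1$ and $Z$ gives $\v_1(\sigma_1)\le\v_1(\sigma_2)$, and the symmetric application yields equality, so $\v_1$ is constant on $C$. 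Hence $\v_1=\alpha\,\v_C$ and $\v_2=(1-\alpha)\,\v_C$. The ``Moreover'' clause then follows by observing that finite distortion between $\v_C$ and an arbitrary volume $\v_1^{(d)}$ forces $\v_1^{(d)}$ to vanish exactly on $\overline{C}$ and to be strictly positive on $C$, after which the very same application of $(**)$ shows $\v_1^{(d)}$ is constant on $C$, so $\v_1^{(d)}=\alpha\,\v^{(d)}$.

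The other direction, 0/1-extremal $\Rightarrow$ cut volume, I would prove by contrapositive: assume $\v^{(d)}=\v_S$ is a 0/1 volume with $S=\supp(\v^{(d)})$ not a hypercut, and exhibit a non-trivial decomposition. The hypothesis that $\v_S$ is a volume is exactly property $(*)$: no cycle $Z$ has $|Z\cap S|=1$. So the failure of ``$S$ is a hypercut'' means $(**)$ fails, i.e.~the equivalence relation $\sigma_1\sim\sigma_2\pmod{\overline{S}}$, read as $1_{\sigma_1}-1_{\sigma_2}\in\span\{\Col(\overline{S})\}$, has at least two classes $S_1,\dots,S_k$ on $S$. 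I would pick $f:S\to\R$ constant on each $S_i$ but taking at least two distinct values, and for small $\epsilon>0$ define $\v_1=(1+\epsilon f)\cdot \mathbf{1}_S$ and $\v_2=(1-\epsilon f)\cdot \mathbf{1}_S$. Then $\v_S=\tfrac12(\v_1+\v_2)$, both summands are nonnegative and non-proportional to $\v_S$, so once they are shown to be volumes the extremality of $\v_S$ is refuted.

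The main obstacle is verifying~\eqref{tri-ineq} for $\v_1$ (the argument for $\v_2$ follows by replacing $f$ with $-f$). For a cycle $Z$ and $\sigma\in Z\cap S$ the inequality reads $(|Z\cap S|-1)+\epsilon\sum_{\sigma'\in Z\cap S,\;\sigma'\ne\sigma}f(\sigma')\ge 1+\epsilon f(\sigma)$. When $|Z\cap S|\ge 3$ there is an integer slack of at least $1$ on the left, which swallows the $\epsilon$-correction for all sufficiently small $\epsilon$. The delicate case is $|Z\cap S|=2$, say $Z\cap S=\{\sigma,\tau\}$: here $Z$ itself is the cycle witnessing $\sigma\sim\tau\pmod{\overline{S}}$, so $\sigma,\tau$ lie in a common class $S_i$, $f(\sigma)=f(\tau)$, and the inequality holds with equality. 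This is precisely the point at which the linear-algebraic definition of $\sim$ is used, and once this case is in hand $\v_1,\v_2$ are legitimate volumes and the contrapositive is complete.
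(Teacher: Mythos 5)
Your proof is correct and follows the paper's overall strategy in both directions of the equivalence as well as the distortion claim. The one genuine variation is the explicit decomposition in the ``extremal 0/1 $\Rightarrow$ hypercut'' direction: you take $\v_{1,2}=(1\pm\epsilon f)\mathbf{1}_S$ with $f$ constant on $\sim$-equivalence classes, non-constant overall, and $\epsilon$ small, and then verify~\eqref{tri-ineq} by case analysis, whereas the paper hard-codes a single witness (values $0.4$ and $0.6$ on a fixed class $C'\subsetneq C$ and $0.5$ elsewhere on $C$) and asserts without elaboration that the resulting pair are volumes. Both arguments hinge on the same observation — for a cycle $Z$ with $|Z\cap S|=2$, the two simplices of $Z\cap S$ are $\sim$-equivalent, so the decomposition cannot break the generalized triangle inequality there — and both implicitly or explicitly dispose of $|Z\cap S|\ge 3$ via slack. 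Your parametric version is a cleaner packaging of the same idea, it makes the paper's unexplained numeric choice transparent, and the case analysis you supply is the verification the paper omits.
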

\begin{proof}
Let $\v^{(d)}$ be a cut $d$-volume function defined by a hypercut $C$.
Assume that $\v^{(d)} \;=\; \v_1^{(d)} + \v_2^{(d)}$. Consider
$\v_1^{(d)}$. It must be $0$ outside of $C$. Since any two
$\sigma,\sigma' \in C$ satisfy $\sigma \sim \sigma'\,\mod\,\overline{C}$,
there exists a cycle $Z=Z_{\sigma,\sigma'}$ such that 
$Z\cap C \;=\; \{\sigma, \sigma'\}$. Since all
the $d$-simplices in $\overline{C}$ have volume $0$, the generalized
triangle inequality implies that $\v_1^{(d)}(\sigma) \;=\; \v_1^{(d)}(\sigma')$. 
Thus, $\v_1^{(d)} \;=\; \alpha \cdot \v^{(d)}$, as claimed.

For the other direction, consider an extremal $0/1$ $d$-volume 
function $\v^{(d)}$. Define $C\subset K_n^{(d)}$ as $
C \;=\; \{\sigma \;|\;  \v^{(d)}(\sigma)=1 \}$. 
Clearly, no $\sigma \in C$ is null homologous relatively to $\overline{C}$, since
otherwise the generalized triangle inequality would imply $\v^{(d)}(\sigma)=0$. 
Consider the equivalence relation $\sim$ on $C$, i.e., 
the homology $\mod\;\overline{C}$. It suffices to show that it contains a 
single equivalence class. Assume to the contrary that there is an 
equivalence class $C'$ strictly contained in $C$.
Define $\v_1^{(d)}$ and $\v_2^{(d)}$ as follows. Outside of $C$ both are
$0$. For $\sigma \in C \setminus C'$, $\v_1^{(d)}(\sigma)\;=\; \v_2^{(d)}(\sigma) 
\;=\; \frac{1}{2}$; for $\sigma \in C'$, $\v_1^{(d)}(\sigma)\;=\; 0.4$, 
and $\v_2^{(d)}(\sigma)\;=\; 0.6$. The definition of $C'$ implies that both  
$\v_1^{(d)}$ and $\v_2^{(d)}$ are volume functions, contradicting the
assumption that $\v^{(d)}$ is extremal.

The second statement follows easily along the same line of reasoning. The
support of any volume function approximating such $\v^{(d)}$ must coincide
with the support of $\v^{(d)}$, and moreover, arguing as above, it must be
constant on it.
\end{proof}

The above theorem provides an additional motivation to our definition of 
hypercuts, this time from volume theoretical perspective.

Much of the modern theory of finite metric spaces is devoted to the
study of special metric classes that constitute a sub-cone of the metric cone,
notably $\ell_1$ metrics and $NEG$-type metrics. Crucially for applications,
any metric on $n$ points can be approximated by a special metric with
a bounded distortion $c_n$. E.g., for  $\ell_1$ the rough bound of
$O(n)$ on distortion follows from the minimum spanning tree argument,
and the much better $O(\log n)$ bound is implied by Bourgain's
Theorem \cite{bourgain}. Theorem~\ref{thm:decomp} implies that any (closed)
sub-cone of volume functions with the approximation property {\em must} 
contain the cone spanned by the cut volumes. Moreover, as we shall soon see, 
this cone already has the required property. This justifies the following 
definition.

\begin{definition}\label{def:l1-vol}
Analogously to one dimensional case, we define $\ell_1$ $d$-volumes 
to be the nonnegative combinations of cut $d$-volumes. 
\end{definition}
Clearly, $\ell_1$ $d$-volumes constitute a sub-cone of $d$-volumes.
\subsection{$\ell_1$ Volumes}
The most basic properties of $\ell_1$ metrics are
that they contain the class of tree-metrics and the 
class of Euclidean metrics. The situation with
$\ell_1$ $d$-volumes turns out to be fully analogous.

 \begin{theorem}\label{th:trees}
Let $T$ be a (spanning) $d$-hypertree with nonnegative weights on the 
$d$-simplices. Then, the induced $d$-volume function 
$\v_T^{(d)}$ is $\ell_1$.
\end{theorem}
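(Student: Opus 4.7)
My plan is to mimic the classical $d=1$ proof, where a weighted tree metric decomposes as a sum over edges of (weight $\times$ cut metric induced by removing that edge). The $d$-dimensional analog of ``removing an edge of a spanning tree'' is exactly the content of Theorem~\ref{cor:T_sigma}: for each $\sigma\in T$ there is a unique hypercut $C_{T,\sigma}$ with $T\cap C_{T,\sigma}=\{\sigma\}$. So the natural candidate decomposition is
\[
\v_T^{(d)} \;=\; \sum_{\sigma\in T} w_\sigma \cdot \v_{C_{T,\sigma}}^{(d)}.
\]
This is manifestly a nonnegative combination of cut $d$-volumes, hence $\ell_1$.

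To verify the identity, I would split on whether the evaluation simplex $\tau$ lies in $T$ or not. For $\tau\in T$, the indicator $1[\tau\in C_{T,\sigma}]$ is nonzero precisely when $\sigma=\tau$ (again by $T\cap C_{T,\sigma}=\{\sigma\}$), so the right-hand side equals $w_\tau$, which matches $\v_T^{(d)}(\tau)$ on the ``trivial cap'' convention. For $\tau\notin T$, acyclicity of $T$ together with the fact that $T$ is a maximal acyclic set implies that $T\cup\{\tau\}$ contains a \emph{unique} $d$-cycle $Z_\tau$, and this $Z_\tau$ is the only $\sigma$-cap contained in $T\cup\{\tau\}$; hence $\v_T^{(d)}(\tau) = \sum_{\sigma'\in Z_\tau\setminus\{\tau\}} w_{\sigma'}$. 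On the other hand, by the second (explicit) part of Theorem~\ref{cor:T_sigma}, $\tau\in C_{T,\sigma}$ iff $\sigma\in Z_\tau$, so
\[
\sum_{\sigma\in T} w_\sigma\cdot 1[\tau\in C_{T,\sigma}] \;=\; \sum_{\sigma\in Z_\tau\cap T} w_\sigma \;=\; \sum_{\sigma'\in Z_\tau\setminus\{\tau\}} w_{\sigma'},
\]
matching $\v_T^{(d)}(\tau)$.

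There is no real obstacle here; the whole proof is a direct application of Theorem~\ref{cor:T_sigma}. The only subtle point worth a sentence of care is the identification of the minimizing $\tau$-cap with $Z_\tau\setminus\{\tau\}$ when $\tau\notin T$: one must note that any cap $D_\tau\subseteq T$ would yield a cycle $D_\tau\cup\{\tau\}\subseteq T\cup\{\tau\}$, and the uniqueness of the cycle in $T\cup\{\tau\}$ (which in turn follows from $T$ being a matroid basis of the simplicial matroid) forces $D_\tau = Z_\tau\setminus\{\tau\}$. Hence the ``min'' in the definition of $\v_T^{(d)}$ is actually attained by the unique available cap, and the identity above is exact rather than merely an inequality.
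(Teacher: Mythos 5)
Your proof is correct and follows essentially the same route as the paper: decompose $\v_T^{(d)}$ as $\sum_{\sigma\in T} w_\sigma\,\v_{C_{T,\sigma}}^{(d)}$ using the hypercuts from Theorem~\ref{cor:T_sigma}, then check the identity separately for $\tau\in T$ and $\tau\notin T$. You spell out the verification for $\tau\notin T$ in more detail than the paper does (in particular the point that uniqueness of the cycle in $T\cup\{\tau\}$ pins down the minimizing cap), but the underlying argument is the same.
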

\begin{proof}
Recall the definition of $C_{T,\sigma}$ from Theorem~\ref{cor:T_sigma}.
We claim that $\v_T^{(d)} \,=\, \sum_{\sigma \in T} \v_{C_{T,\sigma}^{(d)}}$. 
For $\tau \in T$ this follows at once, while for $\tau\not \in T$, 
$\sum_{\sigma \in S} \v_{C_{T,\sigma}}^{(d)}(\tau)$ is equal to
the sum of weights of all the $\sigma$'s in $S$ belonging to the cycle 
created by adding $\tau$ to $T$, as it should be.
\end{proof}

This implies the following approximability result.
 
\begin{theorem}\label{th:MST}
Any $d$-volume on $V$ can be approximated by an $\ell_1$ $d$-volume
with distortion at most ${{n-1} \choose {d}}$.
\end{theorem}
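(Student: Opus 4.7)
The plan is to mimic the classical ``minimum spanning tree'' proof that any metric on $n$ points is $(n-1)$-approximated by a tree metric, now in the higher-dimensional setting. Given a $d$-volume $\v^{(d)}$ on $V$, I would choose a $d$-hypertree $T \subseteq K_n^{(d)}$ of minimum total weight $\sum_{\sigma \in T} \v^{(d)}(\sigma)$ with respect to the matroid structure on hypertrees mentioned in Section~\ref{sec:standard}. Weighting each $\sigma \in T$ by $w_\sigma := \v^{(d)}(\sigma)$, this defines a hypertree $d$-volume $\v_T^{(d)}$, which is $\ell_1$ by Theorem~\ref{th:trees}. The goal is then to show $\v^{(d)}(\sigma) \le \v_T^{(d)}(\sigma) \le \binom{n-1}{d}\cdot \v^{(d)}(\sigma)$ for every $\sigma \in K_n^{(d)}$, which immediately bounds the distortion by $\binom{n-1}{d}$.

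The lower bound $\v^{(d)}(\sigma) \le \v_T^{(d)}(\sigma)$ is a direct consequence of the generalized triangle inequality~(\ref{tri-ineq}): if $D_\sigma$ is any $\sigma$-cap realizing $\v_T^{(d)}(\sigma)$, then $\sigma \cup D_\sigma$ is a $d$-cycle, so $\v^{(d)}(\sigma) \le \sum_{\sigma' \in D_\sigma} \v^{(d)}(\sigma') = \v_T^{(d)}(\sigma)$. The upper bound splits into two cases. If $\sigma \in T$, the trivial cap $D_\sigma = \{\sigma\}$ gives $\v_T^{(d)}(\sigma) \le \v^{(d)}(\sigma)$, so in fact equality holds.

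The interesting case is $\sigma \notin T$. Because hypertrees form a matroid, adding $\sigma$ to $T$ creates a unique fundamental circuit $C_\sigma \subseteq T \cup \{\sigma\}$ containing $\sigma$ (this is essentially the statement of Theorem~\ref{cor:T_sigma} applied dually). I would then invoke the matroid exchange axiom together with minimality of $T$: if some $\sigma' \in C_\sigma \cap T$ satisfied $\v^{(d)}(\sigma') > \v^{(d)}(\sigma)$, then $T' = (T \setminus \{\sigma'\}) \cup \{\sigma\}$ would be a hypertree of strictly smaller weight, a contradiction. Hence every $\sigma' \in C_\sigma \cap T$ satisfies $\v^{(d)}(\sigma') \le \v^{(d)}(\sigma)$. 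Taking the cap $D_\sigma = C_\sigma \setminus \{\sigma\}$ yields
\[
\v_T^{(d)}(\sigma) \;\le\; \sum_{\sigma' \in C_\sigma \cap T} \v^{(d)}(\sigma') \;\le\; |T|\cdot \v^{(d)}(\sigma) \;=\; \binom{n-1}{d}\cdot\v^{(d)}(\sigma),
\]
since $|C_\sigma \cap T| \le |T| = \binom{n-1}{d}$.

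I do not anticipate a serious obstacle; the only substantive ingredients are the matroid property of hypertrees (already stated in Section~\ref{sec:standard}), the uniqueness of the fundamental circuit, and the hypertree-volume-is-$\ell_1$ statement from Theorem~\ref{th:trees}. The single point requiring a little care is the matroid exchange step, where one must confirm that swapping $\sigma'$ with $\sigma$ within the fundamental circuit indeed produces another hypertree; this is standard for any matroid basis and can be cited directly.
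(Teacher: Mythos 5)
Your proposal is correct and follows essentially the same route as the paper: choose a minimum-weight spanning $d$-hypertree $T$, note that $\sigma\notin T$ is heaviest in the fundamental cycle it creates with $T$ (by the matroid exchange property), and bound the cap size by $|T|=\binom{n-1}{d}$; the $\ell_1$-ness of $\v_T$ comes from Theorem~\ref{th:trees} in both cases. The only difference is that you spell out the lower bound $\v \le \v_T$ via the generalized triangle inequality and the exchange step explicitly, which the paper leaves implicit.
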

\begin{proof}
Let $\v^{(d)}$ be a $d$-volume function on $K_n^{(d)}$, and let $T$
be the minimum (spanning) hypertree with respect to $\v^{(d)}$.
Then, for $\sigma \in T$,  $\v_T^{(d)}(\sigma) \;=\;
\v^{(d)}(\sigma)$. For $\sigma \notin S$, much like the MST in
graphs, $\sigma$ must be the heaviest $d$-simplex in the cycle $|Z|$
created by adding $\sigma$ to $T$. Since the size of $Z$ is at most 
$1 + |T| \;\leq\; 1 + {{n-1} \choose d}$, the statement follows.
\end{proof}

While the upper bound on distortion of Theorem~\ref{th:MST} is
probably too rough and the true exponent of $n$ is probably smaller,
we shall see in what follows that even for $d=2$ the distortion can be
as large as $\Omega(n^{1\over 5})$. Thus, in general it is polynomial,
and not logarithmic as in the case for $d=1$ (Bourgain's
Theorem~\cite{bourgain}). Another important difference between $d=1$
and $d=2$ is that the Euclidean $2$-volumes, and in fact even their
nonnegative combinations, are unable to approximate {\em at all} even
the simplest $2$-volume functions, e.g., set $V=\{0,1,2,3,4\}$ and 
$vol(\{i, i+2, i+3\}) =1$, where $+$ is taken $\bmod~ 5$, and
$vol(\sigma)=0$ for any other $\sigma$. It is easy to see that this
function is a volume and in fact geometrical cut volume. However, any
geometrical realization that approximate it can not collide any two
points, which implies in turn, that it must assign a strictly positive
volume to a $\{i.i+1,,i+2\}$ simplex. 

Next we address the containment of Euclidean volumes in $\ell_1$-volumes. 
\begin{theorem}\label{cl:Eu-gcuts}
Any Euclidean $d$-volume is an $\ell_1$ $d$-volume. 
In fact, it is a nonnegative combination of geometrical hypercuts. 
\end{theorem}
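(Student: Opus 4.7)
The plan is a two-step reduction. First I handle the case where the ambient dimension equals $d$, i.e., $\phi: V \to \R^d$, by expressing the Euclidean volume as an integral of indicators, each corresponding to a single geometric hypercut. Then I handle general $\phi: V \to \R^k$ with $k > d$ via a Cauchy-type projection formula from integral geometry.

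For the base case, I define for each $t \in \R^d$ (outside the finite set $\phi(V)$) the normalized map $\phi_t: V \to \S^{d-1}$ given by $\phi_t(v) = (\phi(v)-t)/\|\phi(v)-t\|$, and let $C_t$ be the geometric hypercut induced by $\phi_t$. The key identity is $\sigma \in C_t$ iff $t \in \conv(\phi(\sigma))$: the origin lies in $\conv(\phi_t(\sigma))$ iff $\sum_i \lambda_i (\phi(v_i)-t)/\|\phi(v_i)-t\| = 0$ for some convex coefficients $\lambda_i$, and the substitution $\mu_i := \lambda_i/\|\phi(v_i)-t\|$ recasts this as the statement that $t$ is a convex combination of the $\phi(v_i)$, and vice versa. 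Integrating the indicator over $t \in \R^d$,
\[
\v^{(d)}(\sigma) \;=\; \int_{\R^d} \mathbf{1}[t \in \conv(\phi(\sigma))]\, dt \;=\; \int_{\R^d} \v_{C_t}(\sigma)\, dt,
\]
which exhibits $\v^{(d)}$ as a nonnegative (continuous) combination of geometric hypercut volumes. For general $k \geq d$, the Cauchy--Kubota formula $\text{vol}_d(K) = c_{d,k}\int_{G(d,k)} \text{vol}_d(\pi_P K)\, dP$, valid for any $d$-dimensional convex body $K \subset \R^k$ with $\pi_P$ the orthogonal projection onto a $d$-plane $P$, reduces the general case to the base case applied to each composed embedding $\pi_P \circ \phi: V \to P \cong \R^d$.

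The main technicality I anticipate is the general-position hypothesis in Definition~\ref{def:gcuts}: the map $\phi_t$ (or $\phi_t \circ \pi_P$ in the projected case) must send $V$ to points in general position on $\S^{d-1}$. This fails only on a measure-zero subset of $\R^d$ (respectively, of $G(d,k) \times \R^d$), which can be excluded from the integrals without altering the identities; values of $t$ for which $C_t$ is empty simply contribute $0$ and may be ignored. To pass from an integral representation to a finite nonnegative combination (if desired), one either keeps the statement in its integral form or applies a standard Carath\'eodory-type argument within the cone of $d$-volumes, since only finitely many hypercuts arise from a finite point set up to equivalence. Once these routine points are dispatched, the representation exhibits any Euclidean $d$-volume as a nonnegative combination of geometric hypercut volumes, establishing the theorem.
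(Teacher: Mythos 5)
Your proposal is correct and follows essentially the same approach as the paper's sketch: reduce to the ambient dimension $d$ via a projection argument, express the Euclidean volume as an integral of indicator functions $\mathbf{1}[p \in \conv(\phi(\sigma))]$ over the ambient $\R^d$, observe that each such indicator is precisely a geometric cut volume obtained by radially projecting onto $\S^{d-1}$ centered at $p$, and dispose of degeneracies by a measure-zero argument. The only superficial difference is in the reduction step: the paper invokes the fact that a random orthogonal projection onto $\R^d$ preserves $d$-volumes in expectation up to a fixed constant, while you invoke the Cauchy--Kubota formula, but these are the same fact stated in two dialects. Your closing remark about passing from the integral to a finite nonnegative combination is a sound point that the paper leaves implicit; since only finitely many geometric hypercuts arise from a fixed finite point configuration, the integral is already a finite sum over cells of the induced arrangement, so no Carath\'eodory step is strictly needed.
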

\vspace{-0.3cm} \begin{proof}{\bf (Sketch)} 
The proof proceeds in three steps. First, observe that the random
projection of a finite dimensional Euclidean space on $\R^{d}$
preserves (in expectation) the $d$-volumes up to scaling. Thus,
it suffices to consider Euclidean $d$-volumes realizable in $\R^{d}$.
Next, observe that given an embedding of $V$ points in $R^d$,
the corresponding Euclidean volume function $\v^{(d)}$ satisfies
$\v^{(d)}\;=\; \int_{\R^d} \v_p^{(d)}$, where $\v_p^{(d)}(\sigma) = 1$
if the realization of $\sigma$ contains $p$, and $0$ otherwise.
Treating $p$ in $\v_p^{(d)}$ as the origin, one can realize the
same function by projectively mapping the points to $\S^{d-1}$,
which implies that $\v_p^{(d)}$ is geometrical cut volume.
Measure $0$ argument take care of the degeneracies. Finally,
by Theorem~\ref{th:gcuts-scuts}, every geometrical hypercut 
is a (combinatorial) hypercut, and thus we get an $\ell_1$ volume
with the same values as the original Euclidean volume.
\end{proof}

The main negative result of this section is the following lower bound on distortion
of approximating general 2-volumes by $\ell_1$ 2-volumes. 
On the way
we define a $d$-dimensional analog of the graphical 'edge-expansion',
which is of independent interest.
\begin{theorem}\label{thm:main-distortion}
There exists a 2-volume function  such that any $\ell_1$
volume distorts it by at least $\tilde{\Omega}(n^{1/5})$.
\end{theorem}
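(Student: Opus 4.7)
The plan is to adapt the classical Poincar\'e-style lower bound used for $\ell_1$ metric embeddings of constant-degree expanders to the setting of $\ell_1$ $2$-volumes. The first step is to record a duality lemma: since by Theorem~\ref{thm:decomp} every $\ell_1$ $2$-volume is a nonnegative combination of hypercut volumes, a straightforward LP-duality argument gives, for any $2$-volume $\v^{(2)}$,
\[
\dist\bigl(\v^{(2)},\,\ell_1\bigr) \;\geq\; \sup_{\mu_+,\mu_- \geq 0}\; \frac{\sum_\sigma \mu_+(\sigma)\,\v^{(2)}(\sigma)}{\sum_\sigma \mu_-(\sigma)\,\v^{(2)}(\sigma)}\,,
\]
where the supremum is over pairs of nonnegative measures on $K_n^{(2)}$ satisfying the Poincar\'e-type inequality $\mu_+(C) \leq \mu_-(C)$ for every $2$-hypercut $C$. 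This is the direct $2$-dimensional counterpart of the Linial--London--Rabinovich dual characterization of $\ell_1$ distortion, and it is compatible with the abstract volume cone because cut volumes span the $\ell_1$ cone.

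The second step is to introduce the promised $2$-dimensional analog of edge expansion, which I would call \emph{face expansion}: a subcomplex $X\subseteq K_n^{(2)}$ has face expansion $\eta$ if $|X\cap C|\;\geq\; \eta\,\cdot\,|C|\cdot|X|/\binom{n}{3}$ for every $2$-hypercut $C$ above a certain size threshold. The hard $2$-volume will then be the shortest-cap volume $\v_X^{(2)}$ (as defined in Section~\ref{sec:vol}) induced by an $X$ with large face expansion and unit weights on its $2$-simplices; the Poincar\'e pair will be $\mu_-$ uniform on $X$ and $\mu_+$ uniform on $K_n^{(2)}$, and face expansion is exactly the condition making this pair admissible in the duality bound above.

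For the actual construction I would take $X$ to be a random Linial--Meshulam $2$-complex $Y_2(n,p)$ at a suitable density $p$. Concentration of $|X\cap C|$ around $p|C|$ for a single $C$ follows from a Chernoff bound; the union bound over hypercuts is made possible by Theorem~\ref{thm:cut-size1}, which limits the number of $2$-hypercuts of size $\alpha n$ to $(4n)^{3\alpha+1}$, so Chernoff applies uniformly down to $|C|\sim (\log n)/p$. Hypercuts below this threshold are few in number and have very restricted links (by Claim~\ref{cl:link} and Theorem~\ref{th:2-link}); they can be neutralized either by augmenting $\mu_-$ on the simplices they control or, in the smallest case $|C|=n-2$, by a direct combinatorial argument. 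This yields a valid Poincar\'e pair with $\mu_+(C) \leq O(1)\cdot\mu_-(C)$ for every hypercut.

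Finally, the distortion bound is obtained by estimating $\E_{\mu_+}\v_X^{(2)}$, i.e., the typical minimum size of a $\sigma$-cap $D_\sigma\subseteq X$ in a random $2$-complex at density $p$. The main obstacle, and the step that produces the precise exponent $1/5$, is exactly this quantitative estimate: one must prove that for most $\sigma\in K_n^{(2)}$, every $2$-cycle through $\sigma$ whose remaining $2$-simplices lie in $X$ has size polynomial in $n$, and then balance the density $p$ (forced to be not too small by the face-expansion/union-bound constraints) against this typical cap size. The local-to-global analysis of minimum caps in random $2$-complexes, together with the careful treatment of the small hypercuts that evade Chernoff concentration, is the technically demanding heart of the argument; the number $1/5$ should emerge as the solution to the resulting optimization in $p$.
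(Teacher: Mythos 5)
Your proposal captures the paper's high-level strategy exactly: a Poincar\'e-type lower bound (your LP-duality statement and the paper's form $F_K(\v)=\sum_{\sigma\in K}\v(\sigma)/\av(\v)$ are the same device), the notion of face expansion as the analogue of edge expansion, a random $Y_2(n,p)$ complex near the Linial--Meshulam connectivity threshold, and Chernoff plus the hypercut-counting bound of Theorem~\ref{thm:cut-size1} for the union bound. The worry about small hypercuts is unnecessary: by the claim on minimum hypercut size, every $2$-hypercut has size at least $n-2$, so at $p=\Theta(\log n/n)$ all hypercuts are comfortably above the concentration threshold; no augmentation of $\mu_-$ is needed.

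The genuine gap is in the other half, the lower bound on $\av(\v_X)$, which you defer as ``the technically demanding heart.'' The paper's mechanism here is Lemma~\ref{lem:cycle_card}: any $2$-cycle $Z$ satisfies $|V(Z)|\leq |Z|/2+2$, proved by resolving non-simple links and invoking the Euler-characteristic bound for closed surfaces. This is what makes the first-moment computation close: the number of $k$-cycles through a fixed $\sigma$ is controlled by $\binom{n}{k/2+2}\cdot (k/2+2)^{3k} \lesssim n^2(k^{2.5}\sqrt n)^k$, and against $p^{k-1}=(\Theta(\log n/n))^{k-1}$ the series is summable only up to $k\lesssim n^{1/5}/\mathrm{polylog}$. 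Without this vertex bound the count of short cycles would be roughly $n^k$ and the first moment would diverge already at $k=O(1)$, so the argument would not get off the ground. Relatedly, your remark that ``$1/5$ should emerge as the solution to the optimization in $p$'' is a misconception: $p$ is not a free parameter to optimize, since increasing $p$ only shrinks caps, and the connectivity/expansion requirement forces $p=\Theta(\log n/n)$; the exponent $1/5$ comes solely from balancing $(\sqrt n)^k$ from the vertex bound against $n^{-(k-1)}$ from the density, not from tuning $p$.
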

Let us first outline the proof. Using the methods originally developed 
for the one-dimensional case, we construct a connected 2-dimensional 
simplicial complex $K$ with unit weights on its 2-simplices, such that 
on one hand is has a constant {\em normalized expansion}, 
and on the other hand $\v_K$ has large average value. The existence
of such $K$ implies that distortion of embedding $\v_K$ into
$\ell_1$ is large. Formally, given a $K$ as above,
consider the following Poincare-type form over the 2-volumes:
\begin{equation}\label{eq:form}
  F_K(\v) = \frac{\sum_{\sigma \in K} \v(\sigma)}{\av(\v)}\,,
\end{equation}
where $\av(\v) \;=\; 
\frac{1}{{n \choose 3}} \cdot \sum_{\sigma \in \K} \v(\sigma)$. 
By a standard argument frequently used in the theory of metric spaces,
the distortion of embedding $\v_K$ into $\ell_1$ is lower-bounded by
\begin{equation}\label{eq:form1}
  \dist(\v_K \hookrightarrow \ell_1) \;\geq\; \frac{\min_{vol \in \ell_1} 
{F_K( vol)}}{F_K(\v_K)}\,.  
\end{equation}
Keeping in mind that $K$ is unit-weighted, and that any $\v \in \ell_1$
is a nonnegative combination of cut-volumes, we conclude that the above
minimum necessarily occurs on cut-volume, and thus Eq.~\ref{eq:form}
becomes:
\begin{equation}\label{eq:form2}
  \dist(\v_K \hookrightarrow \ell_1) \;\geq\; \av(\v_K) \cdot 
\min_{C:~\mbox{2-hypercut}} \frac{|K \cap
   C|/|C|}{|K|/ {n \choose 3}} 
\end{equation}
Observe that for a graph $G$ the analogous expression 
\[
\min_{C=E(A,\overline{A}):~\mbox{cut}} \frac{|E(G) \cap C|/|C|}{|E(G)|/{{n \choose 2}}} 
\;=\; \min_{A \subset V, |A| \leq n/2} \left\{
{{|E(A,\overline{A})|}\over |A|} \cdot 
{1 \over {\mbox{average degree of}~ G}} \right\}
\cdot {{n-1}\over{n - |A|}}\,,
\]
is the normalized edge expansion of $G$ up to a factor of 2. 
By analogy, we define 
\begin{definition}
Let the {\em normalized (face) expansion} of $K \subseteq \K$ be the 
value of
\[
\min_{C:~\mbox{2-hypercut}} \frac{|K \cap C|/|C|}{|K|/{{n \choose 3}}}\,. 
\]
\end{definition}
I.e., the normalized expansion of $K$ is the ratio between the minimum density
of $K$ with respect to a hypercut, and the density of $K$ with respect to $\K$.

Let $\K(n,p)$ be the 2-dimensional analog of 
the Erd\"os-R\'enyi $G(n,p)$, where $\sigma \in \K$ is selected with probability 
$p = 25\log n/n$ randomly and independently from the others. 
Theorem~\ref{thm:main-distortion} follows from the following two Lemmas.
\begin{lemma}\label{cl:123}
For $K \in K^{(2)}(n,p)$ as above, $\av(vol_K) \geq \tilde{\Omega}(n^{1/5})$ 
with probability $1-o(1)$.
\end{lemma}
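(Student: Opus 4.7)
The approach is a first-moment argument. For each $\sigma \in K_n^{(2)}$, I will upper-bound the probability that $\v_K(\sigma)$ is smaller than $T := n^{1/5}$ by counting potential caps of $\sigma$ that could lie in $K$, then use independence of triangle-selections in $K^{(2)}(n,p)$ and Markov's inequality. Summing over all $\binom{n}{3}$ triangles will show that with probability $1-o(1)$ the number of triangles with $\v_K(\sigma) \leq T$ is $o(n^3)$, so the average volume is at least $T(1-o(1)) = \tilde\Omega(n^{1/5})$.

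\textbf{Counting caps.} A cap $D$ of size $t$ is a set of $t$ triangles with $\sigma \cup D$ a 2-cycle. Over $\Z_2$ a minimum cap is supported on a single simplicial bordered surface, since any closed sub-chain in a cap can be excised without changing the boundary while strictly decreasing the size. For such a surface with $F=t$ faces and a 3-edge boundary, the face-edge relation $3t = 2(E-3) + 3$ gives $E = (3t+3)/2$, and Euler's formula $V - E + F = \chi \le 1$ yields
\[
V \;=\; \chi + \tfrac{t+3}{2} \;\leq\; \tfrac{t+5}{2}.
\]
So at most $(t-1)/2$ interior vertices need to be chosen, for at most $n^{(t-1)/2}$ possible vertex sets; a Catalan-type estimate bounds the number of combinatorial types of rooted triangulated disks by $C^t$ for some absolute constant $C$. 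Hence the number of possible caps of size $t$ for a fixed $\sigma$ is at most $C^t\,n^{(t-1)/2}$.

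\textbf{Probability and averaging.} A specific cap of size $t$ lies in $K$ with probability $p^t = (25\log n / n)^t$, so by a union bound
\[
\P[\v_K(\sigma) \leq T] \;\leq\; \sum_{t=3}^{T} C^t\,n^{(t-1)/2}\,p^t \;=\; \sum_{t=3}^{T} (25C\log n)^t\, n^{-(t+1)/2}.
\]
This is a geometric-type series with ratio $(25C\log n)^2/n = o(1)$, so it is dominated by its $t=3$ term and is $O(\log^3 n/n^2)$. Thus $\E[|\{\sigma : \v_K(\sigma) \leq T\}|] = O(n\log^3 n)$ and by Markov's inequality only $o(n^3)$ triangles have small volume with probability $1-o(1)$. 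Since $p = 25\log n/n$ lies well above the Linial--Meshulam threshold, w.h.p.\ $H^1(K;\Z_2) = 0$, so every $\sigma$ actually has a finite cap in $K$. Combining, w.h.p.\ $\av(\v_K) \geq T(1-o(1)) = \tilde\Omega(n^{1/5})$.

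\textbf{Main obstacle.} The delicate step is the reduction in the counting: a priori a cap is any $\Z_2$-chain with boundary $\partial\sigma$, and such chains admit no obvious Euler-type vertex bound. The key observation—that a minimum cap is supported on a single simplicial bordered surface (higher-genus or multi-component caps only grow in size)—is what enables the $V \le (t+5)/2$ vertex estimate and the resulting $n^{-(t+1)/2}$ decay of the expected number of caps. Once this structural reduction is in place, the rest of the argument is a routine first-moment calculation.
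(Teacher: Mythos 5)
Your overall strategy — a first-moment argument counting potential small caps and invoking Markov — is exactly the one the paper uses, and the vertex bound $V \leq (t+5)/2$ you derive is the same as the paper's Lemma~\ref{lem:cycle_card} (for a cycle of size $k=t+1$ this reads $|V| \leq k/2+2$). However, there is a genuine gap in the cap-counting step, and it is not minor: your bound of $C^t n^{(t-1)/2}$ on the number of caps of size $t$ is too strong to be true.

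The clearest symptom is an internal inconsistency. Your series $\sum_{t\geq 3}(25C\log n)^t n^{-(t+1)/2}$ has ratio $o(1)$, so it converges to $O(\log^3 n/n^2)$ even when summed to $t=\infty$. This would give $\Pr[\v_K(\sigma)<\infty]=O(\log^3 n/n^2)$ for every fixed $\sigma\notin K$, hence in expectation only $O(n\log^3 n)$ triangles would have any finite cap in $K$ at all. But, as you yourself note, $p$ is above the Linial--Meshulam threshold, so with high probability $K$ is connected and \emph{every} $\sigma$ has a finite cap. The two statements are incompatible, so the cap count must be off.

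The source of the error is twofold. First, the structural claim that a minimum cap is supported on a single simplicial bordered surface is asserted, not proved, and it is not obvious: minimality over $\Z_2$ (no proper sub-cycle) does not rule out edges of degree $>2$ or vertices whose links are non-simple Eulerian graphs. Second, even granting the surface reduction, a bordered surface with a triangular boundary need not be a disk; it can have arbitrary genus, still satisfying $\chi\leq 1$ and hence the same vertex bound, yet the number of combinatorial types of triangulations summed over all genera grows super-exponentially in $t$, so the Catalan-type bound $C^t$ does not apply. The paper sidesteps both issues. Its Lemma~\ref{lem:cycle_card} obtains the vertex bound for \emph{all} 2-cycles by a vertex-splitting reduction (which may increase the vertex count, so only an upper bound on the original $|V(Z)|$ is claimed), and it then counts cycles on a vertex set of size $k/2+2$ by the safe but crude bound $(k/2+2)^{3k}$ rather than by combinatorial type. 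That extra factor of roughly $k^{\Theta(k)}$ over your $C^k$ is exactly what restores consistency with connectivity while still yielding $\lambda = \tilde\Omega(n^{1/5})$. To repair your proof you would either have to prove and correctly use the surface/genus structure (and then account for all genera), or, more simply, replace the Catalan count by the paper's cruder combinatorial count.
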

\begin{lemma}\label{lm:expander}
The face expansion of $K \in K^{(2)}(n,p)$ is almost surely $\geq 0.5$.
\end{lemma}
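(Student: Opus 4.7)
The plan is a concentration-plus-union-bound argument, combining a multiplicative Chernoff tail for each 2-hypercut with the counting bound of Theorem~\ref{thm:cut-size1}. The goal is to verify, with probability $1 - o(1)$, that every 2-hypercut $C \subseteq K_n^{(2)}$ satisfies
\[
\frac{|K \cap C|}{|C|} \;\geq\; \tfrac{1}{2} \cdot \frac{|K|}{{n \choose 3}}\,.
\]

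First I would concentrate $|K|$ itself: since $|K|$ is Binomial with mean $p{n\choose 3} = \Theta(n^2 \log n)$, Chernoff gives $|K| = (1 \pm o(1))\, p{n\choose 3}$ with probability $1 - \exp(-\Omega(n^2))$. Conditional on this event, it suffices to show $|K \cap C| \geq \tfrac{1}{2}(1 + o(1))\,p|C|$ for every 2-hypercut $C$. For any fixed $C$ of size $m$, the triangles of $C$ are included in $K$ independently with probability $p$, so $|K \cap C| \sim \text{Bin}(m, p)$, and a sharp multiplicative Chernoff bound then gives
\[
\P\bigl[|K \cap C| < \tfrac{1}{2}pm\bigr] \;\leq\; \exp(-\gamma \cdot mp),
\]
for the absolute constant $\gamma = \tfrac{1}{2}(1 - \log 2) \approx 0.153$.

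Next I would union bound. By Theorem~\ref{thm:cut-size1}, the number of 2-hypercuts of size $\alpha n$ is at most $(4n)^{3\alpha+1}$, and the minimum-hypercut bound forces $\alpha \geq 1 - O(1/n)$. With $p = 25\log n/n$, the per-cut failure probability is $n^{-25\gamma \alpha}$, and the constant $25$ is calibrated so that the product $(4n)^{3\alpha+1} \cdot n^{-25\gamma \alpha}$ is geometrically small in $\alpha$, and summing (or integrating) over all admissible cut sizes yields $o(1)$.

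The main obstacle is the smallest cuts, where $\alpha \approx 1$: both the counting bound of Theorem~\ref{thm:cut-size1} and the Chernoff exponent are polynomial in $n$, and the balance is delicate. Fortunately, the smallest hypercuts are highly structured, each determined by a small V-connected link via Theorem~\ref{th:2-link}: a cut whose link has exactly $k$ edges is described by only $O(n^{k+1})$ possibilities, strictly tighter than the crude $(4n)^{3\alpha+1}$. Re-parametrizing the union bound by the link size $k$ (so that cut size is $\approx kn$ and the Chernoff exponent is $\Theta(k \log n)$) therefore leaves plenty of room for the smallest cuts; for larger cuts the union bound is very loose and the analysis is routine.
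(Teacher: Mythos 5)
Your proposal follows the same overall strategy as the paper: concentrate $|K|$ so that $|K|/\binom{n}{3}$ can be replaced by $p$, apply a multiplicative Chernoff bound to the binomial variable $|K\cap C|$ for each fixed hypercut $C$, and then union-bound over all $2$-hypercuts using the counting estimate of Theorem~\ref{thm:cut-size1}. Up to the choice of Chernoff constant (your $\gamma=\tfrac{1}{2}(1-\log 2)$ versus the paper's $1/8$) the first two steps are essentially identical.

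Where you diverge is in the union bound, and this is in fact the most substantive part of your proposal. You correctly flag that the count $(4n)^{3\alpha+1}$ is too coarse near the minimum hypercut size $\alpha\approx 1$. Indeed, if one plugs $m_s\le(4n)^{1+3s/n}$ and the failure bound $e^{-ps/8}$ (or even the sharper $e^{-\gamma p s}$) directly into the union bound with $p=25\log n/n$, the leading term at $s=n-2$ contributes roughly $n^{4}\cdot n^{-25/8}=n^{7/8}$, and the full geometric sum evaluates to $\Theta(n^{1.875}/\log n)$ rather than $o(1)$; the displayed inequality in the paper does not actually close as written. Your re-parametrization by link size $k$ is exactly the right way to repair this, since (as already implicit in the proof of Theorem~\ref{thm:cut-size1}) counting V-connected graphs on $\le k+1$ vertices gives the much sharper $O(n^{k+1})$ estimate for cuts with a size-$k$ link, and for $k=1,2,3$ this brings the counting term well below the $n^{-\Theta(1)}$ per-cut failure probability.

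The one genuine gap in the proposal is the assertion ``cut size is $\approx kn$,'' which is stated but not justified, and which is indispensable for the Chernoff exponent to scale as $\Theta(k\log n)$. What is actually needed (and suffices) is a lower bound: if a $2$-hypercut $C$ is induced by a V-connected link $G=\Link_v(C)$ with $k$ edges, then $|C|\ge k(n-k-2)$. This follows because $G$ lives on at most $k+1$ vertices, and for every edge $(a,b)\in G$ and every vertex $w\notin V(G)\cup\{v\}$ the triangle $\{a,b,w\}$ meets $G$ in exactly one edge and hence lies in $C$; these $k\cdot(n-|V(G)|-1)$ triangles are pairwise distinct. With this lemma in hand, the $k$-indexed union bound becomes $\sum_{k\ge 1} n^{k+1}\cdot k^{O(k)}\cdot n^{-c\gamma k\,(1-o(1))}$ with $c=25$, which is dominated by its $k=1$ term and is $o(1)$; you should state and prove that lemma explicitly to make the argument complete.
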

Observe that  Lemma~\ref{lm:expander} implies that $K$ is connected, since if
all 2-hypercuts meet $K$, then by Corollary~\ref{thm:blocker} $K$ must contain a 
(spanning) 2-hypertree. Thus, it strengthens the main result of~\cite{linial-meshulam} 
at the price of getting worst constants. 

Before starting with the proof of Lemma~\ref{cl:123}, let us first 
establish the following combinatorial result.
\begin{lemma}\label{lem:cycle_card}
Let $Z$ be a 2-dimensional cycle $Z$, then, \;$|V(Z)| \;\leq\; |Z|/2+2$.
\end{lemma}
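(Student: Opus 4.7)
The plan is to combine an easy edge/triangle double count with an Euler-characteristic estimate on the $2$-complex carried by $Z$. Because $\partial Z = 0$, every edge $\tau$ that appears as a face of some triangle of $Z$ must lie in an even, hence at least $2$, number of triangles of $Z$. Letting $E(Z)$ denote this set of edges and $\mu_Z(\tau)$ the corresponding multiplicity, double counting edge-triangle incidences gives
\[
3|Z| \;=\; \sum_{\tau \in E(Z)} \mu_Z(\tau) \;\geq\; 2|E(Z)|,
\]
so $|E(Z)| \leq \tfrac{3}{2}|Z|$.

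Next I would view $Z$, together with all of its subfaces, as a simplicial $2$-complex $K$ and compute its Euler characteristic two ways:
\[
|V(Z)| - |E(Z)| + |Z| \;=\; \chi(K) \;=\; b_0(K) - b_1(K) + b_2(K),
\]
with Betti numbers taken over $\Z_2$. Since $Z$ is a simple $2$-cycle (as it is in all the places where this lemma is applied --- e.g., the unique cycle created by a hypertree plus a simplex of Theorem~\ref{cor:T_sigma}, the boundary of a $3$-simplex, or a triangulation of $S^{2}$ as in Theorem~\ref{th:gcuts-scuts}), the only nonzero element of $\ker\partial_2$ supported on $K$ is $Z$ itself, so $b_2(K) = 1$; and since $Z$ is connected, $b_0(K) = 1$. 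Hence $\chi(K) = 2 - b_1(K) \leq 2$.

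Combining the two bounds,
\[
|V(Z)| \;=\; \chi(K) + |E(Z)| - |Z| \;\leq\; 2 + \tfrac{3}{2}|Z| - |Z| \;=\; \tfrac{|Z|}{2} + 2,
\]
which is exactly the claim, tight on any simplicial triangulation of $S^{2}$ and in particular on the tetrahedron boundary. The only step requiring care is the inequality $\chi(K) \leq 2$: it really does need $Z$ to be connected and simple (a disjoint union of several tetrahedron boundaries already violates the claimed inequality). I would therefore state these hypotheses explicitly in the lemma, or equivalently rephrase the topological input in elementary combinatorial form, namely ``the $1$-skeleton of $Z$ is connected, and no proper nonempty subset of $Z$ is a $2$-cycle,'' from which the bounds $b_{0}=b_{2}=1$ follow immediately and the rest of the argument is purely arithmetic.
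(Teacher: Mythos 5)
Your proof is correct, and it takes a genuinely different and in fact cleaner route than the paper's. The paper first performs a vertex-splitting surgery at every vertex $v$ whose link $\Link_v(Z)$ is not a simple cycle, replacing $v$ by one copy per edge-disjoint cycle of the Eulerian graph $\Link_v(Z)$; this yields a new $2$-cycle $Z^*$ with $|Z^*|=|Z|$, $|V(Z^*)|\geq|V(Z)|$, and all links simple cycles, so that $Z^*$ is a disjoint union of triangulated closed surfaces. It then invokes the classification-level fact $\chi(\text{closed surface})\leq 2$, together with the manifold identity $2|E(Z^*)|=3|Z^*|$. You skip the surgery entirely: from $\partial Z=0$ every present edge has multiplicity at least $2$, giving the \emph{inequality} $2|E(Z)|\leq 3|Z|$, and then the Euler--Poincar\'e formula $|V|-|E|+|Z|=b_0-b_1+b_2$ (a linear-algebra fact) together with $b_0\leq 1$, $b_1\geq 0$, $b_2\leq 1$ gives $\chi\leq 2$ directly. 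This avoids both the surgery and the surface classification, which is a genuine simplification. You have also correctly spotted a gap in the lemma as stated: the bound fails for disconnected or non-simple cycles (a disjoint union of, or two vertex-sharing, tetrahedron boundaries already violates it), so $b_0=b_2=1$ is a real hypothesis, not a free observation. The paper's ``without loss of generality, assume there is a single surface'' sweeps exactly this under the rug; the lemma is only applied (e.g.\ in Lemma~\ref{cl:123} and Theorem~\ref{cor:T_sigma}) to simple cycles, so the applications are fine, but the statement should carry the hypothesis explicitly, as you propose. Note also that simplicity already implies connectedness of the carried $1$-skeleton (any connected component would support a proper non-empty sub-cycle), so ``simple'' alone suffices.
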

\begin{proof}
Clearly, $\Link_v(Z)$ is an Eulerian (1-dimensional)
graph. As long as there is a vertex $v \in V(Z)$ for
which $\Link_v(Z)$ is not a simple cycle, do the following. 
Let $A_1, \ldots, A_r$ be the decomposition of $\Link_v(Z)$  
into edge-disjoint cycles. We introduce a new copy of $v$, $v_i, i=1, \ldots r$ 
for each $A_i$, and replace each original 2-simplex $\{v,x,y\}$ containing $v$ with
a new 2-simplex $\{v_i,x,y\}$ where $(x,y) \in A_i$. This
yields a new simple cycle $Z'$. Carry on with the this process
on $Z'$ etc. Since each time we produce a new $2$-cycle with the same number of faces,
but less vertices whose link is not a simple cycle, the process must terminate with  a 
$2$-cycle $Z^*$ with all links being simple cycles. Such $Z^*$, using the 
language of algebraic topology, is a (vertex-) disjoint union of triangulations of 2-dimensional 
surfaces without boundary. Without loss of generality, assume that there is a single surface.
It is known~\cite{massey} that its Euler characteristics satisfies 
\begin{equation}\label{eq:119}
\chi(Z^*) \;=\; |V(Z^*)| - |E(Z^*)| + |Z^*| \;\leq\; 2
\end{equation} 
Observe that every edge $e$ in $Z^*$ appears in exactly two faces,
and thus $2|E(Z^*)| = 3 |Z^*|$. Plugging this into Equation (\ref{eq:119})
implies the Lemma for $|V(Z^*)|$, and hence for $|V(Z)|$. We note that
while this proof uses Equation (\ref{eq:119}), which is non-trivial
and outside of this context, there is also an elementary proof using
reduction to smaller $n$'s.
\end{proof}

Next, we address Lemma~\ref{cl:123}. 

\begin{proof} {\bf (of Lemma~\ref{cl:123})}
By Markov inequality $K$ almost surely contains $o(n^3)$ 2-simplices,
and thus $\av(\v_K)$ is determined by the 2-simplices $\sigma \not \in K$. 
For each such $\sigma$, $\v_K(\sigma)$ is the size of the smallest 
$K$-cap of $\sigma$, i.e., the minimum subset of simplices in $K$ that
together with $\sigma$ form a simple cycle.
Let us denote this cap by $\Cp_K(\sigma)$.
Thus, to show that $\av(\v_K)\geq \Omega(\lambda)$
(w.h.p.), it suffices to argue that the number of $\sigma \notin K$ for which the corresponding 
$\Cp_K(\sigma)$ has size  less than $\lambda$, is $o(n^3)$ (w.h.p). 
Let $N_\lambda$ be this number. 
%
Let $n_k$ be the number of simple cycles of size exactly
$k$ in $\K$. Then, 
  \begin{equation}
    \label{eq:no-cycles}
    E[N_\lambda] \;=\; \sum_{k=4}^\lambda k \cdot n_k \cdot p^{k-1} (1-p)
  \end{equation}
Now, by Lemma \ref{lem:cycle_card}, a cycle of size $k$ has at most
$k/2 +2 $ vertices. Fixing $t =  k/2 + 2$ vertices, the number
of size-$k$ cycles on these vertices is clearly bounded by $t^{3k}$. 
Hence $n_k \leq (k/2 + 2)^{3k} \cdot {n \choose (k/2 + 2)} \;\leq\; n^2 \cdot (k^{2.5} \sqrt{n})^{k}$.
Plugging this bound on $n_k$, and the value of $p$ 
into Equation (\ref{eq:no-cycles}), we get,
$$E[N_\lambda] \;\leq\; n^2 \sum_{k=4}^{\lambda} (k^{2.5} \cdot
  \sqrt{n})^k \cdot k \cdot \left(\frac{25 \log n}{n}\right)^{k-1} \;\leq\;
  \frac{n^3}{25 \log n}  \cdot \sum_{k=4}^{\lambda} k \left(\frac{k^{2.5} \cdot
  25\log n}{\sqrt{n}}\right)^k  $$ 
Choosing $\lambda = \frac{n^{1/5}}{50\log n}$, we conclude that 
$E[N_\lambda] = O(n \log^3 n) = \tilde{O}(n)$, and by the Markov inequality we are done.
\end{proof}
\\ 
\begin{proof} {\bf (of Lemma~\ref{lm:expander})}
For a hypercut $C$, let $\gamma_K(C) = \frac{|K \cap C|/|C|}{|K|/{n \choose 3}}$. 
We shall first estimate the probability that
$\gamma_K(C)<0.5$ for any {\em fixed} hypercut $C$, and then use
the union bound to conclude that almost surely no such
hypercut exists.

Observe first that $|K|$ is almost surely tightly concentrated around its
mean which is $E[K] = p \cdot {n \choose 3}$. Thus instead of discussing 
$\frac{|K \cap C|/|C|}{|K|/{n \choose 3}}$,
we may safely discuss $\frac{|K \cap C|/|C|}{|E[K]|/{n \choose 3}} =
\frac{|K \cap C|}{p \cdot |C|}$. 
Next, observe that $|K \cap C|$ is a sum of $|C|$ i.i.d Bernuli 
variables, and its expectation is precisely $p|C|$. 
Thus, by Chernoff bound,
\[
\Pr\left(\gamma_K(C) < 0.5\right) \;=\; \Pr\left(|K \cap C| < p \cdot |C|/2\right) \;\leq\;
e^{-p \cdot |C|/8}\,. 
\]
Let $m_s$ be the number of 2-hypercuts of size $s$ in $\K$. By 
Theorem~\ref{thm:cut-size1}, $m_s \leq  (4n)^{1+3s/n}$.
Thus, the union bound implies that the probability that a bad $C$ exists is at most
\[
\sum_{s\geq n-2} m_s \cdot e^{-p \cdot s/8} \;\leq\; 
4n \sum_{s\geq n-2} 
e^{\left(-{25 \over 8} {{\log n} \over n}  + 
{{3\log (4n)} \over n}\right)\cdot s} \;=\;  o(1)\,.
\]
\end{proof}
%
%
\subsection{Geometrical $\ell_1$ Volumes, Exact and Negative Type Function}
\label{sec:GEN}
By geometrical $\ell_1$ volumes we mean nonnegative sums of geometrical cut volumes.
As implied by Theorem~\ref{cl:Eu-gcuts}, Euclidean volumes belong to this class.
The following examples show that geometrical $\ell_1$ volumes capture 
other geometrically defined volume functions as well.
\\ \\
{\bf Example 1.} {\em Let $f$ be a nonnegative weighting of $(d-1)$-simplices of 
$K_n^{(\leq d)}$. Define a $d$-volume function $\v^{(d)}$ on $K_n^{(d)}$ by 
\[
\v^{(d)}(\sigma) ~=~ \sum_{\mbox{$(d-1)$-simplex $\tau \subset \sigma$}} f(\tau)\,.
\]
Then, $\v^{(d)}$ is a geometrical $\ell_1$ volume since it can be represented by 
$\v^{(d)}= \sum_{C_\tau, \tau ~\mbox{is} (d-1)-\mbox{simplex}} f(\tau) \cdot \v^{(d)}_{\tau}$, where  
$\v^{(d)}_{\tau}$ is a (geometrical) cut volume assigning 1 to the $d$-simplices 
containing $\tau$, and 0 to the rest. In particular, the Euclidean perimeter, surface 
area, etc., are geometric $\ell_1$ $d$-volumes. 
}
\\ \\
{\bf Example 2.} {\em Let $\cal H$ be a family of $n$ affine hyperplanes in 
general position in $\R^d$, indexed by $[n]$. Assign to every $d$-simplex of $K_n^{(d)}$ the Euclidean volume of the 
unique bounded cell of $\R^d$ formed by the corresponding $(d+1)$ hyperplanes. 
The resulting $d$-volume function (which can be interpreted as a measure of 
disagreement between the $(d+1)$-tuples of hyperplanes) is geometrical $\ell_1$. 

The proof is quite similar to that of Theorem~\ref{cl:Eu-gcuts}.  It
suffices to show that for each $p\in \R^d$, the set of $d$-simplices
$\sigma$ corresponding to the $(d+1)$ tuples of hyperplanes containing
$p$ in their bounded cell, is a (geometric) hypercut. Indeed, map each
hyperplane $h$ to $p_h\in \R^d$, the basis of the perpendicular from
$p$ to $h$. Clearly, $p$ is contained in the bounded cell of some
$(d+1)$ hyperplanes $\{h\}$ iff $p$ belongs to the geometrical simplex
$\{p_h\}$. The conclusion follows.  }
\\ \\
While so far our basic notions (i.e., boundary operator, cycles, and
coboundaries) were over $\Z_2$, in the context of the geometric
$\ell_1$ volumes it will be helpful to (shortly) discuss the
corresponding theory over $\R$. The presentation is not going to be
entirely self contained, and we refer the reader to the first
chapters of~\cite{munkres} for the background.

As before, we consider ${n \choose {d}} \times {n \choose {d+1}}$ incidence matrix $M_d$ 
over the reals, whose rows are indexed by (arbitrarily oriented) $(d-1)$-simplices, 
and the columns 
are indexed by (arbitrarily oriented) $d$-simplices. This time, $M_d(\tau,\sigma) = 1$ 
if $\tau \subset \sigma$ and its orientation is consistent with the orientation induced 
by $\sigma$ on its boundary, $M_d(\tau,\sigma) = -1$ if $\tau \subset \sigma$ but 
the orientations are inconsistent, and $M_d(\tau,\sigma) = 0$ if 
$\tau \not\subset \sigma$.

The boundary operator $\partial: K_n^{(d)} \mapsto K_n^{(d-1)}$ is defined by 
$M_d 1_\sigma = 1_{\partial \sigma}$, and is linearly extended to act on formal sums
of $d$-simplices with real coefficients. A $d$-coboundary 
$B \in \R^{n \choose {d+1}}$ (i.e., a real function on $d$-simplices) is a 
vector in the left image of $M_d$. That is, $B^T=x^T M_d$ for some 
$x \in \R^{n \choose {d}}$. 

An equivalent definition of a real $d$-coboundary, based on the fact that
$H^{d-1}(K_n^{(d)},\R)=0$, is:~ $B\in \R^{n \choose {d+1}}$ is a real $d$-coboundary 
iff it sums up to 0 on the boundary of any $(d+1)$-simplex. I.e.,
$B^T M_{d+1} = 0$.  

\begin{definition}
A real nonnegative function $F:K_n^{(d)} \mapsto \R_+^d$ is {\em exact} 
if it is an (entrywise) absolute value of a real $d$-coboundary of $K_n^{(d)}$. \par
A real nonnegative function $T:K_n^{(d)} \mapsto \R_+^d$ is of {\em negative type}
if it is a sum of (entrywise) squares of real $d$-coboundaries of 
$K_n^{(d)}$.
\end{definition}
The exact $d$-volumes can be viewed as a $d$-dimensional analog of line metrics.
Observe that exactness does not depend on the orientations used in the definition
of $M_d$. Observe also that in the alternative theory where the generalized triangle inequality 
of Eq.~(\ref{tri-ineq}) is required to hold only for orientable cycles (i.e., cycles over $\R$),  
the exact function are $d$-volumes. However, in some important case they are $d$-volumes 
according our original definition as well: 
\begin{theorem}
\label{th: NEG}
Cut volumes corresponding to geometrical $d$-hypercuts are exact. So are 
the Euclidean $d$-volumes realizable in $R^d$. Consequently, geometrical
$\ell_1$ $d$-volumes, as well as the sums of squares of Euclidean $d$-volumes, 
are of negative type.
\end{theorem}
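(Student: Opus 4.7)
The plan is to exhibit, for each volume class in the statement, a real signed function $B$ on oriented $d$-simplices whose entrywise absolute value is the given volume and which is manifestly a coboundary --- i.e., has the form $B = \delta f$ for an explicit $(d-1)$-cochain $f$. The NEG consequences will then fall out using that cut volumes take values in $\{0,1\}$ (hence equal the square of their certifying $B$) and that the NEG cone is closed under addition.

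For a geometrical hypercut $C$ arising from $\phi: V \to \S^{d-1}$, I would set
\[
f(\tau) \;=\; \frac{\text{signed solid angle subtended by } \phi(\tau) \text{ at the origin}}{\text{total volume of } \S^{d-1}}
\]
on oriented $(d-1)$-simplices and take $B := \delta f$. For an oriented $d$-simplex $\sigma$, $B(\sigma)$ is then the mapping degree (winding number) of $\phi|_{\partial \sigma}$ about the origin, which is $\pm 1$ when $0 \in \conv(\phi(\sigma))$ (the sign encoding orientation, well-defined by general position) and $0$ otherwise. Hence $|B| = \v_C$ and $B = \delta f$ is a coboundary by construction.

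For a Euclidean $d$-volume $\v$ realized by $\phi: V \to \R^d$, the signed determinantal volume
\[
B(\sigma) \;=\; \tfrac{1}{d!}\det\bigl[\phi(v_1)-\phi(v_0),\,\ldots,\,\phi(v_d)-\phi(v_0)\bigr],\qquad \sigma=[v_0,\ldots,v_d],
\]
has $|B(\sigma)|=\v(\sigma)$. To see that $B$ is a coboundary, take $g(\tau) = \tfrac{1}{d!}\det[\phi(v_0),\ldots,\phi(v_{d-1})]$ --- the signed volume of the cone from the origin over $\phi(\tau)$; multilinearity of the determinant, or equivalently the cone-over-boundary identity (Stokes' theorem for the constant $d$-form on $\R^d$), gives $\delta g = B$. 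Put differently, this is the classical identity that the alternating sum of signed $d$-volumes of the $d+2$ faces of any $(d+1)$-simplex in $\R^d$ vanishes.

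For the NEG consequences, note that a geometric cut volume satisfies $\v_C = B^2$ since $B\in\{-1,0,1\}$; hence any geometric $\ell_1$ volume $\sum_j \lambda_j\,\v_{C_j} = \sum_j (\sqrt{\lambda_j}\,B_j)^2$ is a sum of squared coboundaries, i.e., NEG. For a Euclidean $d$-volume $\v$ realizable in $\R^N$, the Cauchy--Binet formula gives $\v(\sigma)^2 = \sum_{|I|=d}\v_I(\sigma)^2$, where $\v_I$ is the Euclidean $d$-volume of the projection onto the coordinate $d$-subspace indexed by $I$; each $\v_I$ is exact by the previous step, so $\v^2$ is NEG, and summing over several $\v_i$ preserves this. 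The main obstacle is the geometric-hypercut case: I need to pin down the signed solid-angle cochain and the orientation conventions carefully so that $\delta f$ genuinely recovers the signed origin-containment indicator (rather than a sign-confused or rescaled variant). Once $f$ is in place, the remaining ingredients (the determinantal cone identity, Cauchy--Binet, and closure of NEG under addition) are standard.
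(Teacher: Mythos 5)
Your proof is correct, but it takes a genuinely different route from the paper's. The paper certifies exactness of the signed origin-indicator $B$ indirectly: using the fact that a function is a real $d$-coboundary iff it annihilates the boundary of every $(d+1)$-simplex (i.e., $B^T M_{d+1}=0$), it observes that the origin lies in either zero or two facets of any realized $(d+1)$-simplex, and in the latter case the two facets receive opposite signs under the uniform orientation, so the contributions cancel. For Euclidean $d$-volumes in $\R^d$, the paper then invokes the integral representation $\v^{(d)}=\int_{\R^d}\v_p^{(d)}\,dp$ from the proof of Theorem~\ref{cl:Eu-gcuts} and appeals to linearity. You instead produce the certifying $(d-1)$-cochain explicitly in both cases --- the normalized signed solid-angle cochain $f$ for a geometric hypercut, and the signed cone-volume cochain $g(\tau)=\frac{1}{d!}\det[\phi(v_0),\ldots,\phi(v_{d-1})]$ for a Euclidean $d$-volume --- and check directly that $\delta f$ (the winding number) and $\delta g$ (by multilinearity of the determinant) have the required absolute values. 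Your version is more constructive and self-contained (it does not rely on Theorem~\ref{cl:Eu-gcuts}); the paper's is shorter because it never names the cochain. The final NEG step --- $\v_C=B^2$ since $B\in\{-1,0,1\}$ for cuts, and Cauchy--Binet to decompose the square of a general Euclidean $d$-volume into squares of coordinate projections --- is the same in both. The one loose end you flag, pinning down orientation conventions so that $\delta f$ really is the signed origin-containment indicator, is a real bookkeeping task but not a mathematical gap: the identity ``discrete sum of outward solid angles $=$ (full sphere measure)$\times$(winding number)'' is standard, and once chosen consistently the normalization works in every dimension.
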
  
\begin{proof} {\bf (Sketch)} Consider a realization of $K_n^{(d)}$ in $\R^d$ defining
the geometrical hypercut $C$, or the $d$-Euclidean volume under the consideration, 
with all $d$-simplices oriented in the same manner. I.e., left to right for $d=1$,
clockwise for $d=2$, etc. 
Observe that 
the origin is contained in either zero or two $d$-simplices belonging
to the boundary of any 
$(d+1)$-simplex $\zeta$. In the latter case 
one of these simplices is necessarily oriented in a manner consistent with the 
orientation induced by $\zeta$, and the other is not. Hence, 
$(\v_C^{(d)})^T M_{d+1}=0$, and thus  $\v_C^{(d)}$ is exact. The Euclidean volume,
which is the integral of geometrical cut volumes defined by all $p\in \R^d$
with respect to a fixed realization of $K_n^{(d)}$, must also be exact by a linearity
argument. 

The second statement directly follows from the first for geometrical $\ell_1$ $d$-volumes,
as cut volumes take values 0/1. For general Euclidean volumes, recall that the
square of any Euclidean $d$-volume (no matter in what dimension it is realized) is the
sum of squares of its projections on all subsets of $d$ coordinates. I.e., it is a sum 
of squares of $d$-Euclidean volumes.
\end{proof}

To conclude this section, observe that Theorem~\ref{th: NEG} provides an alternative
proof of Theorem~\ref{th:gcuts-scuts}, and in fact a bit more: a geometrical hypercut
intersects not only every $\Z_2$-hypertree, but also any $\R$-hypertree. Indeed, any
$d$-coboundary of $K_n^{(d)}$, in particular an appropriately signed $v_C^{(d)}$, 
that takes value $0$ on a basis of the space of columns of $M_d$ 
(i.e., on a $\R$-hypertree), must be identically $0$ on $K_n^{(d)}$, contrary to the 
definition of $v_C^{(d)}$.  
%
\subsection{Dimension Reduction for $\ell_1$ Metrics and Volumes}\label{sec:reduc}
Given an $\ell_1$ $d$-volume $\v = \sum_{C \in {\mathcal C}} \lambda_C
\cdot v_C$, where ${\mathcal C}$ is a collection of $d$-hypercuts, ${\rm v}_C$ is the
cut volume associated with $C$, and $\lambda_C$ are positive reals,
$|{\mathcal C}|$ is the {\em cut-dimension} of this particular
representation of $\v$. We define the {\em cut-dimension} of $\;\v\;$ as
the minimum possible cut-dimension of any representation of it. 

Let the {\em cut cone} be the convex cone formed by all $\ell_1$ $d$-volumes
on $K_n^{(d)}$. The extremal rays of this cone are the cut-$d$-volumes.
\begin{claim}
\label{cl:cutcone}
The cut cone has full dimension. 
\end{claim}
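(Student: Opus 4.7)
The plan is to show that the $\R$-span of the hypercut indicator vectors is all of $\R^{n \choose {d+1}}$, the ambient space of $d$-volume functions on $K_n^{(d)}$; this is the linear-algebraic restatement of full dimensionality. Concretely, for every $d$-simplex $\sigma$, I will exhibit $1_{\{\sigma\}}$ as a real combination of hypercut indicators, which finishes the argument.

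The case $n = d+1$ is immediate since $K_n^{(d)}$ contains a single simplex. Assume henceforth $n \geq d+2$, fix $\sigma = \{v_0, \ldots, v_d\}$, and write $\tau_i = \sigma \setminus \{v_i\}$ for its $(d-1)$-faces. The workhorses will be partition hypercuts, which are honest hypercuts as noted after Theorem~\ref{th:gcuts-scuts}. A first observation is that the \emph{star} $S_\tau = \{\rho \in K_n^{(d)} : \tau \subset \rho\}$ of any $(d-1)$-simplex $\tau$ is a partition hypercut, arising from the partition into singletons $\{u\}$, $u \in \tau$, together with $V \setminus \tau$.

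The main step is to consider the partition hypercut $P$ associated to
\[
\{v_0\},\, \{v_1\},\, \ldots,\, \{v_{d-2}\},\, \{v_{d-1}, v_d\},\, V \setminus \sigma,
\]
a partition of $V$ into $d+1$ nonempty parts (the last part is nonempty because $n \geq d+2$). A simplex $\rho$ lies in $P$ iff it contains $\{v_0, \ldots, v_{d-2}\}$, exactly one of $\{v_{d-1}, v_d\}$, and exactly one vertex outside $\sigma$. Splitting on this choice yields
\[
P \;=\; (S_{\tau_d} \setminus \{\sigma\}) \,\sqcup\, (S_{\tau_{d-1}} \setminus \{\sigma\}),
\]
the union being disjoint because any common element of $S_{\tau_d}$ and $S_{\tau_{d-1}}$ must contain $\{v_0, \ldots, v_d\} = \sigma$. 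Translating to indicator vectors gives $1_P = 1_{S_{\tau_d}} + 1_{S_{\tau_{d-1}}} - 2 \cdot 1_{\{\sigma\}}$, and hence
\[
1_{\{\sigma\}} \;=\; \tfrac{1}{2}\bigl(1_{S_{\tau_d}} + 1_{S_{\tau_{d-1}}} - 1_P\bigr),
\]
which is the desired representation.

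No step poses a real obstacle; the only substantive verification is the combinatorial decomposition of $P$ above, which follows by a direct case analysis. The construction is the direct higher-dimensional analog of the classical identity $1_{\{u,v\}} = \tfrac{1}{2}(\delta(\{u\}) + \delta(\{v\}) - \delta(\{u,v\}))$ used to establish full dimensionality of the graph cut cone.
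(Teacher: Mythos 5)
Your proof is correct, and it takes a primal route where the paper takes a dual one. The paper assumes a linear functional $f$ vanishes on every hypercut indicator, invokes Theorem~\ref{cl:cut-boundary} to upgrade this to vanishing on every coboundary, and then applies the identity $f(B_1)+f(B_2)-f(B_{12})=2f(\sigma)$ for the coboundaries $B_1,B_2,B_{12}$ induced by $\{\tau_1\}$, $\{\tau_2\}$, $\{\tau_1,\tau_2\}$. You instead exhibit $1_{\{\sigma\}}$ directly in the span of hypercut indicators. The underlying algebra is in fact the same: your $S_{\tau_{d-1}}$ and $S_{\tau_d}$ \emph{are} the paper's $B_1$ and $B_2$, and your partition hypercut $P$ coincides with the paper's $B_{12}$ (both are exactly the set of $d$-simplices containing precisely one of the two faces). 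What you add is the observation that $B_{12}$ is not merely a coboundary but an honest \emph{partition hypercut}, which lets you bypass the disjoint-union decomposition of Theorem~\ref{cl:cut-boundary} entirely and give a self-contained, constructive representation. The trade-off is a small amount of case-checking (the decomposition of $P$ and the separate treatment of $n=d+1$) in exchange for dropping a dependency on an earlier structural theorem; the paper's version is shorter given that theorem, yours is more elementary without it. Either way the claim is established.
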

\begin{proof}
Assume that a function $f:K_n^{(d)} \mapsto \R$ sums up to 0 on every 
hypercut (and therefore, by Theorem~\ref{cl:cut-boundary}, on any $d$-coboundary 
of $K_n^{(d)}$). It suffices to show that $f$ is identically 0.   
Let $\sigma$ be any $d$-simplex in $K_n^{(d)}$, and let $\tau_1,
\tau_2$ be distinct 
$(d-1)$-dimensional faces of $\sigma$. Let $B_1,B_2$ and $B_{12}$ be the $d$-coboundaries 
in $K_n^{(d)}$ induced by $\tau_1$, $\tau_2$ and $\{\tau_1,\tau_2\}$ respectively. 
Then, $0\;=\; f(B_1) + f(B_2) - f(B_{12}) \;=\; 2f(\sigma)$, and the claim follows.
\end{proof}

Since the cut cone is a subset of $\R^{{n \choose {d+1}}}$, Caratheodory Theorem
implies that the cut-dimension of any $\v^d$ is at most ${{n \choose {d+1}}}$.
Moreover, since the cut cone has a full dimension, all but a 
0-measure subset of $\ell_1$ $d$-volumes have precisely this cut-dimension.  

The dimension reduction phenomenon is the dramatical drop in 
the cut dimension when one is allowed to replace an $\ell_1$-volume $\v$ by an 
$\epsilon$-close $\ell_1$-volume  $\v'$. The proximity in our case is 
measured by the point-wise ratio between $\v$ and $\v'$, which should lie within 
$(1 \pm \epsilon)$. 
I.e., the multiplicative distortion between $\v$ and $\v'$ is 
$\leq \frac{1+\epsilon}{1-\epsilon}$.   

We show that the dimension reduction phenomenon occurs for  
$\ell_1$-volumes for any $d$. For $d=1$ and, more generally, for geometrical $d$-volumes
of any dimension, we refine the argument, and get
a better bound. In order to do this, we rely on some general sparsification tools
to be developed and discussed in detail in the next chapter. Here we present
only the statements of these results, and then proceed to apply them in our
setting.

The geometric formulation is as follows.  Let $\cal C$ be a family of 
nonnegative vectors in  $\R^m$, and let $\cone({\cal C})$ be the convex cone spanned by it.
The goal is, given a vector $w \in \cone({\cal C})$, to produce a small subset 
${\cal C}' \subset {\cal C}$ and a vector $w' \in \cone({\cal C}')$ that (pointwise) 
approximates $w$ up to a multiplicative factor of $1 \pm \epsilon$. 

The same can be conveniently reformulated in the matrix notation. 
Let $M$ be a $m\times |{\cal C}|$ real nonnegative matrix. Then, given a nonnegative
vector $\lambda \in \R^{|{\cal C}|}$, the goal is to produce a new 
$\lambda' \in \R^{|{\cal C}|}$ such that on one hand $w'=M\lambda'$ approximates  
$w=M\lambda$ up to a multiplicative factor of $1 \pm \epsilon$, and on the other hand
$\lambda'$ has small support. The columns of $M$ are the vectors of 
$\cal C$, and $\lambda,\lambda'$ are coefficients of nonnegative combinations 
of these vectors.

An upper bound on the size of support of $\lambda'$ will be given 
in terms of certain parameters of the matrix $M$ alone, not depending on $\lambda$. 

\begin{definition}
The {\rm triangular rank} of a matrix $M$, $\,\trk(M)$, is the size of 
the largest lower-triangular square minor of $M$ with strictly positive
diagonal. The rows and the columns of the minor may appear in order 
different from that of $M$.  

The {\em square-root rank} of a nonnegative matrix $M$, $\,\srk(M)$, is the
the minimum possible rank (over $\R$) of a matrix $Q$ where 
$Q_{ij} = \pm \sqrt{M_{ij}}$. In particular, if $M$ is Boolean, then $Q$ ranges over 
all possible signings $\pm M$ of $M$. 
\end{definition}
\begin{theorem}
\label{thm:karger-spielman}
Let $M$ be an $m\times|{\cal C}|$ nonnegative matrix as before, and let 
$\lambda$ be a nonnegative weighting of $\cal C$. Then, for any 
$1>\epsilon >0$, there exists (and is efficiently constructible) 
another nonnegative weighting $\lambda'$ of $\cal C$ such that $M\lambda'$ 
approximates $M\lambda$ up to a multiplicative factor 
of $1 \pm \epsilon$, and $|\supp(\lambda')| = O(\,\srk(M)\,/\,\epsilon^2)$.

If $M$ is Boolean, a different construction yield the same with 
$|\supp(\lambda')| \,=\, O(\,\trk(M) \cdot \log m\,/\,\epsilon^2)$.
\end{theorem}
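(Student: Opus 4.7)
The plan is to produce $\lambda'$ by importance-sampling the columns of $M$ and rescaling the selected ones so that $\E[M\lambda'] = M\lambda$; the multiplicative $(1 \pm \epsilon)$ approximation will then follow from concentration. The two clauses correspond to two sampling regimes, adapted respectively from Spielman--Srivastava spectral sparsification and from the Bencz\'ur--Karger cut sparsifier.

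For the $\srk$ clause, let $Q$ be a signed square root of $M$ realizing the minimum rank, so $M_{ic} = Q_{ic}^2$ and $\rank(Q) = r := \srk(M)$. Factor $Q = UV^\top$ with $U \in \R^{m \times r}$ and $V \in \R^{|\mathcal{C}| \times r}$, and write $u_i$, $v_c$ for the rows of $U$, $V$ respectively. Then
\[
(M\lambda)_i \;=\; \sum_c \lambda_c\,(u_i^\top v_c)^2 \;=\; u_i^\top A\, u_i, \qquad A \;:=\; \sum_c \lambda_c\, v_c v_c^\top,
\]
so an entrywise $(1\pm\epsilon)$ approximation of $M\lambda$ follows from a Loewner-order approximation $(1-\epsilon)A \preceq A' \preceq (1+\epsilon) A$, where $A' := \sum_c \lambda'_c v_c v_c^\top$. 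This is a textbook spectral-sparsification task for the sum of $|\mathcal{C}|$ rank-one PSD matrices $\lambda_c v_c v_c^\top$: leverage-score sampling with $p_c \propto \lambda_c\, v_c^\top A^{+} v_c$ (whose total is $r$), combined with a Batson--Spielman--Srivastava-style selection, outputs $\lambda'$ supported on $O(r/\epsilon^2)$ columns.

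For the Boolean $\trk$ clause I would take $\lambda'_c = \lambda_c / p_c$ with probability $p_c$ and $0$ otherwise, independently across $c$. Because $M$ is $\{0,1\}$-valued, each $(M\lambda')_i$ is a sum of independent bounded nonnegative random variables with mean $(M\lambda)_i$, so the scalar Chernoff bound yields the $(1\pm\epsilon)$ guarantee for row $i$ provided $p_c \gtrsim \lambda_c/(\epsilon^2 (M\lambda)_i)$ for every $c$ with $M_{ic} = 1$. Taking $p_c$ to be the maximum of these requirements over the relevant rows makes the scheme valid, and a union bound over the $m$ rows adds a single $\log m$ factor. The quantitative heart is then to bound $\sum_c p_c$ by $O(\trk(M)\cdot \log m / \epsilon^2)$, which I would establish by a peeling argument along the ordering witnessing $\trk(M) = t$: each column is charged to the row where its entry first appears as a triangular-diagonal pivot, so the charges along the triangle telescope and contribute $O(t)$, and the leftover ``non-pivot'' columns are handled inductively on the residual matrix obtained after stripping the current triangular block.

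The main obstacle in both parts is relating the sampling probabilities, which naturally depend on $\lambda$, to an invariant of $M$ alone. For the srk bound this means verifying that the leverage mass always sums to $\rank(A) \le r$ irrespective of $\lambda$, and then invoking the BSS selection to shave the $\log r$ factor that naive matrix-Bernstein concentration would leave behind. For the trk bound the challenge is that the witnessing triangular minor directly accounts only for $t$ pivot columns, so the charging scheme has to iteratively absorb the ``non-pivot'' columns into the residuals; once this is in place, the remaining estimates are a standard Chernoff-plus-union-bound calculation.
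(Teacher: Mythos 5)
For the $\srk$ clause your argument is essentially the paper's own (Theorem~\ref{thm:general_tech2}): factor a signed square root $Q$ realizing $\srk(M)$, write each coordinate of $M\lambda$ as a quadratic form $u_i^\top A u_i$ with $A=\sum_c\lambda_c v_cv_c^\top$, and invoke Batson--Spielman--Srivastava (the paper's Theorem~\ref{th:spiel1}) to get support size $O(\srk(M)/\epsilon^2)$. This part is correct and there is nothing further to say.

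For the $\trk$ clause, the per-row concentration step (Bernstein with $p_c\gtrsim \lambda_c/(\epsilon^2 (M\lambda)_i)$ for every incident row $i$, then a union bound over the $m$ rows) is sound; it is a legitimate alternative to the paper's Abel-summation/layered Chernoff argument. The gap is in bounding $\sum_c p_c$. Your proposed charging scheme --- ``charge each column to the row where its entry first appears as a triangular-diagonal pivot, telescope, induct on the residual'' --- does not work as stated: a single witnessing $t\times t$ triangle directly accounts for only its $t$ pivot columns; stripping those rows and columns need not decrease the triangular rank of the residual, so the recursion does not bottom out in $O(t)$ stages; and your $p_c$'s for non-pivot columns are defined from the \emph{original} row sums $(M\lambda)_i$, which do not match the residual's row sums, so the inductive step is not even well-posed. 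What the paper does instead is construct a Bencz\'ur--Karger \emph{strength} function $s:\C\to\N$ by an iterative peeling on the whole matrix (Definition~\ref{def:strength}): repeatedly locate a row $f$ with smallest positive residual sum $w_\ast(f)=m$, assign $s(c)=m$ to every surviving column incident to $f$, zero those columns, and iterate. Lemma~\ref{lem:11} then delivers exactly what your sketch is missing: (1) $s(c)\le w(f)$ for every incident row $f$, hence $s(c)\le\min_{i:M_{ic}=1}(M\lambda)_i$; (2) $\sum_c 1/s(c)\le N$, the number of phases; and (3) $N\le\trk(M)$, proved by exhibiting one marked row and column per phase that together form a nonsingular lower-triangular minor. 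Point~(3) is the genuine use of the triangular rank, and it is obtained by \emph{producing} a triangular minor from the peeling, not by starting from a fixed one and charging to it. Your sampling probabilities are in fact bounded exactly because $\min_i (M\lambda)_i\ge s(c)$, but establishing $\sum_c p_c = O(\trk(M)\log m/\epsilon^2)$ requires the strength construction (or an equivalent global iterative peeling), which your proposal gestures at but does not supply.
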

Since $|{\cal C}|$ can be arbitrarily large or even infinite, 
"efficiently constructible" requires further explanation.
The input to the procedure is not the entire $M$ and $\lambda$, but only the
the nonzero values of $\lambda$, and the columns of $M$ corresponding
to them. The complexity is measured in terms of this
input. We further comment that $\supp(\lambda') \subseteq  \supp(\lambda)$. 

We are now ready to address the dimension reduction for $d$-volumes.
We start with general $d$.
\begin{theorem}\label{thm:main_reduction}
Let $\v$ be an $\ell_1$ $d$-volume on $n$ points, and 
let $0 < \epsilon < 1$ be a constant.
Then there exists (and is efficiently constructible) an $\ell_1$ $d$-volume 
$\v'$ that distorts $\v$ by at most a multiplicative factor of 
${1+\epsilon}\over {1-\epsilon}$, and the cut-dimension of $\v'$ is at most 
$O(n^d \log n/\epsilon^2)$, thus improving the trivial $O(n^{d+1})$.
\end{theorem}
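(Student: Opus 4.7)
The plan is to reduce the statement to a direct application of the Boolean case of Theorem~\ref{thm:karger-spielman}. Take $M$ to be the Boolean incidence matrix whose rows are indexed by the $d$-simplices of $K_n^{(d)}$ and whose columns are indexed by the $d$-hypercuts $C$ of $K_n^{(d)}$, with $M_{\sigma,C}=1$ iff $\sigma\in C$. Then any $\ell_1$ $d$-volume $\v=\sum_C \lambda_C\,\v_C$ is exactly $M\lambda$, where $\lambda\ge 0$ is the weighting of hypercuts. Feeding $(M,\lambda)$ into Theorem~\ref{thm:karger-spielman} produces a nonnegative $\lambda'$ supported on a subset of the original hypercuts such that $M\lambda'$ is pointwise within a factor of $1\pm\epsilon$ of $M\lambda=\v$; the resulting $\v'=\sum_C \lambda'_C\,\v_C$ is then the claimed approximating $\ell_1$ volume, with multiplicative distortion at most $(1+\epsilon)/(1-\epsilon)$, and cut-dimension $|\supp(\lambda')|=O(\trk(M)\cdot\log m/\epsilon^2)$, where $m=\binom{n}{d+1}$.

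The remaining task, and what I expect to be the main (though short) obstacle, is to bound $\trk(M)$ by $O(n^d)$. Here I would use the elementary fact that for a Boolean matrix $M$, $\trk(M)\le \rank_{\Z_2}(M)$: any lower-triangular Boolean minor with positive diagonal has $\Z_2$-determinant $1$, hence is nonsingular over $\Z_2$. It thus suffices to bound the $\Z_2$-rank of the columns of $M$. Every column $1_C$ is the characteristic vector of a $d$-hypercut, and by Theorem~\ref{cl:cut-boundary} every $d$-hypercut is a (minimal) $d$-coboundary; in particular the column space of $M$ over $\Z_2$ lies inside the space of $d$-coboundaries of $K_n^{(d)}$. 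By Claim~\ref{cl:link}, this coboundary space is in bijection with the subcomplexes $G_{d-1}\subseteq K_{n-1}^{(d-1)}$ on $V\setminus\{v\}$ for any fixed $v$, and therefore has $\Z_2$-dimension $\binom{n-1}{d}=O(n^d)$. Hence $\trk(M)=O(n^d)$.

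Plugging this into Theorem~\ref{thm:karger-spielman}, together with $\log m=\log\binom{n}{d+1}=O(d\log n)=O(\log n)$ for fixed $d$, yields
\[
|\supp(\lambda')| \;=\; O\!\left(\frac{n^d\,\log n}{\epsilon^2}\right),
\]
which is exactly the claimed bound. Efficiency of the construction is immediate from the corresponding statement in Theorem~\ref{thm:karger-spielman}: the procedure needs access only to $\supp(\lambda)$ and to the hypercut-indexed columns of $M$ that actually occur in the given representation of $\v$, and it returns $\lambda'$ with $\supp(\lambda')\subseteq\supp(\lambda)$, so no explicit enumeration of the (potentially $2^{\Theta(n^2)}$-many, by Theorem~\ref{th:2-link}) hypercuts is ever required.
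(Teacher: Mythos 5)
Your proof is correct, and it reaches the same numerical bound $\trk(M)\le\binom{n-1}{d}$ as the paper, but by a genuinely different route. The paper argues \emph{combinatorially about the rows}: taking a lower-triangular minor $Q$ with rows $\sigma_1,\dots,\sigma_N$ and columns $C_1,\dots,C_N$ (so $\sigma_i\in C_i$ and $\sigma_i\notin C_j$ for $j>i$), it shows that $\{\sigma_i\}$ is acyclic by noting that any $d$-cycle $Z\subseteq\{\sigma_i\}$ with maximal index $r$ would intersect $C_r$ in only $\sigma_r$, contradicting the evenness of $|Z\cap C_r|$ from Claim~\ref{cl:cut-cyc}; hence $N$ is at most the size of a $d$-hypertree, $\binom{n-1}{d}$. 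You argue \emph{algebraically about the columns}: a Boolean lower-triangular minor with unit diagonal is nonsingular over $\Z_2$, so $\trk(M)\le\rank_{\Z_2}(M)$, and since every column is the indicator of a $d$-hypercut, hence a $d$-coboundary (Theorem~\ref{cl:cut-boundary}), the column space lies in the $\Z_2$-space of coboundaries, which by Claim~\ref{cl:link} has dimension $\binom{n-1}{d}$. The two are morally dual --- acyclicity on the simplex side, coboundary-space dimension on the hypercut side, and both count the rank of the underlying simplicial matroid --- but your argument sidesteps the circuit/cocircuit intersection parity and is somewhat shorter; the paper's version, on the other hand, generalizes verbatim to arbitrary matroids with rows indexed by elements and columns by cocircuits (as the paper later observes in its remark on co-circuits in matroids), a generalization that is less transparent from the $\Z_2$-rank viewpoint alone.
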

\begin{proof} 
Let $M$ be a ${n \choose d+1} \times |{\mathcal C}|$ Boolean matrix whose rows 
are indexed by $d$-simplices, the columns are indexed by $d$-hypercuts, and 
$M(\sigma,C)=1$ if $\sigma$ belongs to the cut $C$ and $0$ otherwise. 
Observe that $M\lambda$'s correspond to $\ell_1$ $d$-volumes on $K_n^{(d)}$, 
and $|\supp(\lambda)|$ is an upper bound on the cut-dimension of the respective
$d$-volume. Thus, Theorem~\ref{thm:karger-spielman} applies, yielding an upper
bound of $O(\,\trk(M) \cdot d\log n\,/\,\epsilon^2)$ on the cut dimension.
It remains to upper-bound $\trk(M)$. It turns out be at most ${n-1} \choose {d}$.

Indeed, let $Q$ be a square $N\times N$ lower triangular nonsingular minor of $M$.  
Let the rows be indexed by $\{\sigma_i\}_{i=1}^N$, and the columns be
indexed by $\{C_i\}_{i=1}^N$ in this order. It means, in particular,
that $\sigma_i \in C_i$, but $\sigma_i \notin C_j$ for $j>i$.
We claim that the set of $d$-simplices $\{\sigma_i|~i=1, \ldots ,N \}$ does
not contain $d$-cycles. Indeed, assume by contradiction that it does
contain a cycle $Z$, and $r$ be the largest index such that $\sigma_r
\in Z$. Consider the corresponding $d$-cut $C_r$. Since $\sigma_r \in
Z\cap C_r$, by Claim~\ref{cl:cut-cyc}, $C_r$ must contain another
$d$-simplex from $Z$, contrary to the fact $\sigma_i \notin C_r$ for
every $i < r$. 

Thus, $\{\sigma_i|~ i=1, \ldots, N\}$ is acyclic, and $N$ is
bounded by the size of the maximum acyclic subcomplex, i.e., $d$-tree,
which is ${n-1} \choose {d}$.
\end{proof}

The special case of $d=1$ is precisely the much studied problem of
dimension reduction for $\ell_1$-metrics. While the elegant lower bounds 
of~\cite{brikman-charikar,lee-naor} show that one may at best hope for polynomial
(and not logarithmic) dimension reduction, the best known upper 
bound of~\cite{schechtman87} asserts that $c_\epsilon n\log n$ dimensions 
suffice for $1+\epsilon$ distortion. Theorem~\ref{thm:main_reduction} yields the same 
upper bound, however it strengthens~\cite{schechtman87} by claiming it for 
cut-dimension, which is larger than the usual geometric dimension
of the host $\ell_1$-space. Further improvement is provided by using a different method.
\begin{theorem}\label{thm:reduction-spielman}
Let $0 < \epsilon < 1$ and let $d$ be an $\ell_1$-metric on $n$
points. Then, there exists (and is explicitly constructible) an $\ell_1$- metric $d'$ 
such that $\dist(d,d') \leq {{1  + \epsilon} \over {1  - \epsilon}}$, while 
the cut-dimension of $\d'$ is at most $O(n/\epsilon^2)$.
\end{theorem}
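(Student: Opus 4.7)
The plan is to apply the stronger (square-root rank) half of Theorem~\ref{thm:karger-spielman} to the natural ``edge versus cut'' incidence matrix, and to observe that this matrix admits a signing of rank at most $n$, giving cut-dimension $O(n/\epsilon^2)$ at once.

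Concretely, index the rows of a Boolean matrix $M$ by the $\binom{n}{2}$ pairs $(u,v)$ of distinct points in $V$, and the columns by the $2^{n-1}-1$ nontrivial cuts $S\subsetneq V$, with $M((u,v),S)=1$ iff $|S\cap\{u,v\}|=1$. Then any $\ell_1$-metric $d$ on $V$ can be written as $d=M\lambda$ for some nonnegative $\lambda$, and the cut-dimension of $d$ equals (the minimum over such representations of) $|\supp(\lambda)|$. Applying Theorem~\ref{thm:karger-spielman} immediately yields a nonnegative $\lambda'$ with $M\lambda'$ approximating $M\lambda$ entrywise within $1\pm\epsilon$ and $|\supp(\lambda')|=O(\srk(M)/\epsilon^2)$; the resulting $\ell_1$-metric $d'=M\lambda'$ then satisfies $\dist(d,d')\le(1+\epsilon)/(1-\epsilon)$.

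The core step is therefore to show $\srk(M)\le n$. For this I propose the signing
\[
Q_{(u,v),S}\;=\;\chi_S(u)-\chi_S(v)\,\in\,\{-1,0,+1\}.
\]
Since $\chi_S(u)-\chi_S(v)$ is nonzero precisely when $S$ separates $u$ from $v$, and in that case squares to $1$, we have $Q_{ij}^2=M_{ij}$ as required. Moreover $Q$ factors as $Q=A\,B$, where $A$ is the $\binom{n}{2}\times n$ matrix with row $(u,v)$ equal to $e_u-e_v$, and $B$ is the $n\times|\mathcal{C}|$ matrix whose column $S$ is the characteristic vector $\chi_S\in\{0,1\}^V$. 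Hence $\rank_\R(Q)\le n$, so $\srk(M)\le n$, and the theorem follows.

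The only mildly subtle point, and what I expect to be the main thing to verify carefully, is the ``efficient constructibility'' clause: the column set of $M$ is exponential, so one must check that the sparsifier from Theorem~\ref{thm:karger-spielman} consumes only the columns appearing in the input representation of $d$ (as noted in the remark following that theorem), and outputs $\lambda'$ with $\supp(\lambda')\subseteq\supp(\lambda)$. Given an explicit $\ell_1$-representation of $d$ as input, this yields an explicit construction of $d'$ with cut-dimension $O(n/\epsilon^2)$, which both improves Schechtman's $O(n\log n)$ and strengthens it by producing a sparsifier in the more restrictive cut-dimension rather than the ambient $\ell_1$-dimension.
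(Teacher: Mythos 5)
Your proposal is correct and follows essentially the same route as the paper: apply the $\srk$ half of Theorem~\ref{thm:karger-spielman} to the edge-vs-cut incidence matrix, and exhibit an explicit signing of rank $\le n$. Indeed, your signing $Q_{(u,v),S}=\chi_S(u)-\chi_S(v)=(e_u-e_v)\cdot\chi_S$ is exactly the paper's $(BX)^T$ up to the trivial rescaling $2\chi_S-1$ versus $\chi_S$ and $\pm\tfrac12$ versus $\pm1$.
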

\begin{proof} Let $M$ be the ${n \choose 2} \times \C$ Boolean matrix 
as in the proof of Theorem~\ref{thm:main_reduction} with $d=1$.
We claim that $\srk(M)$ is at most $n$.
This, in view of Theorem~\ref{thm:karger-spielman}, yields the desired bound. 

Let $B$ be an $|\C| \times n$ matrix whose rows are indexed by cuts,
and columns by vertices. For a cut $C=E(A,\overline{A})$, let  $B(C,v)=1$ if 
$v \in A$, and $-1$ otherwise. Let $X$ be a $n \times {n \choose 2}$ matrix with rows
indexed by $V$ and columns indexed by arbitrarily directed
edges. Let $X(v,e)=0.5$ if $v$ is the source of $e$,  $X(v,e)=-0.5$ if $v$ is the sink
of $e$, and $X(v,e)=0$ otherwise. Observe that $(BX)^T=\pm M$, and $\rank(BX) \leq n$. 
\end{proof}

Interestingly, $M$ has a full rank, as follows from Claim~\ref{cl:cutcone},
and thus $M$ is an example of a Boolean matrix with $\srk(M)$ roughly the square root 
of its rank. Note that by a standard 
tensor product argument, $\srk(M)$ 
can never be smaller than that.

One may wonder how tight is the bound of Theorem~\ref{thm:reduction-spielman}. 
As we shall see, in terms of the dependence in $n$ it is best possible.
\begin{theorem}
\label{th:lower}
Let $d_{n+1}$ be the shortest path metric of the unweighted path $P_{n+1}$, i.e., 
$d_{n+1}(i,j) = |i-j|$. This is certainly an $\ell_1$ metric. However, any 
metric $d' = \sum_{C\in \mathcal{C}'} \lambda_C \cdot \delta_C$ where~
$|\mathcal{C}'| \leq n/t$ ~distorts $\;d\;$ by at least ~$t$.
\end{theorem}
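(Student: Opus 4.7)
The plan is to combine one local upper bound with one global lower bound on the cut weights $\{\lambda_C\}_{C \in \mathcal{C}'}$, and conclude by a single product inequality. Let $\alpha = \max_{i,j} d'(i,j)/d(i,j)$ and $\beta = \max_{i,j} d(i,j)/d'(i,j)$, so that $\dist(d,d') = \alpha\beta$; the goal is to show $\alpha\beta \ge t$.

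For the local bound, I would first observe that every non-trivial cut $C$ on the vertex set $\{0,1,\dots,n\}$ must separate at least one consecutive pair $(l,l+1)$ of the path: walking from one side of $C$ to the other along the path, some edge must be crossed. At such an edge, the identity $d'(l,l+1) = \sum_{C' \in \mathcal{C}'} \lambda_{C'}\,\delta_{C'}(l,l+1)$ immediately yields $\lambda_C \le d'(l,l+1)$, and combined with $d'(l,l+1) \le \alpha$ this gives the uniform bound $\lambda_C \le \alpha$ for every $C \in \mathcal{C}'$.

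For the global bound, I would use the endpoint distance. Since $\delta_C(0,n) \in \{0,1\}$ for every cut, $d'(0,n) \le \sum_{C} \lambda_C$; combined with $d'(0,n) \ge n/\beta$ this forces $\sum_C \lambda_C \ge n/\beta$. Putting the two bounds together with $|\mathcal{C}'| \le n/t$ gives
$n/\beta \;\le\; \sum_C \lambda_C \;\le\; |\mathcal{C}'|\cdot \alpha \;\le\; (n/t)\,\alpha,$
which rearranges to $\alpha\beta \ge t$, as required.

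No serious obstacle stands in the way; the only minor verification is that every non-trivial cut on $\{0,\dots,n\}$ separates some path edge, and this is immediate from the connectedness of the path. The whole argument is really three lines once one notices that the right strategy is to restrict to a single edge (giving $\lambda_C \le \alpha$) and then blow up to the full path (giving $\sum_C \lambda_C \ge n/\beta$).
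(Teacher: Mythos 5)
Your proof is correct and follows essentially the same route as the paper: both arguments hinge on observing that every cut must separate some adjacent pair of path vertices (giving a per-cut upper bound on $\lambda_C$) and then comparing $d'(0,n) \le \sum_C \lambda_C$ with the true endpoint distance $n$. The only cosmetic difference is that the paper normalizes w.l.o.g.\ so that $d$ dominates $d'$ (making the per-cut bound simply $\lambda_C \le 1$), whereas you carry the two one-sided distortion factors $\alpha$ and $\beta$ explicitly through the argument.
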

\begin{proof}
Since multiplicative distortion is not sensitive to scaling, we may assume
without loss of generality that $d$ dominates $d'$. This implies that
each $\lambda_C$ is at most 1, as $C$ must separate some pair of adjacent 
vertices $k-1,k$, and $1 \;=\; d(k-1,k) \;\geq\; d'(k-1,k) \geq \lambda_C$. 
But then all the distances in $d'$, and in particular $d'(1,n+1)$, are at most
$|\mathcal{C}'| = n/t$, and the statement follows.
\end{proof} 

Finally, our third dimension-reduction result is about geometrical $\ell_1$
$d$-volumes. Since for $d=1$ all hypercuts are geometrical, it is a nontrivial 
generalization of Theorem~\ref{thm:reduction-spielman}.
\begin{theorem}\label{thm:reduction-spielman+}
Let $\v$ be a geometric  $\ell_1$ $d$-volume on $n$ points, and 
let $0 < \epsilon < 1$ be a constant.
Then there exists (and is efficiently constructible) a geometric $\ell_1$ 
$d$-volume $\v'$ that distorts $\v$ by at most a multiplicative factor of 
${1+\epsilon}\over {1-\epsilon}$, and the cut-dimension of $\v'$ is at most 
$O(n^d /\epsilon^2)$, thus improving Theorem~\ref{thm:main_reduction} in this
important special case.
\end{theorem}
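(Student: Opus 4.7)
The plan is to set up the matrix framework exactly as in Theorem~\ref{thm:reduction-spielman} but with geometric hypercuts replacing ordinary cuts, and then exploit Theorem~\ref{th: NEG} to bound the square-root rank. Let $\mathcal{G}$ denote the family of geometric $d$-hypercuts in $K_n^{(d)}$, and let $M$ be the $\binom{n}{d+1} \times |\mathcal{G}|$ Boolean matrix whose rows are indexed by $d$-simplices, columns by $C \in \mathcal{G}$, with $M(\sigma,C) = 1$ iff $\sigma \in C$. A geometric $\ell_1$ $d$-volume is precisely $M\lambda$ for some nonnegative $\lambda$, and $|\supp(\lambda)|$ upper-bounds the cut-dimension. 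Hence by Theorem~\ref{thm:karger-spielman} it suffices to prove $\srk(M) = O(n^d)$.

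The key step is to exhibit a signing $Q$ of $M$ of rank $\leq \binom{n}{d}$. For each $C \in \mathcal{G}$, Theorem~\ref{th: NEG} guarantees that $\v_C^{(d)}$ is exact, i.e., there is a real $d$-coboundary $b_C \in \R^{\binom{n}{d+1}}$ with $|b_C| = \v_C^{(d)} = M(\cdot,C)$ coordinatewise. Being a $d$-coboundary means $b_C^T = x_C^T M_d$ for some $x_C \in \R^{\binom{n}{d}}$, where $M_d$ is the (real, oriented) incidence matrix from Section~\ref{sec:GEN}. Now assemble the matrix $Q$ whose column indexed by $C$ is $b_C$; then $Q_{\sigma,C} = \pm\sqrt{M_{\sigma,C}}$, so $Q$ is an admissible signing of $M$. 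Every column of $Q$ lies in the row space of $M_d$, which has dimension at most $\binom{n}{d}$, so $\rank(Q) \leq \binom{n}{d}$ and therefore $\srk(M) \leq \binom{n}{d} = O(n^d)$.

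Plugging this into Theorem~\ref{thm:karger-spielman} produces a nonnegative $\lambda'$ with $|\supp(\lambda')| = O(n^d/\epsilon^2)$ such that $M\lambda'$ approximates $M\lambda$ pointwise within a factor of $1\pm\epsilon$; moreover $\supp(\lambda') \subseteq \supp(\lambda) \subseteq \mathcal{G}$, so $\v' = M\lambda'$ is again a geometric $\ell_1$ $d$-volume. The resulting multiplicative distortion between $\v$ and $\v'$ is at most $(1+\epsilon)/(1-\epsilon)$, giving the claimed bound.

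The only substantive point is the square-root rank estimate; once Theorem~\ref{th: NEG} supplies the signed coboundary representatives $b_C$, the rank bound is immediate from the dimension of the coboundary space. The combinatorial hypercut analog (Theorem~\ref{thm:main_reduction}) had to use the weaker triangular rank bound precisely because combinatorial hypercuts need not correspond to real coboundaries; the geometric hypothesis is exactly what removes that obstruction and replaces the $\log n$ factor with the sharper $O(n^d/\epsilon^2)$ bound.
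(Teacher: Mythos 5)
Your proof is correct and takes essentially the same approach as the paper: set up the simplex–versus–geometric-hypercut incidence matrix, use Theorem~\ref{th: NEG} to realize each column as the absolute value of a real $d$-coboundary, and conclude $\srk(M) = O(n^d)$ via the dimension of the coboundary space before invoking Theorem~\ref{thm:karger-spielman}. The only cosmetic difference is that the paper records the sharper value $\rank(M_d) = \binom{n-1}{d}$ where you use the looser $\binom{n}{d}$; both give $O(n^d)$.
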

\begin{proof} 
Consider the ${n \choose {d+1}} \times |{\mathcal C}|$ Boolean matrix as in the 
proof of Theorem~\ref{thm:main_reduction}, only this time ${\mathcal C}$ is the
family of all {\em geometrical} hypercuts. Call this matrix $P$. 
Since by Theorem~\ref{th: NEG} a geometrical $d$-hypercut volume is a real 
$d$-coboundary of $K_n^{(d)}$, we conclude that for every $C \in {\mathcal C}$ there 
exists $x_C \in \R^{n \choose d}$ such that the $C$-column of $P$ is equal to 
$~\pm x_C^T M_d$, where $M_d$ is as in the definition of the real $d$-coboundary.
Forming a matrix $X$ from vectors $\{x_C\}_{C \in {\mathcal C}}$, we conclude that 
$P = \pm X^T M_d$. Hence, $\srk(P) \;\leq\; \rank(M_d) \;=\; {{n-1} \choose d}$. 
Thus, by Theorem~\ref{thm:karger-spielman} we obtain an upper bound of 
$O(n^d/\epsilon^2)$ on the cut-dimension of the approximating geometrical 
$\ell_1$ $d$-volumes.
\end{proof}

%
%
%
\subsection{Some Remarks and Applications}
\label{VTIssues}
\subsubsection{High Dimensional Sparsifiers and Approximating Forms.}~ 
One of the main results of~\cite{spielman-sparsifier} claims that every (nonnegatively) 
weighted graph $G$ has a $(1 \pm \epsilon)$-sparsifier $G'$ of size $O(n/\epsilon^2)$.
That is, for every $x\in \R^n$, the two forms 
$F_G(x)\,=\,\sum_{\{i,j\}\in E(G)}w_{ij} (x_i - x_j)^2$ and 
$F_{G'}(x)\,=\,\sum_{\{i,j\}\in E(G')}w_{ij}^{'} (x_i - x_j)^2$ differ by at most 
$(1 \pm \epsilon)$ multiplicative factor, where $E(G') \subseteq E(G)$ and 
$|E(G')| = O(n/\epsilon^2)$. The authors of~\cite{spielman-sparsifier} further
argue that such sparsifiers of the complete graph $K_n$ have many common properties
with (almost optimal) regular expanders of degree $\approx 1/\epsilon^2$, and in fact should 
be treated as such, despite the weights and the irregular degrees.

Using a convexity argument, one can re-define sparsifiers as above in terms of metrics spaces: 
$G'$ is a sparsifier of $G$ as above iff the two forms $F_G(d)\,=\,\sum_{\{i,j\}\in E(G)}w_{ij} d(i,j)$ 
and $F_{G'}(d)\,=\,\sum_{\{i,j\}\in E(G')}w_{ij}^{'} d(i,j)$ are $(1 \pm \epsilon)$-close
for every negative type distance $d$ on $V(G)$. This simple observation already has interesting
consequences. E.g., it implies that in order to $(1+\epsilon)$ approximate the average distance of 
a metric of negative type, it suffices to query $O(n/\epsilon^2)$ values (according to the suitable 
$w'$), and thus can be done in sublinear time. This somewhat surprising corollary was established 
earlier for Euclidean metrics (a special case of negative type metrics) by using a 
different argument in~\cite{BGS}, in turn improving upon an earlier result of P.\,Indyk. 

The general framework of Section~\ref{sec:GEN} together with 
original argument of~\cite{spielman-sparsifier} allow to extend the above results to 
higher dimensions. 

\begin{theorem}
\label{th:sublin}
For every weighted simplicial complex $K$ of dimension $d$ there exists
a sparsifier $K'$ such that the two $d$-forms 
$F_K(v^{(d)})\,=\,\sum_{\sigma \in K} w_K(\sigma) v^{(d)}(\sigma)$ and
$F_{K'}(v^{(d)})\,=\,\sum_{\sigma \in K'} w_{K'}(\sigma)  v^{(d)}(\sigma)$
differ by at most $(1 \pm \epsilon)$ multiplicative factor on any
function $v^{(d)}$ of negative type, and $|K'| = O(n^d/\epsilon^2)$. 
I.e., $K'$ has a constant average degree (measured with respect to
$(d-1)$-simplices).
\end{theorem}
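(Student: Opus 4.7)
{\bf (Sketch / Plan)}
My plan is to reduce the statement to the classical PSD quadratic-form sparsification of Spielman--Srivastava, via the description of negative type $d$-forms in terms of real $d$-coboundaries given in Section~\ref{sec:GEN}. The form $F_K(v^{(d)})$ is linear in $v^{(d)}$, so it suffices to preserve it on a generating set of the cone of negative type functions. By definition, every such $v^{(d)}$ is a finite sum of entrywise squares of real $d$-coboundaries $B = x^T M_d$, so preservation reduces to ensuring that, for every $x\in \R^{{n \choose d}}$,
\[
\sum_{\sigma \in K} w_K(\sigma)\,(x^T M_d e_\sigma)^2 \;\approx_{1\pm\epsilon}\;
\sum_{\sigma \in K'} w_{K'}(\sigma)\,(x^T M_d e_\sigma)^2\,.
\]

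Writing $m_\sigma$ for the column of $M_d$ indexed by $\sigma$, the left-hand side equals $x^T A x$ with $A = \sum_{\sigma \in K} w_K(\sigma) \, m_\sigma m_\sigma^T$, and similarly for the right-hand side. Thus I would look for nonnegative weights $w_{K'}$, supported on few simplices $K' \subseteq K$, such that
\[
(1-\epsilon)\,A \;\preceq\; \sum_{\sigma \in K'} w_{K'}(\sigma)\, m_\sigma m_\sigma^T \;\preceq\; (1+\epsilon)\,A\,.
\]
This is exactly the setting of the Spielman--Srivastava spectral sparsification theorem applied to a PSD decomposition into rank-one terms (the quadratic-form variant of the technology behind Theorem~\ref{thm:karger-spielman}). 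It yields such a $w_{K'}$ with $|\supp(w_{K'})| = O(\rank(A)/\epsilon^2)$.

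To bound the rank, note $\rank(A) \leq \rank(M_d)$, and the column space of $M_d$ is spanned by any $d$-hypertree; hence $\rank(M_d) \leq {n-1 \choose d} = O(n^d)$. This gives $|K'| = O(n^d/\epsilon^2)$, as desired. Translating back, for every negative type $v^{(d)}$ one obtains $F_K(v^{(d)}) \approx_{1\pm\epsilon} F_{K'}(v^{(d)})$ by summing (with nonnegative coefficients) over the squared-coboundary decomposition of $v^{(d)}$, and the inequality is preserved termwise.

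The main point requiring care, and the step I would write out most carefully, is the reduction from ``preservation on all negative type $v^{(d)}$'' to ``spectral approximation of the PSD matrix $A$''. The forward direction uses the squared-coboundary representation guaranteed by the definition of negative type; the fact that one only needs preservation on the rank-one generators $(M_d^T x)^{\circ 2}$ is what makes the reduction to spectral sparsification tight. Everything else (the rank-bound $\rank(M_d) \leq \binom{n-1}{d}$ from the hypertree discussion, and the off-the-shelf sparsification with rate $O(\rank/\epsilon^2)$) is readily available, and the combinatorial consequence---that $K'$ has constant average $(d-1)$-degree---follows since $|K^{(d)}_n|$ and ${n \choose d}$ differ by a factor of $\Theta(n/d)$ so $|K'|/{n \choose d} = O(1/\epsilon^2)$.
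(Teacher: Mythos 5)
Your argument is correct and is essentially the paper's own proof: you write the form on squared coboundaries as the quadratic form of the PSD matrix $A = \sum_{\sigma\in K} w_K(\sigma)\, m_\sigma m_\sigma^T = M_d W_K M_d^T$, apply the Batson--Spielman--Srivastava spectral sparsification (the paper's Theorem~\ref{th:spiel1}) to get support $O(\rank(M_d)/\epsilon^2)$, and bound $\rank(M_d)=\binom{n-1}{d}$ by the hypertree size. This matches the paper's proof step for step, including the reduction from general negative-type functions to the rank-one generators.
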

\begin{proof}
A proof based on Theorem~\ref{thm:karger-spielman} is quite natural here, but we
prefer the original argument of~\cite{spielman-sparsifier} on which the latter theorem 
is based. Keeping in mind that the functions of nonnegative type are nonnegative combinations of 
(entrywise) squares of real $d$-coboundaries, it suffices to establish the statement for squares of 
real $d$-coboundaries.

Recall that a real $d$-coboundary $B_x \in \R^{n \choose {d+1}}$ is defined by a vector 
$x\in \R^{n \choose d}$ by $B_x^T = x^T M_d$, where $M_d$ is the real incidence matrix as in 
Section~\ref{sec:GEN}. Thus, $F_K(B_x^2) \,=\, x^T\cdot(M_d W_K M_d^T)\cdot x$, where
$W_K$ is a diagonal ${n \choose {d+1}} \times {n \choose {d+1}}$ matrix indexed by $d$-simplices,
where $W_K(\sigma,\sigma) = w_K(\sigma)$. Applying Theorem~\ref{th:spiel1} to the matrix
$M_d W_K M_d^T = (M_d \sqrt{W_K}) \cdot  (\sqrt{W_K} M_d^T)$ we conclude that there is another
weighting $w'$ such that $|\supp(w')| = O(\rank(M_d)/\epsilon^2)$, and $x^T\cdot(M_d W_K M_d^T)\cdot x$
and $x^T\cdot(M_d W' M_d^T)\cdot x$ differ by at most $(1 \pm \epsilon)$ multiplicative factor.
Keeping in mind that $\rank(M_d) \;=\; {{n-1} \choose d}$, and defining $K'$ as the support of $w'$, 
we arrive at the desired conclusion.
\end{proof}

As a bonus we get a sublinear algorithm for approximating the
average value of functions of negative type, in particular the Euclidean $d$-volumes,
and the geometric $\ell_1$ $d$-volumes:
\begin{corollary}
\label{cor:sublin}
In order to $(1+\epsilon)$ approximate the average value of a function of negative type,
it suffices to query $O(n^d/\epsilon^2)$ predefined (and efficiently computable) 
$(d+1)$-tuples forming a high-dimensional sparsifier of an (constant) average 
degree $\approx~1/\epsilon^2$.
\end{corollary}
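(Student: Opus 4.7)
The plan is to instantiate Theorem~\ref{th:sublin} on the complete $d$-dimensional complex $K = K_n^{(d)}$ equipped with uniform weights $w_K(\sigma) = 1/{n \choose d+1}$. With this choice, $F_K(\v^{(d)}) = \tfrac{1}{{n \choose d+1}}\sum_{\sigma\in K_n^{(d)}} \v^{(d)}(\sigma)$ is precisely the average value of $\v^{(d)}$, which is the quantity to be approximated. Hence any additive/multiplicative guarantee on $F_{K'}$ versus $F_K$ translates directly into the same guarantee on the estimated average.

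Applying Theorem~\ref{th:sublin} produces a subcomplex $K'\subseteq K_n^{(d)}$ with $|K'| = O(n^d/\epsilon^2)$ together with nonnegative weights $w_{K'}$ so that $F_{K'}(\v^{(d)})$ approximates $F_K(\v^{(d)})$ up to a $(1\pm\epsilon)$ multiplicative factor for \emph{every} function $\v^{(d)}$ of negative type. The key point, which makes a sublinear query algorithm possible, is that $K'$ and $w_{K'}$ depend only on $n$, $d$ and $\epsilon$, and not on the particular $\v^{(d)}$ at hand. Thus the list of $(d+1)$-tuples and the associated weights can be precomputed once and for all (efficiently, by the constructive content of Theorem~\ref{th:sublin}). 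The estimator then queries $\v^{(d)}(\sigma)$ only for the $O(n^d/\epsilon^2)$ simplices $\sigma\in K'$, and outputs $F_{K'}(\v^{(d)}) = \sum_{\sigma\in K'} w_{K'}(\sigma)\,\v^{(d)}(\sigma)$. Specializing to Euclidean $d$-volumes and to geometric $\ell_1$ $d$-volumes is immediate since, by Theorem~\ref{th: NEG}, both classes lie in the negative type cone.

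For the constant average degree claim, I would simply double count incidences between $K'$ and the ${n\choose d}$ faces of dimension $d-1$. Each $\sigma\in K'$ contains $d+1$ such faces, so the average number of simplices of $K'$ incident to a fixed $(d-1)$-face equals
\[
\frac{(d+1)\,|K'|}{{n \choose d}} \;=\; O(1/\epsilon^2),
\]
i.e., constant in $n$, as asserted.

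There is no real obstacle here: the corollary is essentially a repackaging of Theorem~\ref{th:sublin} once one recognizes that averaging is the $F_K$ form of the uniformly weighted complete complex. The only conceptual point worth underlining is the universality of the sparsifier — the same $K'$ and $w_{K'}$ work simultaneously for \emph{all} negative type $\v^{(d)}$ — which is precisely what allows the queries to be made nonadaptively at the $O(n^d/\epsilon^2)$ predefined locations.
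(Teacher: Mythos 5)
Your proposal is correct and is exactly the intended argument: the corollary is the specialization of Theorem~\ref{th:sublin} to $K=K_n^{(d)}$ with uniform weights, so that $F_K$ becomes the average, and the universality of $K'$ over all negative type $\v^{(d)}$ is what permits nonadaptive queries at the $O(n^d/\epsilon^2)$ precomputed simplices. The average-degree computation is likewise the obvious double count and matches the paper's remark.
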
 
\subsubsection{Sparse Spanners.}~ It is well known that the average degree in a graph $H$ with 
$n$ vertices and girth $g$ is $n^{O\left( {1\over g} \right)}$. Since (see~\cite{dobkin}) 
the shortest-path metric $d_G$ of a weighted graph $G$ can be $(g-1)$-approximated 
by that of its subgraph $H$ of girth $g$, there exists a $g$-spanner of $G$ with
at most $n^{1 + O\left({1\over g}\right)}$ edges. The construction naturally carries on to volumes,
which brings us to a question: What is the maximal number of $d$-simplices in a simplicial 
complex $K$ on  $n$ vertices, such that the smallest $d$-cycle of $K$ is of size $\geq g$? 
The probabilistic construction of Lemma~\ref{lm:expander} (with small local amendments) 
shows that for $d=2$ there exists $K$ of average degree $O(\log n)$, and the smallest 
cycle of size $\tilde{\Omega}(n^{0.2})$. (By degree of a $1$-simplex $e$ we mean the number of 
$2$-simplices in $K$ that contain $e$.) 
Thus, the situation for $d=2$ significantly differs
from the graph theoretic case. 
It would be interesting to get tighter bounds for this problem.
See also~\cite{LuMe} for a somewhat related discussion.
\subsubsection{On ${\mbox{\bf c}}_{\mbox{\bf\small 1}}$(K).}~ Like in graphs, given a $d$-complex $K$ one may ask
what is the worst possible distortion of approximating $\v_K$, a lightest-cap volume
of $K$ (over all choices of nonnegative weights of its simplices), by an $\ell_1$ 
volume. This important numerical parameter is called (by analogy with graphs) $c_1(K)$. 
One of the most important open questions in the theory of finite metric spaces
is whether any graph $G$ lacking a fixed minor has a constant $c_1(G)$ 
(see e.g.,~\cite{GNRS} for a related discussion and partial results). 
It is natural to ask a similar question about $d$-complexes: what properties
of $K$ would imply a nontrivial upper bound on $c_1(K)$? The techniques 
of~\cite{GNRS} imply this: $c_1 (K) \leq 2^{\chi(K)}$, where $K$ 
(as usual) is assumed to have a complete $(d-1)$ skeleton and
$\chi(K)$ is the Euler characteristic of $K$.
The construction proceeds via repeatedly picking a minimal cycle, and removing 
a random $d$-simplex in it with probability proportional to its volume. The lightest-cap
volume of the random (sub-)hypertree of $K$ obtained in this manner 
dominates $\v_K$, yet stretches it (in expectation) by only a constant factor.
%
\section{Abstract Sparsification Techniques}
As already indicated above, the general problem to be discussed is in this part of the paper 
is as follows.  Let $\cal C$ be a family of nonnegative vectors in  $\R^m$, and let $\cone({\cal C})$ 
be the convex cone spanned by it. The goal is, given a vector $w \in \cone({\cal C})$, to produce a 
small subset ${\cal C}' \subset {\cal C}$ and a vector $w' \in \cone({\cal C}')$ that (pointwise) 
approximates $w$ up to a multiplicative factor of $1 \pm \epsilon$. 

Using the matrix notation, let $M$ be a $m\times |{\cal C}|$ real nonnegative matrix. 
Then, given a nonnegative vector $\lambda \in \R^{|{\cal C}|}$, the goal is to produce a new 
$\lambda' \in \R^{|{\cal C}|}$ such that on one hand $w'=M\lambda'$ approximates  
$w=M\lambda$ up to a multiplicative factor of $1 \pm \epsilon$, and on the other hand
$\lambda'$ has small support. The columns of $M$ are the vectors of 
$\cal C$, and $\lambda,\lambda'$ are coefficients of nonnegative combinations 
of these vectors. For computational purposes, we assume that the input to the procedure 
is not $M$ and $\lambda$, but only the the nonzero values of $\lambda$, and the columns of $M$ 
corresponding to them. It will always hold that $\supp(\lambda') \subseteq  \supp(\lambda)$. 

We seek to single out the relevant parameters of the matrix $M$ such that $|\supp(\lambda')|$ 
as above can be upper-bounded in terms of these parameters alone, not depending on  $\lambda$. 
The problem appears to be of a fundamental nature, far transcending the particular context of
the previous sections (there are some additional examples at the end of this section). 
We initiate the study of this problem here, and produce two families of such parameters yielding the 
desired upper bounds. The first result is restricted to Boolean
matrices, the other is more general but  weaker (if one ignores a $\log m$ factor,
which in fact is not always ignorable).  Both results are almost tight in the special case, 
sufficient, but apparently not necessary. Importantly, they are inherently limited to $0 < \epsilon < 1$.
The situation for  large $\epsilon$'s appears to be radically different, and calls for further study.


In what follows, it will be convenient and combinatorially justified
to interpret $M$ as $M_{|{\cal F}| \times |{\cal C}|}$, an
'incidence' matrix of a quantitative relation between the members of a
family $\cal F$ (indexing the rows) and the family $\cal C$ (indexing
the columns). In this interpretation, $\lambda=\{\lambda_c \}_{c \in
  {\cal C}}$ is a weighting of $\cal C$ that induces a weighting
$w=\{w_f \}_{f \in {\cal F}}$ on $\cal F$ by assigning $w(f) =
\sum_{c\in {\cal C}} M(f,c) \lambda_c$. I.e., $w(f)$ is the weighted
sum of all the columns incident to $f$. For example, in
Theorem~\ref{thm:main_reduction}, ${\cal F}$ stands for the family of
$d$-simplices, and $\cal C$ stands for the family of $d$-hypercuts.
The relation represented by the corresponding $M$ is the membership:
$M(\sigma,C) = 0$ if $\sigma \in C$, and $M(\sigma,C) = 0$ otherwise.
%
\subsection{The First Technique}
We restrict our attention to Boolean matrices $M$.
The key parameter of $M$ will be its {\em triangular rank}. Recall that
the triangular rank of $M$, $\,\trk(M)$, is the size of 
the largest nonsingular lower-triangular square minor of $M$. 
The rows and the columns of the minor may appear in order different from 
that of $M$.
\begin{theorem}
\label{thm:karger-new}
Let $M$ be a 0/1 matrix as before, $\lambda$ a nonnegative weighting of $\cal C$, and 
$w=M\lambda$. Then, for any $0<\epsilon<0$, there exists (and is efficiently constructible) 
another nonnegative weighting  $\alpha$ of $\cal C$, such that the support 
of $\alpha$  is of size at most $O(\,\trk(M) \cdot \log m|\,/\,\epsilon^2)$, and
$w'=M\alpha$ (entrywise) distorts $w$ by at most $(1 \pm \epsilon)$ multiplicative factor.
\end{theorem}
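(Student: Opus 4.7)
The approach I would take mirrors the classical edge-sampling paradigm of Karger--Bencz\'ur: assign each column an ``importance'', keep each column independently with probability proportional to its importance, and rescale the survivors so that the expected weighted sum is preserved. Quality will then follow from Chernoff plus a union bound over rows, and sparsity will reduce to a clean combinatorial inequality involving $\trk(M)$.

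Concretely, for each column $c\in\mathcal{C}$ let
\[
\rho_c \;=\; \lambda_c \cdot \max_{f\,:\,M(f,c)=1}\frac{1}{w_f}\,,
\qquad p_c \;=\; \min\!\left\{1,\; \frac{A\,\rho_c\log m}{\epsilon^2}\right\}
\]
for a sufficiently large absolute constant $A$. With probability $p_c$ set $\alpha_c = \lambda_c/p_c$, otherwise $\alpha_c = 0$, so that $\mathbb{E}[(M\alpha)_f] = w_f$. For any row $f$, $(M\alpha)_f$ is a sum of independent nonnegative variables, each bounded (whenever $p_c<1$ and $M(f,c)=1$) by $\lambda_c/p_c \leq w_f\epsilon^2/(A\log m)$, since the definition of $\rho_c$ forces $\rho_c \geq \lambda_c/w_f$. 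A standard Chernoff bound then yields $\Pr[|(M\alpha)_f - w_f|>\epsilon w_f]\leq m^{-\Omega(A)}$, and a union bound over the $m$ rows handles all of them simultaneously once $A$ is a large enough constant.

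The main content of the argument is the inequality $\sum_c \rho_c \leq \trk(M)$, which I would prove by a sorting-and-charging argument. Order the rows so that $w_{f_1}\leq w_{f_2}\leq \cdots$; for each column $c$ let $i(c) = \min\{i : M(f_i,c)=1\}$, so that $\rho_c = \lambda_c/w_{f_{i(c)}}$. Let $I = \{i(c) : c\in\mathcal{C}\}$, and for each $i\in I$ pick an arbitrary witness $c_i$ with $i(c_i) = i$. By the minimality in the definition of $i(c_i)$, $M(f_j,c_i)=0$ for all $j<i$; hence the submatrix of $M$ indexed (in the natural order) by rows $\{f_i\}_{i\in I}$ and columns $\{c_i\}_{i\in I}$ is triangular with $1$'s on the diagonal, so $|I|\leq \trk(M)$. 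Grouping columns by the value of $i(c)$ and using the trivial bound $\sum_{c\,:\,i(c)=i}\lambda_c \leq \sum_{c\,:\,M(f_i,c)=1}\lambda_c = w_{f_i}$ gives
\[
\sum_{c}\rho_c \;=\; \sum_{i\in I}\frac{1}{w_{f_i}}\sum_{c\,:\,i(c)=i}\lambda_c \;\leq\; |I| \;\leq\; \trk(M).
\]

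The expected support size is therefore $\sum_c p_c = O(\trk(M)\log m/\epsilon^2)$, and a Chernoff bound on this sum of Bernoullis keeps $|\supp(\alpha)|$ within a constant factor of its expectation with high probability; combining this event with the $m$-row approximation event finishes the proof. I do not anticipate a deep obstacle in this plan; the only delicate points are cosmetic---the ``efficiently constructible'' clause, which I would address either by a mild derandomization via pessimistic estimators or by simply noting that the randomized procedure succeeds with probability bounded away from $0$ and can be rerun, and the observation that the $\rho_c$'s, though defined in terms of $\lambda$ and $w$, are computable directly from the input format (nonzero entries of $\lambda$ together with the corresponding columns of $M$).
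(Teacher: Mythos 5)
Your proof is correct and follows the same Karger--Bencz\'ur importance-sampling paradigm as the paper's, but with two genuine simplifications that make it a different (and arguably cleaner) route. First, where the paper defines column ``strengths'' through an iterative phase process (Definition~\ref{def:strength}) and then shows $\sum_c 1/s(c)\le N\le\trk(M)$ via Lemma~\ref{lem:11}, you use the static, closed-form importance $\rho_c=\lambda_c/\min_{f:M(f,c)=1}w_f$ and prove $\sum_c\rho_c\le\trk(M)$ in one shot by sorting the rows and charging each column to the smallest-weight row it hits; both arguments end up exhibiting a triangular minor as witness, but yours avoids the bookkeeping of phases entirely (and also handles weighted $\lambda$ without the infinite-duplication device the paper invokes). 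Second, where the paper performs an Abel transform of $w'$ into level sums $z_i$ (Claim~\ref{cl:d'}), applies Chernoff to each $z_i(f)$, and unions over both levels and rows, you apply Chernoff/Bernstein once per row; this works precisely because your choice of $\rho_c$ guarantees every \emph{random} term contributing to $(M\alpha)_f$ is bounded by $w_f\epsilon^2/(A\log m)$, so a single bounded-differences tail bound suffices (the terms with $p_c=1$ are deterministic and only help). This sidesteps the level decomposition and the extra $\log t$ absorbed by the paper's union bound. Both routes yield the same $O(\trk(M)\log m/\epsilon^2)$ support bound; yours is more elementary, while the paper's iterative strength is the direct analog of the Karger--Bencz\'ur edge strength and, as the authors note at the end of that subsection, is amenable to sharper, distribution-dependent refinements in special cases.
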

\begin{proof}
The method of proof is inspired by the method of Karger and Bencz{\'u}r 
from~\cite{karger-benczur}. 

The existence of $\alpha$ will be established using a probabilistic argument. 
We start with some preparatory observations and tools. Let $\Col_c$
be the column of $M$ indexed by $c \in {\cal C}$.
Making $\lambda_c$ copies of each column $\Col_c$, $c \in {\cal C}$,
we arrive at the new $M'$ with same triangular rank, and $w=M \overline{1}$,
i.e., $\lambda$ becomes an all-1 vector, and $w$ is the sum of columns. 
We assume that w.l.o.g., this  is the original input. 
(Of course, $\lambda_c$ may not be integer, but we take for the sake of the proof infinitesimal units, 
and use the scalability of the problem. The algorithmic issues will be addressed later.). 
In addition, w.l.o.g., we assume that $M$ does not have all-0 columns.

As we are about to sample the columns of $M$, notice that some columns are 
more essential for $w$ than the others, and thus the sampling is necessarily 
non-uniform. For example, if a certain column $\Col_c$ is the only column of 
$M$ such that $\Col_c(f) > 0$ for some $f \in {\cal F}$, and $w_c > 0$,  
then $\Col_c$ must necessarily be chosen. More generally, if the row of some
$f \in {\cal F}$ has small support, the columns corresponding to this support
should be sampled with relatively hight probability. This motivates the following definition, 
analogous to the {\em strength of an edge} in~\cite{karger-benczur}:

\begin{definition}[The strength of a column]
\label{def:strength}
Let $M$, $w=M\overline{1}$, be as above. The function 
$s: \C \mapsto \N$ assigning to each column of $M$ a {\em strength} value,
is defined by the following iterative process:
\\ 
Let  $M_\ast = M$; $w_\ast = w$,  and  $m = \min^{+}_{f \in \F} w(f)$,
where $\min^+$ is the smallest strictly positive entry of $w$.

While $M$ is not all 0, repeat: 
\begin{enumerate}
\item  While there is $f \in \F$ such that $0 < w_\ast(f)\leq m$, do: \\
Assign $s(c)=m$ for every $c$ in the support of the $f$-row,
$\Row_f$ of $M_\ast$.  For every such $c$ set
 $\Col_c$ to $0$ to get a new $M_\ast$, and update 
$w_\ast$ to the new sum of columns of $M_\ast$.
\item If $w_\ast$ is not identically 0, set $m = \min^{+}_{f \in \F} w_\ast(f)$, 
and return to (1).
\end{enumerate}
\end{definition}
Observe that while the order in which $f$'s are chosen in (1) is 
somewhat arbitrary, at each invocation of (2) the set of columns 
set to 0, and the new value of $w_\ast$ are uniquely defined, and do not 
depend on the order of choices made in (1). Thus the strength
function is well defined. Observe also that identical columns
necessarily get identical strengths. Finally, observe that the value of
the strength never decreases along the run of the process above.
\begin{definition}\label{def:strength2}
Let $\mathcal{C}$ be the column indices as above, and let 
$s_1 < s_2 < \ldots < s_t$ be the sequence of corresponding strengths
in the increasing order. Define 
$\mathcal{C}_i = \{c \in \mathcal{C}\;|~ s(c) \geq s_i \}$, and $w_i = \sum_{c\in \mathcal{C}_i} \Col_c$,  ~$i=1,2,\ldots,t$. Observe
that $\mathcal{C}_i$ is monotone decreasing with respect to containment,
and that all the non-zero entries of $w_i$ are at least $s_i$.
\end{definition}
Call a single run of the while loop of (1) a {\em phase}. During a phase, 
one sets to $0$ precisely all the (still surviving) columns $\Col_c$ 
such that $c \in \supp(\Row_f)$ in $M_\ast$, causing $w_\ast(f)$ to become $0$. 
All these columns get the same strength $m$. 
The following Lemma establishes some important properties of the strengths.

\begin{lemma}
\label{lem:11}
$\mbox{}$ \\
 \begin{enumerate}
\vspace*{-2.5mm}\item  Let $s_k$ be the maximal strength of a column
         $\Col_c$ where $c \in \supp(f) \subseteq {\mathcal  C}$ in the original $M$.
         Then,  $|\supp(\Row_f)| \geq s_k$. In particular, this implies that 
         $w(f) \geq s_k$. 
         Observe, however, that by maximality of $s_k$, during any single 
         phase no more than $s_k$ $c$'s from $\supp(\Row_f)$ are set to $0$. 
 \vspace*{-2mm}\item  $\sum_{c \in \mathcal{C}} \frac{1}{s(c)} \;\leq\; N$, 
        where $N$ is the total number of phases. 
        This parameter is crucial for the forthcoming analysis.
 \vspace*{-2mm}\item  The total number of phases $N$ is at most $\trk(M)$.
\end{enumerate}
\end{lemma}
\begin{proof}
The first statement directly follows from the definition of the
strengths. That is, let $c \in \supp(\Row_f)$ for which
$s(c)=s_k$. Then when $s(c)$ is set, $w_\ast(f) = s_k \leq
w(f)$. Hence, since each $c' \in \supp(\Row_f)$ contributes exactly
$1$ to $w_{\ast}(f)$ the claim follows.

For the second statement, consider a contribution of a phase of (1) to the left 
hand side of the inequality. Each column set to 0 contributes ${1\over s_i}$, 
where $s_i$ is the current $m$ (constant during the phase), while the number of 
such columns is $w_\ast(f)$ at the beginning of the phase, which is at most $m$. 
Thus, the contribution of a phase is at most $s_i \cdot {1\over
  {s_i}}\;\leq\;1$, which implies the claim.
\ignore{
For the third statement, observe that every phase terminates in a contraction of an edge,
i.e., in the beginning of the phase $d_\ast(x,y) > 0$, and after its termination $d_\ast(x,y)=0$.
Since after $n-1$ such contractions the space collapses to a single point, there are at
most $n-1$ phases. 
}

For the third statement, for each phase $i$, let $f_i \in \F$ be the coordinate 
that initiated the phase. Mark a $c_i \in \C$ such that $\Col_{c_i}$ was set to $0$
during the phase. Consider the corresponding $N\times N$ minor of $M$. Clearly,
$M(f_i,c_i)=1$ for all $i$. Since during the $i$'th phase
all the surviving columns $\Col_c$ such that $M(f_i,c)=1$ are removed, 
it follows that for every $k > i$ and $c_k$ that survives after the $i$'th
phase, $M(f_i,c_k)=0$. Thus, the $N$-minor of $M$ on rows $(f_1,f_2,\ldots,f_N)$
and columns $(c_1,c_2,\ldots,c_N)$ is a nonsingular
lower triangular matrix.
\end{proof}

We presently define the sampling procedure to be used in the proof of 
Theorem~\ref{thm:karger-new}:
\begin{definition}
\label{def:2}
Let $\rho > 1$ be a parameter to be defined later. 
For each $c \in \mathcal{C}$, define $p_c \;=\; \min\{\frac{\rho}{s(C)},1\}$,
and let $X_c$ be a random variable (indicating whether the column $c$ is chosen) defined by 
$\P(X_c=1) = p_c$ and \mbox{$\P(X_c=0)=1-p_c$}. Choosing the columns randomly and independently 
according to the specified probabilities, we obtain a random subset of
columns $\mathcal{C}'=\{c| ~X(c)=1 \}$. Finally, setting $\alpha_c =  1/p_c$, 
we define a random  vector $w' = \sum_{c \in \mathcal{C}'} \alpha_c \Col_c$.
\end{definition} 

The shall use here the following version of the Chernoff Bound 
(see Theorems A.1.12, A.1.13 in~\cite{alon-spencer}).
\vspace{-0.3cm} \begin{theorem}\label{thm:alon-spencer}\cite{alon-spencer}
Let $X_1, \ldots X_n$ be independent Poisson trials such that $\P(X_i=1)=p_i$. 
Let $S = \sum X_i$ and $\nu = \sum p_i$. Then  for any $0 <\beta < 1$,~
$\P[S \not\in (1 \pm \beta) \cdot \nu] \;\leq\;   2e^{-\beta^2 \nu/3}$.
\end{theorem}
We start with showing that almost surely 
the size of $\mathcal{C}'$ is $O(\rho N)$. 
\begin{lemma}\label{lem:2}
With probability $1-o(1)$ the size of $\mathcal{C}'$ is at most
$2\rho \cdot N \;<\; 2\rho n$.
\end{lemma}
\begin{proof}
Since $|\mathcal{C}'| = \sum_{C \in \mathcal{C}} X_C$, 
items (2) and (3) of 
Lemma~\ref{lem:11} imply that 
\[
E[|\mathcal{C}']|  = \sum_{c \in
\mathcal{C}} p_c ~\leq~ \sum_{c \in \mathcal{C}} \rho/s(C) ~<~ \rho \cdot N\,.
\] 
Since the $X_c$ are independent, Theorem~\ref{thm:alon-spencer}
applies, implying that $\P(|\mathcal{C}'| > 2\rho N)\;\leq\; 2e^{-2/3 \, \rho N}$.
\end{proof}

Next, observe that the expectation of $w'$ is $w$:
\begin{claim}\label{cl:expect}
$\E(w') = w$.
\end{claim}
\begin{proof}
$\E(w') = \E[ \sum_{c \in \mathcal{C}'} \alpha_c \cdot \Col_c] 
= \E[ \sum_{c \in \mathcal{C}} X_c \cdot  \alpha_c \cdot \Col_c] = 
\sum_{c \in \mathcal{C}} (p_c \cdot \alpha_c) \Col_c = 
\sum_{c \in \mathcal{C}} \Col_C = w\,.$
\end{proof}

The next goal is to show that $w'$ is tightly concentrated around its 
mean. Since the parameters $p_c$ and $\alpha_c$ of the column $c$ depend 
solely on its strength $s(c)$, the sequence of strengths $s_1 < s_2 < \ldots < s_t$
defines the sequence of probabilities  $p_1 \geq p_2 \geq \ldots \geq p_t$, and the sequence 
of weights $\alpha_1 \leq \alpha_2 \leq  \ldots \leq  \alpha_t$. The following
claim is easily verified; essentially it is an Abel's summation transform:
\begin{claim}
\label{cl:d'}

\[
w' ~=~ \sum_{\mathcal{C}'} \alpha_c \cdot \Col_c ~=~ 
\sum_{i=1}^t  \Delta_i \cdot \sum_{c \in \mathcal{c}_i} X(c) \cdot \Col_c\;,~~ 
\mbox{ where } ~~\Delta_i  = \alpha_i -  \alpha_{i-1}, ~~\mbox{ and } ~~\Delta_1 = \alpha_1.
\]
\end{claim}
 Let $z_i = \sum_{c \in \mathcal{C}_i} X_c \cdot \Col_c$.  
The key point in the forthcoming lemma is that the random component of $z_i(f)$ is either empty, 
or has expectation $\geq \rho$, making the Chernoff bound of Theorem~\ref{thm:alon-spencer}
applicable. Choosing $\rho$ appropriately, and using the union bound over all $i,f$, one arrives
at the desired conclusion.  
\begin{lemma}
\label{lem:conc}
Set $\rho = \frac{3}{\epsilon^2}\left( \ln (2|{\cal F}|) + \ln t +
  k \right)$, where $k>0$ is any real number, and $t$ is the number
of distinct strengths $s_i$.  
Then, $\P[\,w' \not\in (1 \pm \epsilon) w\,] \;\leq\;   e^{-k}$.
\end{lemma}
\begin{proof}
For any fixed $f \in \F$, $z_i(f) = \sum_{c \in \mathcal{C}_i} X_c \Col_c(f)$, 
a sum of independent Boolean variables. The columns with $s(c) \leq \rho$ 
deterministically contribute 1 to this sum, as in this case $p_c = 1$. 
If there are no other columns, we are done. Else, let $s_k>\rho$ be 
the maximal strength of the column in $\supp(\Row_i(f))$. By Lemma~\ref{lem:11}(1)
there must be at least $s_k$ columns of such strength in this collection, and therefore
$\E[z_i(f)) \;\geq\; s_k p_k \;\geq\; s_k \cdot {\rho \over s_k} \;=\; \rho$. 
Thus, by Theorem~\ref{thm:alon-spencer},
\begin{equation}\label{eq:toremark}
\P[\,z_i(f) \not\in (1\pm \epsilon) \cdot \E[z_i(f)]\,] ~\leq~ 
2e^{{{-\epsilon^2}\over 3}\cdot \E(z_i(f))} ~\leq~
2e^{{{-\epsilon^2}\over 3}\cdot \rho}\,.
\end{equation}
Substituting the proposed value for $\rho$, we conclude that the above
probability is at most $|\F|^{-1} \cdot N^{-1} \cdot e^{-k}$.
Taking the union bound over all $i=1,2,\ldots,t$ and $f\in\F$, we conclude that the probability that there 
exist $i,f$ with with $\P[\,z_i(f) \not\in (1\pm \epsilon) \cdot \E(\mu_i(x,y))\,]$
is at most $e^{-k}$. Keeping in mind that $w' = \sum_{i=1}^t  \Delta_i \cdot z_i$, the statement follows. 
\end{proof} 

Choosing $k$ large enough constant, and keeping in mind that $t \leq N
\leq \trk(M)$, Lemma~\ref{lem:2} implies that  $\mathcal{C}'$
is almost surely of size at most $2\rho N =
O(\trk(M)\log(|\F|)/\epsilon^2$.  On the other hand, by
Lemma~\ref{lem:conc}, $w' = \sum_{c \in \mathcal{C}'} \alpha_c \cdot
\Col_c$ almost surely distorts $w$ by at most a
$(1+\epsilon)/(1-\epsilon)$ multiplicative factor.  This establishes
Theorem~\ref{thm:karger-new}.
\end{proof}

{\bf Algorithmic considerations:}
Recall that for simplicity of presentation, instead of working with a weighted set
$\C$, we have worked with unit-weighted multiset obtained by producing
$\lambda_C$ duplicates of each $C$.  Due to scalability, we could
assume that $\lambda_C$ is a huge integer, and the rounding issue does
not arise. While it indeed simplifies the presentation, this approach results
in a very inefficient randomized procedure for selecting the sparser
family $\C'$ of Theorem~\ref{thm:karger-new}. However, observe that the
duplicates of $C$ are sampled randomly and independently with the same
probability $p_C$, the resulting total weight of $C$ is distributed
according to a binomial distribution, and can be efficiently produced.
When weights are not integers, we may simulate the process by massive
scaling, which leads to sampling according to the Poison distribution with
parameter $\lambda_C$. A detailed discussion of this issue can be
found in~\cite{karger-benczur}, (see Section 2.4 and Theorem A.1
there). The resulting sparsification procedure can be implemented in
time $O(n^2 \cdot |\C|)$. 

To conclude the discussion of this section, let us remark that for large $\F$'s,
sometimes a better upper bound can be obtained, as in the original result of~\cite{karger-benczur},
by strengthening the Eq.~(\ref{eq:toremark}) in the proof of Lemma~\ref{lem:conc}. Instead
of using a uniform lower bound on the expected value of $z_i(f)$, one may sometimes rely on finer 
distributional properties of this random variable, and get significantly stronger results.
%
\subsubsection{The Second Technique}
Here $M$ does not have to be Boolean, just nonnegative. The key parameter of $M$ will 
be, as in Theorem~\ref{thm:karger-spielman}, the minimum possible rank
of (Hadamard) square root of $M$.
\begin{definition}\label{def:rigidity}
For $D \geq 1$,  define $\srk_D(M)$ as the minimum rank over all 
matrices $A$ such that for all $i,j$, it holds that $M_{ij}\,\leq\, A_{ij}^2 \,\leq\,D \cdot M_{ij}$.
Equivalently, $M \,\leq\, Y\circ Y \,\leq \,D \cdot M$, where $\circ$ stands for the Hadamard 
(i.e., entrywise) product of matrices. In particular, let $\srk(A)=\srk_1(A)$. 
\end{definition}
\begin{theorem}\label{thm:general_tech2}
Let $M$ be a matrix as before, $\lambda$ a nonnegative weighting of $\cal C$, and 
$w=M\lambda$. Then, for any $0<\epsilon<0$, there exists (and is efficiently constructible) 
another nonnegative weighting  $\lambda'$ of $\cal C$, such that the support 
of $\lambda'$  is of size at most $O(\,\srk_D(M) \,/\,\epsilon^2)$, and
$w'=M\alpha$ (entrywise) satisfies 
$(1-\epsilon)\cdot M\lambda \;\leq\; M\alpha  \;\leq\;  D\cdot(1+\epsilon)\cdot M\lambda\,$.
\end{theorem}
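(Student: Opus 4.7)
The plan is to reduce entrywise approximation of $w=M\lambda$ to spectral (PSD) sparsification of a low-rank positive semidefinite matrix, in the spirit of Spielman--Srivastava. Let $A$ be a witness attaining $\rank(A) = \srk_D(M) =: r$, so that $M_{ij}\le A_{ij}^{2}\le D\cdot M_{ij}$ for all $i,j$. Write $a_c\in \R^m$ for the $c$-th column of $A$, and form the positive semidefinite matrix
\[
L \;:=\; \sum_{c\in {\cal C}} \lambda_c\, a_c a_c^{T} \;=\; A\,D_\lambda\, A^{T},
\]
which has rank at most $r$ regardless of the (possibly huge) number of columns $|{\cal C}|$.

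Next, I would invoke a spectral sparsifier (Batson--Spielman--Srivastava, or the weighted matrix-Chernoff machinery underlying Theorem~\ref{thm:karger-spielman}) to produce nonnegative coefficients $\alpha=(\alpha_c)_{c\in{\cal C}}$ supported on at most $O(r/\epsilon^{2})$ columns and satisfying
\[
(1-\epsilon)L \;\preceq\; \sum_{c\in {\cal C}}\alpha_c\, a_c a_c^{T} \;\preceq\; (1+\epsilon)L.
\]
Testing both sides with the unit vector $e_i$ reduces this PSD inequality to the scalar entrywise inequality $(1-\epsilon)\sum_c \lambda_c A_{ic}^{2}\le \sum_c \alpha_c A_{ic}^{2}\le (1+\epsilon)\sum_c \lambda_c A_{ic}^{2}$ for every $i$. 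Setting $\lambda':=\alpha$ and bracketing via $M_{ic}\le A_{ic}^{2}\le D M_{ic}$ immediately yields the upper estimate $w'_i = \sum_c \alpha_c M_{ic} \le \sum_c \alpha_c A_{ic}^{2}\le (1+\epsilon)\sum_c \lambda_c A_{ic}^{2}\le D(1+\epsilon)w_i$. Efficient constructibility is inherited from the sparsifier algorithm, with sampling probabilities computed from the leverage scores $a_c^{T}L^{+}a_c$; since only the nonzero columns of $\lambda$ (and the corresponding columns of $M$) are ever presented, this stays tractable.

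The main obstacle is the asymmetric lower bound. The PSD sparsifier controls $\sum_c\beta_c A_{ic}^{2}$, whereas what we actually need is $\sum_c\beta_c M_{ic}$, and these two agree only up to a multiplicative factor of $D$. Plugging $A_{ic}^{2}\le D M_{ic}$ into the left-hand inequality naively yields only $w'_i\ge \tfrac{1-\epsilon}{D}w_i$, which is weaker by a factor of $D$ than the stated $(1-\epsilon)w_i$. To close this gap I would revisit the sampling stage and apply a coordinate-wise Bernstein-type concentration directly to $w'_i = \sum_c X_c(\lambda_c/p_c)M_{ic}$ (with $X_c$ the Bernoulli indicator of sampling column $c$): this random variable has expectation exactly $w_i$ (the estimator is unbiased with respect to $M$ itself, not to $A\circ A$), so only the variance and per-column range pick up the $D$-factor. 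Choosing $p_c$ proportional to the leverage score $\lambda_c\cdot a_c^{T}L^{+}a_c$ keeps $\sum_c p_c = O(r/\epsilon^{2})$, and a union bound over the $m$ coordinates delivers the tight entrywise bounds $(1-\epsilon)w_i \le w'_i \le D(1+\epsilon)w_i$. This hybrid argument -- low-rank structure for the support size, per-row scalar concentration for the sharper constant -- is the step I would spend the most care on, and it is also where the proof genuinely departs from Theorem~\ref{thm:karger-spielman}.
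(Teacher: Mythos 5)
Your reduction to spectral sparsification is precisely the paper's: take a rank-$r$ witness $A$ with $r=\srk_D(M)$ and $M\le A\circ A\le D\cdot M$, form the rank-$r$ PSD matrix $L=\sum_{c}\lambda_c\,a_c a_c^T$ (the paper writes the $r\times r$ transpose $B(\lambda)^T B(\lambda)$, but it is the same Gram object), invoke Theorem~\ref{th:spiel1} to obtain a reweighting supported on $O(r/\epsilon^2)$ columns, and test the resulting $(1\pm\epsilon)$ quadratic-form guarantee on $e_i$ (resp.\ on $x_f$) to convert it into an entrywise bound. For $D=1$ you and the paper are in lockstep.

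Where you part company is instructive. The paper proves only the $D=1$ case and declares the extension to general $D$ ``trivial''; you correctly observe that the obvious extension --- sparsify $M'=A\circ A$ and then sandwich $M\le M'\le D\cdot M$ --- produces only $\frac{1-\epsilon}{D}\,M\lambda\le M\alpha\le D(1+\epsilon)\,M\lambda$, a $D^2$ distortion, one $D$ worse than stated, and that rescaling $\alpha$ merely shifts the excess between the two sides. This is a real gap in the paper's write-up, not an artefact of your reading. Your proposed remedy, however, is not yet a proof: replacing the spectral argument by a per-row Bernstein bound makes the estimator unbiased with respect to $M$ (the right instinct), but the union bound over the $m$ rows costs a $\log m$ factor, putting you at $O(r\log m/\epsilon^2)$ support and erasing the advantage over Theorem~\ref{thm:karger-new}; and the deterministic BSS construction that gives $O(r/\epsilon^2)$ does not come with per-coordinate tail bounds that you could plug in row by row. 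Either a genuinely new argument is needed to recover $(1-\epsilon)$ on the left with $O(r/\epsilon^2)$ support, or the theorem's lower bound should read $(1-\epsilon)/D$, which is what both you and the paper actually establish.
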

Observe that $~{\srk_D(M)}\;\geq\; \trk(M)$ for any $D$.

The powerful technical tool we are going to employ, (implicitly) 
appears in its strongest form in recent important paper~\cite{spielman-sparsifier}:
\begin{theorem}\cite{spielman-sparsifier}
\label{th:spiel1}
Let $B_{m\times n}$ be a real valued matrix, and let $Q_{n\times n}$ be $Q=B^T B$.
Then, for every $\epsilon > 0$ there exists (and can be efficiently constructed)
a nonnegative diagonal matrix $A_{m \times m}$ with at most $O(\epsilon^{-2} n)$ 
(or even $O(\epsilon^{-2} \rank(Q)\,)$ positive entries, and with following property.
Let $\tilde{Q} = B^T A B$. Then, for {\em every} $x\in\R^n$
it holds:
\[
(1-\epsilon)\cdot x^T \tilde{Q} x ~\leq~ x^T Q x   ~\leq~   (1+\epsilon)\cdot x^T \tilde{Q} x~. 
\]
\end{theorem}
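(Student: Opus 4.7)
The plan is to reduce the statement to approximating a projection matrix in the Loewner order by a sparse nonnegative reweighting of rank-one outer products, and then invoke the deterministic ``twice-Ramanujan'' barrier construction of Batson--Spielman--Srivastava.

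First I would rewrite $Q = B^T B = \sum_{i=1}^m b_i b_i^T$, where $b_i$ denotes the $i$-th row of $B$ viewed as a column vector, so that picking a nonnegative diagonal $A = \mathrm{diag}(a_1,\ldots,a_m)$ means $\tilde{Q} = \sum_i a_i\, b_i b_i^T$. The desired two-sided quadratic-form bound is equivalent to the Loewner sandwich $(1-\epsilon) Q \preceq \tilde{Q} \preceq (1+\epsilon) Q$ on $\mathrm{range}(Q)$ (on $\ker Q$ the inequalities are vacuous). Setting $r = \rank(Q)$ and $v_i = Q^{+/2} b_i \in \R^r$ after identifying $\mathrm{range}(Q)$ with $\R^r$ via an orthonormal basis, one has $\sum_i v_i v_i^T = I_r$, so the task reduces to producing a nonnegative $a \in \R^m$ with $|\supp(a)| = O(r/\epsilon^2)$ such that $(1-\epsilon) I_r \preceq \sum_i a_i\, v_i v_i^T \preceq (1+\epsilon) I_r$.

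The core technical step is the deterministic isotropic sparsifier. I would build the sum greedily, maintaining at iteration $k$ a PSD matrix $A^{(k)} = \sum_{j \leq k} t_j\, v_{i_j} v_{i_j}^T$ together with two scalar barriers $u_k > \lambda_{\max}(A^{(k)})$ and $\ell_k < \lambda_{\min}(A^{(k)})$, and track the potentials
\[
\Phi^{u_k}(A^{(k)}) \;=\; \mathrm{tr}\bigl((u_k I - A^{(k)})^{-1}\bigr), \qquad
\Phi_{\ell_k}(A^{(k)}) \;=\; \mathrm{tr}\bigl((A^{(k)} - \ell_k I)^{-1}\bigr),
\]
each of which blows up when an eigenvalue of $A^{(k)}$ approaches the forbidden side of its barrier. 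At iteration $k+1$, advance the barriers by fixed amounts $\delta_u,\delta_\ell$ and add $t\, v_i v_i^T$ for some $i$ and $t>0$ without letting either potential increase. By Sherman--Morrison, each potential change is an explicit rational function of $t$ and of the quadratic forms $v_i^T (u_k I - A^{(k)})^{-2} v_i$ and its lower analogue; averaging these expressions over $i$ against the identity $\sum_i v_i v_i^T = I_r$ guarantees some feasible pair $(i,t)$ always exists. After $k = O(r/\epsilon^2)$ iterations the ratio $u_k/\ell_k$ is at most $(1+\epsilon)/(1-\epsilon)$, and a global rescaling of the $t_j$ produces the promised sandwich.

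The hard part is precisely this barrier step: $\delta_u$ and $\delta_\ell$ must be tight enough to drive $u_k/\ell_k$ down to $1+O(\epsilon)$ in $O(r/\epsilon^2)$ rounds, yet loose enough that some admissible $(i,t)$ is always available; this balance is where both the $1/\epsilon^2$ rate and the linear-in-$r$ count originate. Once the isotropic sandwich is obtained, conjugating by $Q^{1/2}$ transports it to $(1-\epsilon) Q \preceq \tilde Q \preceq (1+\epsilon) Q$ on $\mathrm{range}(Q)$, which is the claimed inequality between $x^T Q x$ and $x^T \tilde Q x$ for every $x$. Algorithmically each iteration inverts an $r \times r$ matrix and scans the $m$ candidate vectors, so the whole construction runs in time polynomial in $m$ and $r$. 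A shorter but weaker route is to sample the $v_i v_i^T$ with probabilities proportional to the leverage scores $\|v_i\|^2$ and apply Rudelson's inequality, yielding $O(r \log r / \epsilon^2)$; removing the logarithmic factor is exactly what forces the barrier argument.
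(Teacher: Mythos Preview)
Your outline is correct and is precisely the Batson--Spielman--Srivastava barrier argument. Note, however, that the paper does not give its own proof of this theorem: it is stated as a cited result from~\cite{spielman-sparsifier}, with only the remark that although the original is phrased for graph Laplacians, the same proof goes through for arbitrary positive semidefinite $Q$ after a minor adjustment involving $\rank(Q)$. Your reduction to the isotropic case via $Q^{+/2}$ on $\mathrm{range}(Q)\cong\R^r$ is exactly that adjustment, and the rest of your sketch (the two barrier potentials, the Sherman--Morrison update, the averaging argument guaranteeing a feasible vector each round, and the final count of $O(r/\epsilon^2)$ iterations) faithfully reproduces the construction the paper is invoking. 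One cosmetic point: you write the target sandwich as $(1-\epsilon)Q\preceq\tilde Q\preceq(1+\epsilon)Q$, whereas the theorem as stated sandwiches $Q$ between multiples of $\tilde Q$; the two are equivalent up to a harmless rescaling of $\epsilon$.
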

Actually,~\cite{spielman-sparsifier} is solely interested in the Laplacian matrices of positively weighted
graphs, and the above theorem is stated there only for such $Q$'s. However, a close 
examination of the proof reveals that with a minor change (related to rank of $Q$) it works also 
for general positive semidefinite symmetric $Q$'s. 
\begin{proof}
Clearly, it suffices to the prove the theorem for $D=1$.
The extension for larger $D$'s is obtained in a trivial manner.

Let $k=\srk(M)$. Our aim is to attach to each $f \in \F$ a vector $x_f \in \R^k$, and to each 
$c \in \C$ a vector $b_c \in \R^k$ such that $x_f \cdot b_c = \pm M(f,c)^{1\over 2}$. 
Let $B(\lambda)$ be a $|\C| \times k$ matrix whose rows are $\sqrt{\lambda_c} b_c$. Then, for
each $f$, it holds that 
\[
x_f^T B(\lambda)^T B(\lambda) x_f ~=~ \sum_{c\in \C} \lambda_c \left(  M(f,c)^{1\over 2}
  \right)^2 ~=~ \sum_{c\in \C} \lambda_c M(f,c) ~=~ w(f)\,.
\]
Applying  Theorem~\ref{th:spiel1} to the matrix
$B(\lambda)^T B(\lambda)$, we get the desired $k/{\epsilon^2}$ sparsification.

It remains to construct the required $x_f$'s and $b_c$'s. Equivalently, we need to
construct the matrices $B_{|{\cal C}|\times k}$ and $X_{k\times |{\cal F}|}$ such that 
$(BX)^T \circ (BX)^T ~=~ M$. However, it is given that there exists (and can be 
efficiently found) a matrix $Y_{|{\cal F}| \times |{\cal C}|}$ of rank $k$ such that 
$Y\circ Y = M$.  Using a standard  linear algebra argument, we (efficiently) decompose $Y^T$ 
as $Y^T=BX$, where $B$ and $X$ are matrices as required above.    
\end{proof}
\subsection{Additional Remarks and Examples}
\label{sec:last}
\ignore{
\label{thm:karger-new} (techniques)
\begin{theorem}\label{thm:main_reduction}
Let $\v$ be an $\ell_1$ $d$-volume on $n$ points, and 
let $0 < \epsilon < 1$ be a constant.
Then there exists (and is efficiently constructible) an $\ell_1$ $d$-volume 
$\v'$ that distorts $\v$ by at most a multiplicative factor of 
${1+\epsilon}\over {1-\epsilon}$, and the cut-dimension of $\v'$ is at most 
$O(n^d \log n)/\epsilon^2)$, thus improving the trivial $O(n^{d+1})$.
\end{theorem}
\begin{proof} 
Let $M$ be a ${n \choose d+1} \times |{\mathcal C}|$ Boolean matrix whose rows 
are indexed by $d$-simplices, the columns are indexed by $d$-hypercuts, and 
$M(\sigma,C)=1$ if $\sigma$ belongs to the cut $C$ and $0$ otherwise. 
Observe that $M\lambda$'s correspond to $\ell_1$ $d$-volumes on $K_n^{(d)}$, 
and $|\supp(\lambda)|$ is an upper bound on the cut-dimension of the respective
$d$-volume. Thus, Theorem~\ref{thm:karger-spielman} applies, yielding an upper
bound of $O(\,\trk(M) \cdot d\log n\,/\,\epsilon^2)$ on the cut dimension.
It remains to upper-bound $\trk(M)$. It turns out be at most ${n-1} \choose {d}$.
}
\subsubsection{\bf co-Circuits in  Matroids.}
The argument used for bounding the triangular rank of $M$ employed in the proof of
Theorem~\ref{thm:main_reduction} actually applies in the much more general case 
when the rows of $M$ are indexed by the elements of a matroid $\cal M$, 
and the columns of $M$ are indexed by co-circuits (or circuits) of $\cal M$'s. 
One needs only to observe that the intersection of a circuit and a co-circuit in
$\cal M$ cannot be a single element. The conclusion is that in the general case, 
$\trk(M)$ is at most the size of a maximum independent set in ${\cal M}$.   
\subsubsection{\bf Splitting Set Systems.} The Boolean matrix $M$ used in the proof of Theorem~\ref{thm:karger-new} 
for $d=1$ (i.e., the inclusion matrix of edges vs.~edge cuts) could be described somewhat
differently using vertices instead of edges. Then, the rows correspond to subsets $e$ of $V$ of size 2,
the columns correspond to nontrivial subsets $A$ of $V$, and $M(e,A) = 1$ iff $|e \cap A|=1$.
This situation is a special case of what we call a {\em splitting set system}, and the claim
that $\trk(M) = |V|-1$ turns out to be a special case of a more general theorem.  

Let $\F, \C \subseteq 2^V$ be any two families of subsets of $V$. For
every $f \in \F$ and $c \in \C$, say that $c$ splits $f$ if $c\cap f \,\neq\,\emptyset$ and
$\bar{c} \cap f\,\neq\, \emptyset$. Define the incidence matrix
$M$ by $M(f,c) = 1$ if $c$ splits $f$, and $M(f,c) = 0$ otherwise. 
\begin{theorem}
\label{th:split}
Let $M$ be the incidence matrix as above. Then, $\trk(M) \leq |V|-1$.
\end{theorem}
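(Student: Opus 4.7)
The plan is an induction on $|V|$, showing that any lower-triangular $N \times N$ minor of the splitting incidence matrix on ground set $V$ with strictly positive diagonal forces $|V| \geq N+1$. Let $(f_1,c_1),\ldots,(f_N,c_N)$ enumerate the rows and columns of such a minor in the order witnessing lower-triangularity, so that $c_i$ splits $f_i$ for every $i$ and $c_j$ does \emph{not} split $f_i$ whenever $j>i$.

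The key move is to peel off the last column. Since $c_N$ does not split $f_i$ for any $i<N$, each such $f_i$ lies entirely in $c_N$ or entirely in $\overline{c_N}$. This partitions $\{1,\ldots,N-1\}$ into $I_1=\{i:f_i\subseteq c_N\}$ and $I_0=\{i:f_i\subseteq\overline{c_N}\}$. I would then restrict the arrangement: on the ground set $c_N$ take the pairs $\{(f_i,\,c_i\cap c_N):i\in I_1\}$, and on the ground set $\overline{c_N}$ take $\{(f_i,\,c_i\cap \overline{c_N}):i\in I_0\}$.

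The routine verification is that each of these is itself a lower-triangular splitting arrangement in the inherited order. Indeed, if $f\subseteq c_N$ then for any set $c\subseteq V$ one has ``$c\cap c_N$ splits $f$ inside $c_N$'' iff ``$c$ splits $f$'', because $c\cap c_N\cap f = c\cap f$ and $(c_N\setminus c)\cap f = f\setminus c$. Thus the diagonal condition ($c_i$ splits $f_i$) and the upper-triangular condition ($c_j$ does not split $f_i$ for $j>i$) both transfer to the restricted system on $c_N$, and symmetrically on $\overline{c_N}$.

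Applying the inductive hypothesis to each part yields $|I_1|\le |c_N|-1$ and $|I_0|\le |\overline{c_N}|-1$; note that $c_N,\overline{c_N}$ are both nonempty because $c_N$ splits $f_N$, which covers the edge cases when $I_0$ or $I_1$ is empty. Adding the two bounds, $N-1=|I_1|+|I_0|\le |c_N|+|\overline{c_N}|-2=|V|-2$, so $N\le |V|-1$. The base case $N=1$ is immediate since $c_1$ splits $f_1$ forces $|V|\ge 2$. No genuine obstacle arises; the only care needed is the bookkeeping that restriction to the two sides of $c_N$ preserves lower-triangularity, which the observation above handles cleanly.
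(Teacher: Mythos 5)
Your proof is correct, but it is organized differently from the paper's. The paper gives a direct, non-inductive argument: it considers the sequence of partitions $\pi_i$ of $V$ induced by $\{c_i,\ldots,c_N\}$ and observes that $\pi_i$ strictly refines $\pi_{i+1}$ (since $f_i$ sits inside a single atom of $\pi_{i+1}$ yet is split by $c_i$), so that $\pi_1$ has at least $N+1$ atoms and hence $|V|\geq N+1$. You instead induct on $|V|$, peeling off the last column $c_N$, partitioning the index set $\{1,\ldots,N-1\}$ according to which side of $c_N$ each $f_i$ lies on, and recursing on the two restricted arrangements; the bound follows by summing $|I_1|\leq|c_N|-1$ and $|I_0|\leq|\overline{c_N}|-1$. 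The underlying observation is the same in both — lower-triangularity forces $f_i$ to lie in a single block of every later $c_j$ — but the bookkeeping differs: the paper's chain-of-refinements argument directly counts atoms of the joint partition, while your binary recursion counts them via the tree of splits. The paper's version is a bit shorter and avoids having to verify that restriction preserves the lower-triangular structure; your inductive version is more modular and makes the "restriction is again a splitting arrangement" step explicit, which some readers may find cleaner. Both are valid and neither has a gap; your handling of the edge cases ($I_0$ or $I_1$ empty, nonemptiness of $c_N$ and $\overline{c_N}$, base case $N=1$) is adequate.
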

\begin{proof}
Let $Q$ be a square $N\times N$ lower triangular nonsingular minor of $M$.  
Let the rows be indexed by $\{f_i\}_{i=1}^N$, and the columns be
indexed by $\{c_i\}_{i=1}^N$ in this order. It means, in particular,
that $c_i$ always splits $f_i$, but $c_j$ with $j>i$, does not split
$f_i$. Consider the partition of $V$, the underlying set induced by 
the family $\{c_{i+1},\ldots, c_N \}$. Since no $c_j$ in it splits $f_i$, 
$f_i$ must be contained in a single atom of the partition. Since $c_i$
splits $f_i$, the partition induced by $\{c_i, c_{i+1},\ldots, c_N \}$
must strictly refine the previous partition. Therefore, the number of
atoms in the partition induced by $\{c_1, c_2,\ldots, c_N \}$ is at
least $N+1$. But then $N+1 \leq |V|$, and the statement follows.
\end{proof}
\subsubsection{\bf Random Boolean Matrices.} Let $M$ be a random $m\times n$ Boolean matrix, $m\geq n$
Then, by a standard probabilistic method argument, $\trk(M) = \theta(\min\{\log(m),n\})$ 
almost surely. The trivial details are omitted.
\subsubsection{\bf An Application to Geometric Discrepancy.}
We conclude the paper with an example of an application of the sparsification 
methods of this section to a natural purely geometric question with a discrepancy 
flavor.

The general problem is as follows. Assume we have a family $\cal F$ of bodies
in $\R^d$. The goal is to produce a small {\em sampling set} $P \subset \R^d$, i.e., 
a set of points with associated positive weights, such that for every body $B \in {\cal F}$
it holds that $\sum_{p\in P\cap B} w_p = (1 \pm \epsilon) \v^{(d)}(B)$, where $\v^{(d)}$
is the Euclidean volume. Unlike the usual discrepancy setting, bodies of
small volume are as important as bodies of large volume.
\begin{theorem}\label{thm:discrepancy}
Let $S$ be a set of $n$ points in the plane, and let $\cal F$ be
family of all closed non self-intersecting polygons with vertices
in $S$. Then, there exists a sampling set $P$ for $\cal F$
as above of size $O(n^2\,/\,\epsilon^2)$. Moreover, such $P$ can efficiently 
constructed in time polynomial in $n$.
\end{theorem}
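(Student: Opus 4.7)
The plan is to reduce the problem to approximating the areas of triangles with vertices in $S$, and then to invoke the geometric $\ell_1$-volume dimension reduction of Theorem~\ref{thm:reduction-spielman+}.

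First I would reduce to triangles via triangulation. Any closed simple polygon $B$ with vertices in $S$ admits a triangulation into interior-disjoint triangles with vertices in $S$, so $\v(B) = \sum_{T \in \mathrm{Triang}(B)} \v(T)$ is a positive sum. Hence, if a weighted sample $P$ satisfies $\sum_{p \in P \cap T} w_p = (1\pm\epsilon)\,\v(T)$ for every triangle $T \in \binom{S}{3}$, then additivity (together with a generic placement of $P$, so that no sampled point lies on a triangulation diagonal) yields $\sum_{p \in P \cap B} w_p = (1\pm\epsilon)\,\v(B)$ for every polygon $B \in \mathcal{F}$.

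Next I would express the Euclidean 2-volume on $\binom{S}{3}$ as a discrete nonnegative combination of geometric hypercut volumes. By Theorem~\ref{cl:Eu-gcuts}, $\v(T) = \int_{\R^2} \mathbf{1}[p \in T]\,dp$, an integral of the geometric cut volumes $\v^{(2)}_p$ indexed by points $p \in \R^2$. To obtain a finite input for sparsification, consider the arrangement of the $O(n^2)$ lines spanned by pairs of points of $S$. This arrangement has $O(n^4)$ cells, and all points in a single cell $Q$ induce the same geometric hypercut of $\binom{S}{3}$. Thus $\v = \sum_Q \mathrm{area}(Q)\cdot \v^{(2)}_{p_Q}$ is a nonnegative combination of $O(n^4)$ geometric hypercut volumes, where $p_Q$ is an arbitrary interior point of $Q$.

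I would then invoke Theorem~\ref{thm:reduction-spielman+} with $d=2$ to obtain a reweighting supported on only $O(n^2/\epsilon^2)$ cells, giving a new combination $\sum_i w_i\cdot \v^{(2)}_{p_i}$ that $(1\pm\epsilon)$-approximates $\v$ on every triangle with vertices in $S$. Setting $P = \{p_i\}$ with weights $\{w_i\}$ then yields the required sampling set of size $O(n^2/\epsilon^2)$; the construction is efficient because both the arrangement computation and the sparsification algorithm underlying Theorem~\ref{thm:reduction-spielman+} (ultimately based on Theorem~\ref{th:spiel1}) run in time polynomial in $n$. The main obstacle is the passage from the continuous representation $\v = \int \v^{(2)}_p\,dp$ to a finite matrix-form input for sparsification, which the line-arrangement discretization handles cleanly; a minor subtlety is the generic-position requirement for the triangle-to-polygon step, which is settled by an arbitrarily small perturbation of the representatives $p_i$.
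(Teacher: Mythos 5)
Your proposal is correct and matches the paper's argument essentially step for step: reduce to triangles via triangulation, discretize the integral representation $\v = \int \v^{(2)}_p\,dp$ via the $O(n^4)$-cell line arrangement to get a finite nonnegative combination of geometric hypercut volumes, and then apply Theorem~\ref{thm:reduction-spielman+} with $d=2$ to sparsify down to $O(n^2/\epsilon^2)$ weighted cells. The only (harmless) difference is expository order — you introduce the arrangement discretization up front, whereas the paper first states the reduction abstractly and brings in the arrangement only when addressing polynomial-time construction — and you are slightly more explicit about the generic-position issue on cell boundaries and triangulation diagonals, which the paper dispatches with ``ignoring the boundaries.''
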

\begin{proof}
First, observer that it suffices to establish the theorem for the triangles with 
vertices in $S$, since all other polygons in $\cal F$ can be triangulated, and thus are
disjoint union of such triangles (ignoring the boundaries).  Treating
these triangles as 
a 2-dimensional realization of $K_n^{(2)}$, and associating with each point $p\in \R^2$
a geometrical 2-hypercut (as in the proof of Theorem~\ref{cl:Eu-gcuts}) we conclude that
the induced Euclidean volume on  $K_n^{(2)}$ is a geometrical $\ell_1$ volume. Thus,
by Theorem~\ref{thm:reduction-spielman+},  this 2-volume can be $(1\pm\epsilon)$ approximated
by a geometrical $\ell_1$ 2-volume of cut-dimension $O(n^2\,/\,\epsilon^2)$.  Moreover, since 
$\;\supp(\lambda') \subseteq  \supp(\lambda)$, the approximating $\ell_1$ 2-volume is
induced by a weighted sampling set of points $P$ of this size.

In order to produce $P$ in polynomial time, first compute the  $O(n^4)$ cells created the lines 
spanned by $S$. The initial sampling set $P_0$ will have a point $p$ in the interior of each such 
cell, with the associated weight $w_p$ being the area of the
cell. Clearly,  samples $P_0$ without errors, 
but it is too big. Next, apply the procedure underlying Theorem~\ref{thm:reduction-spielman+} to 
this input to obtain the required $P \subset P_0$.  In particular, this involves finding the representation
of each geometrical 2-hypercut corresponding to $p\in P_0$ as a real $2$-coboundary. I.e.,
we need to suitably assign each directed 1-simplex over $S$, $e=(s_1,s_2)$,  a real value $x_e$.  
The easiest way to do it is by setting  $x_e$ to be the angle between $s_1$ and $s_2$ with respect to 
$p$, in clockwise direction, normalized by ${1  \over {2\pi}}$.  

All this can obviously be done in polynomial time.
\end{proof}

Theorem~\ref{thm:discrepancy} generalizes to higher dimension without difficulty 
for $d$-simplices, and more generally, for triangulable polytopes over $S$. 
$\mbox{}$ \\ \\
{\bf ACKNOWLEDGMENTS}\\
We would like to thank for valuable comments and discussions to Vladimir Hinich, Gil Kalai, 
Nati Linial, Avner Magen, Jirka Matousek, Roy Meshulam, Tasos Sidiropoulos and Uli Wagner.

\newpage
\bibliographystyle{plain}


%

\end{document}